\newcommand{\footnoteocultar}[1]{}
\newcommand{\quitar}[1]{
}
\newcommand{\nuevo}[1]{
#1
}
\newcommand{\linero}[1]{
#1
}
\newcommand{\ojoborrar}[1]{
}
\newcommand{\imparocultar}[1]{
}
\newcommand{\corregido}{
}
\def\iets{IETs}
\def\iet{IET}
\def\O{\mathcal{O}}
\def\d{\mathcal{\delta}}
\def\angulon{\alpha_{2,4}(n)}
\newcommand{\bbb}{
b}
\newcommand{\sss}{
s}
\newcommand{\as}{\alpha_{\bbb,\sss+\bbb}}
\def\+{\oplus}
\def\-{\ominus}
\newcommand{\antin}{{|\pi|^{-1}(n)}}
\def\@seccntformat#1{\csname the#1\endcsname.\quad}
\newtheorem{theorem}{Theorem}
\newtheorem*{mteo}{Main Theorem}
\newtheorem{coro}[theorem]{Corollary}
\newtheorem{lemma}[theorem]{Lemma}
\newtheorem{proposition}[theorem]{Proposition}
\newtheorem{coromain}{Corollary}
\theoremstyle{definition}
\theoremstyle{remark}
\newtheorem{claim}[theorem]{Claim}
\newtheorem{remark}[theorem]{Remark}
\def\Bd{\mathop{\mathrm{Bd}}}
\def\Cl{\mathop{\mathrm{Cl}}}
\def\menos{\backslash}
\def\R{\mathbb{R}}
\def\N{\mathbb{N}}
\def\S{\mathbb{S}}
\def\R{\mathbb{R}}
\renewcommand{\ell}{l}
\def\intervalog{[0,l]}
\def\intervalogab{(0,l)}
\def\cunon{c_1(n)}
\def\cdosn{c_2(n)}
\def\ctresn{c_3(n)}
\def\ccuatron{c_4(n)}
\def\ccincon{c_5(n)}
\def\cseisn{c_6(n)}
\def\csieten{c_7(n)}
\def\cochon{c_8(n)}
\def\cnueven{c_9(n)}
\def\cdiezn{c_{10}(n)}
\def\cjotan{c_j(n)}
\def\celen{c_l(n)}
\def\rn{r_n}
\def\rnMuno{r_{n+1}}
\def\pn{p_n}
\def\pnMuno{p_{n+1}}
\def\c1nuno{c_1(n+1)}
\def\cunonmuno{c_1(n-1)}
\def\cdosnmuno{c_2(n-1)}
\def\ctresnmuno{c_3(n-1)}
\def\ccuatronmuno{c_4(n-1)}
\def\ccinconmuno{c_5(n-1)}
\def\cseisnmuno{c_6(n-1)}
\def\csietenmuno{c_7(n-1)}
\def\cochonmuno{c_8(n-1)}
\def\cnuevenmuno{c_9(n-1)}
\def\cdieznmuno{c_{10}(n-1)}
\def\ctresnmuno{c_3(n-1)}
\def\cunonMuno{c_1(n+1)}
\def\cdosnMuno{c_2(n+1)}
\def\ctresnMuno{c_3(n+1)}
\def\ccuatronMuno{c_4(n+1)}
\def\ccinconMuno{c_5(n+1)}
\def\cseisnMuno{c_6(n+1)}
\def\defcdosnMuno{2c_1(n)+(2\rnMuno +2)c_2(n)+(\rnMuno +1)c_6(n)+(\rnMuno +2)c_7(n)+c_9(n)+(4\rnMuno +6)c_{10}(n)}
\def\defctresnMuno{2c_1(n)+2c_2(n)+2c_3(n)+c_5(n)+2c_6(n)+2c_7(n)+c_9(n)+10c_{10}(n)}
\def\defccuatronMuno{2c_1(n)+2c_2(n)+2c_3(n)+(\rnMuno +2)c_4(n)+2c_5(n)+2c_6(n)+2c_7(n)+c_9(n)+(\rnMuno +13)c_{10}(n)}
\def\defccinconMuno{2c_1(n)+2c_2(n)+2c_3(n)+(\rnMuno +1)c_4(n)+2c_5(n)+2c_6(n)+2c_7(n)+c_9(n)+(\rnMuno +12)c_{10}(n)}
\def\defcseisnMuno{2c_1(n)+2c_2(n)+c_3(n)+2c_6(n)+2c_7(n)+c_9(n)+8c_{10}(n)}
\newcommand{\ssstresn}{\sss_3(n)}
\newcommand{\bbbtresn}{\bbb_3(n)}
\newcommand{\sssseisn}{\sss_6(n)}
\newcommand{\bbbseisn}{\bbb_6(n)}
\newcommand{\bbbin}{\bbb_i(n)}
\newcommand{\sssin}{\sss_i(n)}
\newcommand{\cinMuno}{c_i(n+1)}
\title{Minimal non uniquely ergodic flipped \iets}
\author{Antonio Linero Bas}
\address{Departamento de Matem\'aticas\\
Universidad de Murcia (Campus de Espinardo)\\
30100-Espinardo-Murcia (Spain)
} 
\email{lineroba@um.es}
\author{
Gabriel Soler L\'{o}pez
}
\address{
Departamento de Matem\'atica Aplicada y Estad\'\i{}stica\\
Paseo Alfonso XIII, 52\\
30203-Cartagena (Spain)}
\email{gabriel.soler@upct.es}
\renewcommand{\arccos}{ {\mathrm{arccos}}\;}
\begin{document}

\begin{abstract}

In this paper we prove the existence of minimal non uniquely ergodic flipped \iets. In particular, we build explicitly minimal non uniquely ergodic $(10,k)$-\iets\, for any $1\leq k \leq 10$. This
answers an open question posed in  \cite[Remark 1]{gutierrez4b}. As a consequence, we also derive the existence of transitive non uniquely ergodic $(n,k)$-\iets, for any $n\geq 10$ and $1\leq k\leq n$ if $n$ is even, and $1\leq k\leq n-1$ if $n$ is odd. 

%
%
%

\end{abstract}

\subjclass[2010]{Primary 37E05. Secondary 37A05, 37B05, 28D05}

\keywords{Interval  exchange transformations, minimality, transitivity, Rauzy induction,  invariant measures, non uniquely ergodic, ergodicity, non-negative matrices}

\maketitle

%

\section{Introduction}

Interval exchange transformations, for short \iets, have generated a continuous interest from the first work of Keane \cite{keane} and have given a huge amount of literature. Its study has two remarkable branches: oriented \iets\ and 
flipped \iets\ with significant differences in the branches. A lot of efforts have been made to develop the theory of \iets\ without flips, see \cite{viana} for an exhaustive review of the orientable case. 
However, the flipped case has advanced more slowly and  it still remains questions to clarify which are known for the oriented case from the seventies. One of these questions is to prove the existence of 
non uniquely ergodic minimal \iets, see \cite[Remark 1]{gutierrez4b}. This paper is devoted to close this gap. In the oriented case the existence of non uniquely ergodic minimal \iets\ was first clarified in \cite{keynesnewton}.

For the sake of completeness we recall some definitions from \cite{linerosolerergodic}. 
Let $n\in\N=\{1,2,3,\dots\}$ be a natural and $l$ a real positive number, an  
\emph{$n$-\iet} is  an injective map $T:D\subset\intervalogab\to \intervalogab$
such that:
\begin{description}
\item[(i)] $D$ is the union of $n$ pairwise disjoint open intervals, $D=\bigcup_{i=1}^n I_i$,
moreover $I_i=(a_i,a_{i+1})$, $0=a_1<a_2<a_3<\dots<a_{n+1}=l$;
\item[(ii)] $T|_{I_i}$ is an affine  map of constant slope equals to $1$ or $-1$, $i=1,2,\dots, n$.
\end{description}

When the slope of $T$ is negative in the interval set $\mathcal{F}=\{I_{f_1},I_{f_2},\ldots, I_{f_k}\}$, $k\leq n$, we say that   
$T$ is \emph{an interval exchange transformation of
$n$ intervals with $k$ flips}  or simply \emph{an (n,k)-\iet}; otherwise we
 say that $T$ is \emph{an interval exchange transformation of
$n$ intervals without flips} or simply \emph{an oriented interval exchange
transformation of $n$ intervals}. We will say that $T$ is a \emph{proper} $(n,k)$-\iet\ if the points $a_i$, $2\leq i\leq n$, are not fake discontinuities.

%
%
%
%
%
%

 The orbit of  $x\in \intervalogab$, generated by   $T$, is  the set
$$\O_T(x)=\{T^m(x): m \textrm{ is an integer and } T^m(x) \textrm{ makes sense} \},$$
where $T^0=\mathrm{Id}$ and  $T^m=T\circ T^{m-1}$ for any integer $m$.
Moreover $\O_T(0)=\{0\}\cup\O_T(\lim_{x\to0^+}T(x))$
and 
$\O_T(l)=\{l\}\cup\O_T(\lim_{x\to l^-}T(x))$. 
$T$ is said to be \emph{minimal} if $\O_T(x)$ is dense in $\intervalog$ 
for any  $x\in\intervalog$ while $T$ is \emph{transitive} if it has a dense orbit in   $\intervalog$
(this notion of minimality is equivalent to say that $T$ is transitive and it does not have finite orbits, see \cite[Remark~1]{linerosolerergodic}).

A finite measure  $\mu$   on $\intervalog$  is said to be  \emph{invariant for $T$} if for any measurable set $A\subset\intervalog$  
$\mu(T^{-1}(A))=\mu(A)$. An invariant measure $\mu$ for  $T$ is \emph{ergodic} if for any set    
$E\subset\intervalog$ verifying {$T^{-1}(E)=E$} then    $\mu(E)=0$ or
{$\mu(E)=1.$} Observe that the standard Lebesgue measure on $\intervalog$, denoted by $\mu_L$, is invariant for any interval exchange transformation $T$ and  any multiple of $\mu_L$ is also invariant.
$T$ is said to be \emph{uniquely ergodic} if it does not admit another invariant probability measure than the normalized Lebesgue one.
We stress that, for \iets, the unique ergodicity implies the ergodicity 
with respect to Lebesgue measure, cf. \cite[Section II.6, Th. 6.1]{MagneLibro}.

The objective of this paper is to prove the following theorem.

\begin{mteo}
 There exist minimal non uniquely ergodic flipped \iets.  In particular we build proper minimal non uniquely ergodic  $(10,k)-$\iets\ for any $1\leq k\leq 10$.
\end{mteo}

By using a construction proposed in \cite{gutierrez4b} we will be able to deduce the existence of transitive non uniquely ergodic \iets.
\begin{coromain}\label{C:transitivo}
There exist transitive non uniquely ergodic proper $(n,k)$-\iets\ for all $n\geq 10$ and $1\leq k\leq n$ if $n$ is even and $1\leq k<n$ whenever $n$ is odd. 

Moreover, it is also possible to build transitive non uniquely ergodic oriented proper $n$-\iets\ for any $n\geq 4$ and transitive non uniquely ergodic proper $(n,2)$-\iets\  $(6\leq n\leq 9)$ and  
$(n,4)$-\iets\  $(8\leq n\leq 9)$.
\end{coromain}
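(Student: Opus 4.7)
The plan is to bootstrap from the Main Theorem via the decoration technique suggested in \cite{gutierrez4b}. Given a target pair $(n,k)$, I would start with a minimal non uniquely ergodic $(10,k_0)$-\iet\ $T$ on $[0,\ell_0]$ produced by the Main Theorem and, on a fresh interval $[\ell_0,\ell_0+\ell_1]$ attached to its right, adjoin $m=n-10$ new subintervals that are permuted among themselves by a simple periodic dynamics, with precisely $k-k_0$ of the new pieces carrying a flip. A generic choice of the appended lengths will avoid fake discontinuities at the junction, producing a proper $(n,k)$-\iet\ $\widetilde T$. The freedom to pick $k_0\in\{1,\dots,10\}$ --- granted by the Main Theorem for every admissible value --- allows one to realise any pair $(n,k)$ with $n\geq 10$ and $1\leq k\leq n$, with the sole exception of $k=n$ for odd $n$, where the parity of the required flips on the appendage cannot be accommodated by the construction.

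I would then check that $\widetilde T$ has the three properties claimed. Transitivity is inherited from the density of every $T$-orbit on $[0,\ell_0]$, since $\widetilde T$ agrees with $T$ there; non unique ergodicity follows immediately from the coexistence of two mutually singular $T$-invariant probability measures (extended by zero to the appendage) together with the atomic measures supported on the periodic orbits of the decoration. Minimality is lost because the appended periodic orbits are disjoint from the dense $T$-orbit, which is exactly why the conclusion of the corollary weakens from minimality to transitivity.

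For the ``moreover'' statements I would treat the oriented and the low-$n$ flipped cases separately. Transitive non uniquely ergodic oriented proper $n$-\iets\ for $n\geq 4$ are classical, starting from \cite{keynesnewton} at $n=4$ and extending to larger $n$ by appending non-flipped periodic pieces to that example. The remaining flipped instances $(n,2)$ with $6\leq n\leq 9$ and $(n,4)$ with $8\leq n\leq 9$ are obtained by decorating a suitable transitive (not necessarily minimal) lower dimensional base example with an appropriate number of flipped and non-flipped appended intervals. The main obstacle I anticipate is essentially combinatorial bookkeeping: one must verify case by case that properness survives at the junctions, that transitivity propagates from the base to the concatenation, and that the parity constraint for odd $n$ really is sharp. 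Once this is done, non unique ergodicity is automatic from the argument of the first paragraph.
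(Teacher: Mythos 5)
There is a genuine gap, and it is fatal. You propose to "adjoin $m=n-10$ new subintervals that are permuted among themselves by a simple periodic dynamics," disjoint from the base $(10,k_0)$-\iet\ $T$. If the appended intervals are permuted only among themselves, the appendage $[\ell_0,\ell_0+\ell_1]$ is an invariant set, and so is $[0,\ell_0]$; consequently $\widetilde T$ has no dense orbit and is \emph{not} transitive. You even notice half of this yourself when you write that minimality fails "because the appended periodic orbits are disjoint from the dense $T$-orbit" --- but the very same disjointness destroys transitivity, which is precisely the property Corollary~\ref{C:transitivo} asserts. Once transitivity is gone, the claim that non unique ergodicity "follows immediately from \dots\ the atomic measures supported on the periodic orbits" is vacuous: any non-transitive \iet\ trivially carries multiple mutually singular invariant measures, so nothing of interest is proved.

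The paper's construction from \cite{gutierrez4b} resolves exactly this difficulty by forcing the appendage to \emph{interleave} with the base rather than sit beside it. The operator $T_1$ maps $D\cap[0,1]$ into $(1,2)$ by $x\mapsto T(x)+1$ and maps $(1,2)$ back to $(0,1)$ by $x\mapsto x-1$, so the first-return map of $T_1$ to $(0,1)$ is $T$ itself; similarly $T_2$ routes orbits through $(0,1)\to(1,2)\to(2,3)\to(0,1)$ with two flipped branches. Because every orbit keeps visiting the base and the base map is minimal, the new map is transitive. For non unique ergodicity the argument is not that one simply piles up periodic atoms; instead Propositions~\ref{P:paraT1}--\ref{P:paraT2} lift each $T$-invariant probability measure $\mu$ to a $T_1$- (resp.\ $T_2$-)invariant measure $\mu_1$ (resp.\ $\mu_2$), and the lifting is injective, so two independent measures for $T$ give two independent measures for $T_1$ (or $T_2$). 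Finally, the enumeration of realizable pairs $(n,k)$ is carried out by composing $T_1$ (which keeps the flip count and adds one interval) and $T_2$ (which adds two flips and two intervals), starting from the family of $(10,k_0)$-\iets\ from the Main Theorem and, for the oriented "moreover" part, from Keane's example \cite{keanenounicamenteergodicas}; this bookkeeping pins down precisely the obstruction $k=n$ for odd $n$. Your proposal neither produces a transitive map nor supplies a mechanism that would.
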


The paper is organized as follows. In Section~\ref{RauzyVeechInduction} we recall the basic notions about \iets, in particular the generalized Rauzy maps and the Rauzy graph of an \iet\, $T$, and we stress their relationship with the cone $\mathcal{M}(T)$ of invariant measures associated to $T$, see Theorem~\ref{T-cone}. In Section~\ref{S:path} we present our candidate to minimal non uniquely ergodic \iet. It is worth mentioning that our inspiration was the minimal \iet\, constructed in 
\cite{linerosolerergodic}, and the idea of describing a non-periodic loop was suggested by the reading of \cite[S. IV]{YoccozCurso} and \cite{chaikamasur}. Next, in Section~\ref{S:matricespath} we deeply analyze  the relationships between the columns of successive matrices of the graph of $T$, and the main result of this part is Theorem~\ref{T:limiteplano}, which establishes the existence of exactly two limit directions for these columns. Section~\ref{S:proofmaintheorem} is devoted to prove our Main Theorem and Corollary~\ref{C:transitivo}. Finally, we present some other interesting questions for future studies.

\def\intervalog{[0,l]}
\def\intervalogab{(0,l)}

\def\codes{\mathcal{C}_n}

\section{Coordinates in the set of \iets, Rauzy induction and invariant measures}\label{RauzyVeechInduction}

\def\codes{\mathcal{C}_n}

An easy way to work with \iets\ is   introducing coordinates, we now follow  \cite{linerosolerergodic}. 
To do that, 
let $n\in \N$, then it is known the existence of a  natural
injection between the set of  $n$-\iets\  and $\codes=\Lambda^n\times
S_n^\sigma$, where $\R_+=(0,\infty)$, $\Lambda^n$ is the cone $\R_+^n$ and $S_n^\sigma$ is the 
set of \emph{signed permutations}. A signed permutation is an injective map $\pi:N_n=\{1,2,\ldots,n\}\to
N_n^\sigma=\{-n,-(n-1),\ldots,-1,1,2,\ldots,n\}$ such that $|\pi|:N_n\to
N_n$ is bijective, that is, a \emph{standard permutation}; a \emph{non
standard permutation} will be a signed permutation $\pi$ such that $\pi(i)<0$ for
some $i$. As in the case of standard permutations,
$\pi$ will be represented by the vector $(\pi(1),\pi(2),\ldots,\pi(n))\in
(N_n^\sigma)^n.$ Let $T$ be an $n$-\iet\ like in the preceding paragraph,
then its associated coordinates in $\codes$ are $(\lambda,\pi)$ defined
by:
\begin{itemize}
\item \label{notacion}$\lambda_i=a_{i+1}-a_i$ for all $i\in N_n$.
\item $\pi(i)$ is positive (resp. negative) if $T|_{I_i}$ has
slope $1$ (resp. $-1$). Moreover $|\pi(i)|$ is the position of the
interval $T(I_i)$ in the set $\{T(I_i)\}_{i=1}^n$ taking into account the
usual order in $\R$.
\end{itemize}
Conversely, given  a pair $(\lambda,\pi)\in\codes$ we can associate to it  a unique
$n$-\iet, $T:D\subset \intervalog\to\intervalog$, where:
\begin{itemize}
\item $l=|\lambda|:=\sum_{i=1}^n\lambda_i$;
\item  $I_1=(0,\lambda_1)$;
\item $I_i=(\sum_{j=1}^{i-1} \lambda_j,\sum_{j=1}^{i} \lambda_j)$ for any
$1<i\leq n$;
\item
$T|_{I_i}(x)=\left(\sum_{j=1}^{|\pi|(i)-\frac{\sigma(\pi(i))+1}{2}}\lambda_{|\pi|^{-1}(j)}\right)
+\sigma(\pi(i))\left[x -\left(\sum_{j=1}^{i-1}\lambda_{j}\right)\right],$ for any $1\leq i\leq n$, where $\sigma(z)$ denotes the \emph{sign} of $z\in\mathbb{R}\setminus\{0\},$ namely, $\sigma(z)=\frac{z}{|z|}.$
\end{itemize}
These coordinates allow us to make the identification
$T=(\lambda,\pi)$. 

\def\O{\mathcal{O}}

\def\d{\delta}

In \cite{linerosolerergodic} the {authors constructed} minimal uniquely ergodic 
interval exchange transformations with
flips, generalizing the results in  \cite{gutierrez4b}.  However, in this {last} paper,
it is left as an open problem to prove the existence of minimal non uniquely ergodic minimal flipped \iets.
We construct, in this work, an example of this type of \iets.

\def\TT{\overline{T}}

A permutation $\pi:N_n\to
N_n^\sigma$ is said to be irreducible if $|\pi(\{1,2,\ldots t\})|\not=\{1,2,\ldots,t\}$ for any $1\leq t<  n$. The set of \corregido{\emph{irreducible permutations}}{} is denoted by
$S_n^{\sigma,*}$. \corregido{We will write $S_n^{\sigma,+}$ to denote the set of permutations, $\pi\in S_n^\sigma$, satisfying $|\pi|(n)\not=n$. 
Observe that $S_n^{\sigma,*}\subset S_n^{\sigma,+}\subset S_n^{\sigma}$.}{} It is easily seen that if $(\lambda,\pi)$ is a minimal
$n$-\iet\ (not necessarily oriented) then $\pi$ is irreducible.

\def\intervalogp{[0,l']}
\def\irreducibleset{S_n^{\sigma,*}}
\def\permuset{S_n^{\sigma}}
\def\permusetplus{S_n^{\sigma,+}}

Roughly speaking, the \emph{generalized Rauzy induction} is an
operator in the set of \iets\  which sends any $T:D\subset \intervalog\to \intervalog$
to its first return map on some subinterval
$\intervalogp\subsetneq\intervalog.$ We pass  to give a
formalization of this operator, by means of the maps $a$ and $b$ defined on $S_n^\sigma$. In the 
final part of the section we investigate the relationship between the Rauzy induction and  the existence 
of minimal  \iets\ with flips {(see Theorem~\ref{T-cone}).}

Let $x\in\R\menos\{0\}$. Recall that the sign of $x$ is denoted by $\sigma(x).$
The \emph{generalized Rauzy maps} were introduced by Nogueira in \cite{nogueira2} 
(cf. also \cite{rauzy}) and are   $
a,b:\permusetplus\rightarrow\permuset,$
where $a(\pi)$ and $b(\pi)$ are the permutations defined  by:
\begin{equation}\label{def:a}
a(\pi)(i)=\left\{
\begin{array}{ll}
\pi(i)                          &\textrm{ if }|\pi(i)|\leq |\pi(n)| - \frac{1-\sigma(\pi(n))}{2},\\
\sigma(\pi(n))\sigma(\pi(i))(|\pi(n)|+\frac{1+\sigma(\pi(n))}{2})        &\textrm{ if }|\pi(i)|=n,\\
\sigma(\pi(i))(|\pi(i)|+1)        &\textrm{ otherwise,}\\
\end{array}\right.
\end{equation}
and 
\begin{equation}\label{def:b}
b(\pi)(i)=\left\{
\begin{array}{ll}
\pi(i)                          &\textrm{ if } i \leq |\pi|^{-1}(n) + 
\frac{\sigma(\pi(|\pi|^{-1}(n)))-1}{2},\\
\sigma(\pi(|\pi|^{-1}(n)))\pi(n)        &\textrm{ if } i = |\pi|^{-1}(n) + 
\frac{\sigma(\pi(|\pi|^{-1}(n)))-1}{2} +1, \\
\pi(i-1)        &\textrm{ otherwise.}\\
\end{array}\right.
\end{equation}

Together with these maps, we also define the \emph{generalized
Rauzy matrices} associated to a permutation $\pi\in\permusetplus$, $M_a(\pi)$ and
$M_b(\pi)$. Given  $1\leq i,j\leq n$, $E_{i,j}$ denotes the $n\times n$
matrix having zeros in all the positions except \corregido{for}{} the position $(i,j)$
which is equal to 1, and $I_n$ denotes the $n\times n$ identity matrix. The
definitions of $M_a(\pi)$ and $M_b(\pi)$ are:
\begin{eqnarray}
M_a(\pi)&=&I_n+E_{n,|\pi|^{-1}(n)};\nonumber\\
M_b(\pi)&=&\left(\sum\limits_{i=1}^{|\pi|^{-1}(n)}E_{i,i}\right)
+ E_{n, |\pi|^{-1}(n)+ \frac{1+\sigma(\pi(|\pi|^{-1}(n)))}{2}} 
+ \left(\sum\limits_{i=|\pi|^{-1}(n)}^{n-1}E_{i,i+1}\right).
\label{E:matrizAB}
\end{eqnarray}
%
%
%
%
%
%
%

Positive matrices will play an important role in our study on minimality of \iets. 
A \emph{non-negative} matrix $A\in M_{n\times n}(\R)$, i.e. $a_{i,j}\geq 0$ for
any $i,j\in\{1,2,\ldots,n\}$, is said to be \emph{positive} if the
previous inequalities are strict. 
In the following, the diagonal, a row or a column of a  matrix is said to be positive if all the entries in the corresponding diagonal, row or column are positive.

\corregido{We are now}{} ready to present formally the \emph{generalized Rauzy 
operator} $R$. Let 
$$\mathcal{D}=\{(\lambda,\pi)\in\Lambda^n\times
S_n^\sigma: \lambda_{n}\not=\lambda_{|\pi|^{-1}(n)}\},$$ then
$$
\begin{array}{rcl}
R:\mathcal{D}\subset \Lambda^n\times S_n^\sigma&\longrightarrow&
\Lambda^n\times S_n^\sigma\\
T=(\lambda,\pi)&\to&T'=(\lambda',\pi')
\end{array}$$
is defined by:

$$T'=(\lambda',\pi')=\left\{
\begin{array}{ll}
\left(M_a(\pi)^{-1}\lambda,a(\pi)\right)&\textrm{if }
\lambda_{|\pi|^{-1}(n)}<\lambda_n,\\
\left(M_b(\pi)^{-1}\lambda,b(\pi)\right)&\textrm{if }
\lambda_{|\pi|^{-1}(n)}>\lambda_n.\\
\end{array}\right.$$

If $T'$ is obtained from $T$ by means of the operator $a,$ $T$ is said to be of \emph{type} $a$, otherwise $T$ is of \emph{type} $b$. In any case, $T'$  is the \emph{Poincar\'e first return map} induced by $T$ on $[0,l'],$ with $l'=l-\min\{\lambda_n,\lambda_{\antin}\}$, see 
\cite[Proposition 5]{mioangosto}.  

The operators $a$ and $b$ induce in the set $S_n^{\sigma,*}$ a directed graph
structure whose vertices are all the points from $S_n^{\sigma,*}$ and the
directed edges are arrows labelled by $a$ and $b$. Given $\pi,\pi'\in
S_n^{\sigma,*},$ there exits an arrow labelled by $a$ (resp. $b$) from $\pi$ to
$\pi'$ if and only if $a(\pi)=\pi'$ (resp. $b(\pi)=\pi'$). Any connected
subgraph of this graph, $\mathcal{G}_n$, is called a \emph{Rauzy class} 
\corregido{(the Rauzy classes for standard permutations were studied in \cite{kontsevichzorich}).}{}
We remark that we only
take into account irreducible permutations because they are the only ones for which the associated
\iets\ can be minimal. Moreover, it is worth noticing that if $\pi$ is an irreducible standard permutation then 
$a(\pi)$ and $b(\pi)$ are irreducible, while it is not always the case for non standard irreducible permutations, 
observe for instance that $a(-4,3,2,-1)=(1,4,3,-2)$.

A \emph{vector of operators} is an element of $\{a,b\}^L$, where $L\in\N$
or $L=\infty$ (when  $L=\infty$, $\{a,b\}^L=\{a,b\}^{\N}$). An easy way of constructing Rauzy subgraphs
is to take a vertex $\pi\in S_n^{\sigma,*}$ and to construct recursively other
vertices by applying a vector of operators. The \emph{Rauzy subgraph}, $\mathcal{G}^{\pi_1,v}$,
associated to $\pi_1\in S_n^{\sigma,*}$ and $v=(v_1,v_2,\dots)\in \{a,b\}^L$
 is the graph of vertices $\{\pi_i\}_{i=1}^L$
satisfying $v_i(\pi_i)=\pi_{i+1}$, $1\leq i < L$, the edges of this graph being arrows
labelled by $v_i$ from $\pi_i$ to $\pi_{i+1}$. Observe that any $n$-\iet,
$T=(\lambda,\pi)\in\mathcal{D}$, defines a Rauzy subgraph in a natural way, the one
associated to $\pi$ and the vector of operators $v$ defined by the Rauzy 
induction, that is, $v_i$ is $a$ (resp $b$) if $R^{i-1}(T)$ is of type $a$ (resp. $b$), we denote this subgraph by $\mathcal{G}^T$. We will say that $T$
is \emph{infinitely inducible} if $v$ has infinite dimension, i.e.
$v\in\{a,b\}^{\N}$. For a finite vector of operators, $v\in\{a,b\}^k${, $v=(v_1,\ldots,v_k)$}, $k\in\N$, and a vertex
$\pi\in S_n^{\sigma,*}$, $v(\pi)$ denotes the vertex obtained after applying sequentially, from the left, the operators in $v$, also
$M_v(\pi):=
M_{v_1}(\pi)
M_{v_2}(v_1(\pi))
M_{v_3}(v_2(v_1(\pi)))
\dots
M_{v_k}(v_{k-1}(v_{k-2}(\dots v_1(\pi))))
$.

We are now in a position to establish our first result concerning the relationship between 
Rauzy subgraphs and $\mathcal{M}(T).$

\newcommand{\matrizpaso}[1]{M_{v_{#1}}(\pi_{#1})}

\begin{theorem}\label{T-cone}
Let $T=(\lambda^1,\pi_1)$ be an $n$-\iet\ such that $\pi_1\in S_n^{\sigma,*}$,
$T$ is infinitely inducible  and 
$R^i(\lambda^1,\pi_1)=(\lambda^{i+1},\pi_{i+1})$ {for any $i\geq 1$;  let $\mathcal{G}^T$ be the  Rauzy subgraph of
$T$ associated to $\pi_1$ and
$v\in\{a,b\}^\N$.}  Put 
\begin{equation}
\mathcal{C}(\mathcal{G}^T):=\bigcap_{i=1}^\infty M_{v_1}(\pi_1)\cdot
M_{v_2}(\pi_2)\cdot \ldots \cdot M_{v_i}(\pi_i) \Lambda^n,
\label{E:cone}
\end{equation}
and let $\mathcal{M}(T)$ be  the cone of invariant measures associated to $T$.
Assume also that, for any $i\in\N$, $\pi_i$ is irreducible. Then:
\begin{enumerate}
\item $\lambda^1\in\mathcal{C}(\mathcal{G}^T)$;
\item if $\gamma\in \mathcal{C}(\mathcal{G}^T)$ and $S=(\gamma,\pi_1)$, \quad
$\mathcal{G}^S=\mathcal{G}^T$.
\item $\mathcal{C}(\mathcal{G}^T)$ and  $\mathcal{M}(T)$ are linearly isomorphic (thus,  if $\mathcal{C}(\mathcal{G}^T)$ is a half line, $T$ is uniquely ergodic).
\item $R^j(T)$ is minimal for any $j\in\N\cup\{0\}$.
\end{enumerate}
\end{theorem}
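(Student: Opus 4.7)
\emph{Plan.} I would treat the four assertions in the order $(1),(4),(2),(3)$, so that the minimality of every $R^j(T)$ is available when I come to identify invariant measures. Part $(1)$ is just telescoping: by definition of the Rauzy operator $\lambda^i=M_{v_i}(\pi_i)\lambda^{i+1}$, so $\lambda^1=M_{v_1}(\pi_1)\cdots M_{v_i}(\pi_i)\lambda^{i+1}$ with $\lambda^{i+1}\in\Lambda^n$ for every $i$, whence $\lambda^1\in\mathcal{C}(\mathcal{G}^T)$. Part $(4)$ I would obtain from the minimality criterion for flipped \iets\ available in \cite{linerosolerergodic}: infinite inducibility together with irreducibility of every permutation along the induction sequence is enough for minimality. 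Both hypotheses are preserved by every tail of the induction sequence, so $R^j(T)$ is minimal for each $j\in\N\cup\{0\}$.

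For part $(2)$ the crucial observation, which is a direct reading of (\ref{E:matrizAB}), is that for any $\pi\in S_n^{\sigma,+}$ and any $\mu\in\Lambda^n$ one has $M_a(\pi)^{-1}\mu\in\Lambda^n$ iff $\mu_n>\mu_{|\pi|^{-1}(n)}$, and analogously for $M_b(\pi)^{-1}\mu$; these are exactly the inequalities selecting the Rauzy type $a$ or $b$. Hence if $\gamma\in\mathcal{C}(\mathcal{G}^T)$ then in particular $\gamma\in M_{v_1}(\pi_1)\Lambda^n$, so $M_{v_1}(\pi_1)^{-1}\gamma\in\Lambda^n$ and the Rauzy induction of $S=(\gamma,\pi_1)$ is forced to take the first step of type $v_1$, producing $v_1(\pi_1)=\pi_2$. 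An induction on $i$, using $\gamma\in M_{v_1}(\pi_1)\cdots M_{v_i}(\pi_i)\Lambda^n$ at the $i$-th stage, shows that $S$ follows exactly the same labelled sequence of vertices as $T$, i.e.\ $\mathcal{G}^S=\mathcal{G}^T$.

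Part $(3)$ is the substantive step. I would study the linear map $\Phi:\mathcal{M}(T)\to\R^n$, $\Phi(\mu)=(\mu(I_1),\dots,\mu(I_n))$, where $I_1,\dots,I_n$ are the subintervals of $T$. Since $R^i(T)$ is the Poincar\'e first return map of $T$ on $[0,l^{i+1}]$, restriction provides a linear bijection between $\mathcal{M}(T)$ and $\mathcal{M}(R^i(T))$; decomposing each $I_j$ into the $T$-iterates of the subintervals of $R^i(T)$ before they return to $[0,l^{i+1}]$ --- a count recorded entry by entry in the product $M_{v_1}(\pi_1)\cdots M_{v_i}(\pi_i)$ --- gives $\Phi(\mu)=M_{v_1}(\pi_1)\cdots M_{v_i}(\pi_i)\,\mu^{(i+1)}$, where $\mu^{(i+1)}$ is the length vector of $\mu|_{[0,l^{i+1}]}$. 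By part $(4)$, $R^i(T)$ is minimal, so every nontrivial invariant measure has full support, $\mu^{(i+1)}\in\Lambda^n$, and therefore $\Phi(\mathcal{M}(T))\subseteq\mathcal{C}(\mathcal{G}^T)$. Conversely, given $\gamma\in\mathcal{C}(\mathcal{G}^T)$, part $(2)$ builds an \iet\ $S=(\gamma,\pi_1)$ whose Rauzy subgraph coincides with $\mathcal{G}^T$; level-by-level matching of the Rauzy towers of $S$ and $T$ then yields a measurable conjugacy between $S$ and $T$ (modulo the negligible set of discontinuity points), under which the Lebesgue measure of $S$ pushes forward to a $T$-invariant measure with length vector $\gamma$. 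Injectivity of $\Phi$ is automatic because a finite measure with $\Phi(\mu)=0$ must vanish on $[0,l]$. The parenthetical assertion follows immediately: Lebesgue measure is always $T$-invariant, so a one-dimensional cone $\mathcal{C}(\mathcal{G}^T)$ forces every invariant probability measure to be a multiple of $\mu_L$. The main obstacle will be making the tower conjugacy rigorous: one has to verify that the iterated Rauzy partitions of $[0,l]$ and $[0,|\gamma|]$ separate points and are combinatorially identical stage by stage, after which a Carath\'eodory extension of the induced set function completes the proof.
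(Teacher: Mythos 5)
Your proposal is correct and reconstructs precisely the arguments that the paper delegates to references: items (1)--(2) are the content of \cite[Th.~20]{linerosolerergodic} (telescoping of $\lambda^i=M_{v_i}(\pi_i)\lambda^{i+1}$ and the sign analysis that forces the same Rauzy type at every stage), item (4) is the minimality criterion of \cite{mioangosto}, and item (3) is Viana's Rauzy-tower correspondence between invariant measures and length vectors \cite[\S 28]{viana}, adapted to flips via \cite[Th.~22]{linerosolerergodic}. Since the paper's own proof is exactly these citations, your route and the paper's coincide; your final caveat about verifying that the iterated Rauzy partitions separate points is indeed where the flipped analogue of Viana's Lemma~28.2 is used.
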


\begin{proof}
 See \cite[Th. 20]{linerosolerergodic} for the proof of items (1) and (2). The proof of third item can be followed for oriented \iets\ in  \cite[Section~28]{viana},  we stress that 
 the proof
 also applies in the non-oriented case because the non-orientability only plays an essential role in Lemma~28.2. The analogous of this lemma in the flipped case is stated in 
 \cite[Th. 22]{linerosolerergodic}. Item (4) is proved in \cite{mioangosto}.
\end{proof}

\newcommand{\conografo}[1]{\mathcal{C}(\mathcal{G}^{#1})}

\newcommand{\ensayo}[1]{#1
}

Next result gives a method for constructing minimal \iets\ by means of Rauzy
graphs. The proof will be made in Section~\ref{S:proofmaintheorem}.

\begin{theorem}\label{conver}\label{T:xxlnoperiodico} Let
$\mathcal{G}^{\pi_1,v}$ be a  the Rauzy-subgraph associated  to
$\pi_1\in S_n^{\sigma,*}$ and $v\in\{a,b\}^\N$. Assume the existence of a   sequence  $(n_k)_k${,} $n_k\in\N$, satisfying 
$n_1=1$, $n_{k+1}>n_k$  and  {such that}
$M_{v_{n_k}}(\pi_{n_k})\cdot
M_{v_{n_k+1}}(\pi_{n_k+1})\cdot \ldots \cdot M_{v_{n_{k+1}-1}}(\pi_{v_{n_{k+1}-1}})$ {is positive} for any $k\in\N$. {If} $\mathcal{C}(\mathcal{G}^{\pi_1,v}):=\bigcap_{i=1}^\infty M_{v_1}(\pi_1)\cdot
M_{v_2}(\pi_2)\cdot \ldots \cdot M_{v_i}(\pi_i) \Lambda^n$, then: 
\begin{enumerate}
\item  $\mathcal{C}(\mathcal{G}^{\pi_1,v})$ is nonempty. 
 \item There exists $\lambda^1\in\mathcal{C}(\mathcal{G}^{\pi,v})$ such that the associated graph to $T$ is
$\mathcal{G}^{\pi_1,v}$.
\item $R^j(T)$ is minimal for any $j\in\N\cup\{0\}$.
\item $T$ is uniquely ergodic if and only if $R^j(T)$ is  uniquely ergodic for any $j\in\N\cup\{0\}$.
\end{enumerate}
\end{theorem}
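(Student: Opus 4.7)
The plan is to proceed item by item, with all the substance concentrated in (1). For (1), I would use a projective compactness argument combined essentially with the positivity hypothesis. A direct inspection of \eqref{E:matrizAB} reveals that each Rauzy matrix $M_{a}(\pi), M_{b}(\pi)$ is nonnegative with no zero row and no zero column, and this property is preserved under finite products. Working in the compact simplex $\Delta:=\{y\in\overline{\Lambda^{n}} : \sum_{j=1}^{n} y_{j}=1\}$, the projectivised images $K_{i}\subset\Delta$ of $M_{v_{1}}(\pi_{1})\cdots M_{v_{i}}(\pi_{i})\overline{\Lambda^{n}}$ form a decreasing sequence of nonempty compact sets, so $\bigcap_{i}K_{i}\neq\emptyset$. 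I would fix $x$ in this intersection.

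The critical step, which is where the positivity hypothesis is actually used, is to promote $x$ from the closure into each open image $M_{v_{1}}(\pi_{1})\cdots M_{v_{i}}(\pi_{i})\Lambda^{n}$. Given $i$, I choose $k$ with $n_{k}>i$ and write $x=c\cdot M_{v_{1}}(\pi_{1})\cdots M_{v_{n_{k+1}-1}}(\pi_{n_{k+1}-1})y$ with $y\in\Delta$ and $c>0$. Factoring the tail as $M_{v_{i+1}}\cdots M_{v_{n_{k+1}-1}}=A\cdot B$, where $B=M_{v_{n_{k}}}\cdots M_{v_{n_{k+1}-1}}$ is positive by hypothesis and $A$ is nonnegative with no zero row, a one-line check shows that $A\cdot B$ is positive; hence $M_{v_{i+1}}\cdots M_{v_{n_{k+1}-1}}y\in\Lambda^{n}$, which places $x$ inside $M_{v_{1}}(\pi_{1})\cdots M_{v_{i}}(\pi_{i})\Lambda^{n}$. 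The same argument applied with $i=0$ also gives $x\in\Lambda^{n}$, completing (1).

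For (2), I set $\lambda^{1}:=x$ and define $\lambda^{i+1}:=(M_{v_{1}}(\pi_{1})\cdots M_{v_{i}}(\pi_{i}))^{-1}\lambda^{1}$, which lies in $\Lambda^{n}$ by the definition of $\mathcal{C}(\mathcal{G}^{\pi_{1},v})$. A routine induction on $i$ using the explicit shape of $M_{a}$ and $M_{b}$ in \eqref{E:matrizAB} shows that the identity $\lambda^{i}=M_{v_{i}}(\pi_{i})\lambda^{i+1}$ forces the correct sign of $\lambda^{i}_{n}-\lambda^{i}_{|\pi_{i}|^{-1}(n)}$, so $(\lambda^{i},\pi_{i})$ is of type $v_{i}$ and $R(\lambda^{i},\pi_{i})=(\lambda^{i+1},\pi_{i+1})$. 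This delivers both $\mathcal{G}^{T}=\mathcal{G}^{\pi_{1},v}$ and the fact that $T$ is infinitely inducible. Since the vertices of any Rauzy subgraph lie in $S_{n}^{\sigma,*}$ by definition, item (3) is then an immediate consequence of Theorem~\ref{T-cone}(4).

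For (4), I would invoke Theorem~\ref{T-cone}(3) to identify $\mathcal{M}(T)$ linearly with $\mathcal{C}(\mathcal{G}^{T})$ and $\mathcal{M}(R^{j}(T))$ with $\mathcal{C}(\mathcal{G}^{R^{j}(T)})$. Since the matrix $M_{v_{1}}(\pi_{1})\cdots M_{v_{j}}(\pi_{j})$ is invertible, it realises a linear bijection $\mathcal{C}(\mathcal{G}^{R^{j}(T)})\to\mathcal{C}(\mathcal{G}^{T})$, so both cones have the same dimension and the property of being a half-line (i.e.\ unique ergodicity) passes in both directions between $T$ and $R^{j}(T)$. The principal obstacle is the open-versus-closed cone distinction in (1): compactness alone places the limit point $x$ only in the closure of the successive images, and the hypothesis that the blocks $M_{v_{n_{k}}}\cdots M_{v_{n_{k+1}-1}}$ are positive is precisely what is needed to push $x$ inside $\Lambda^{n}$ and inside every open image along the sequence.
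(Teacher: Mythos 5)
Your proposal is correct and follows essentially the same route as the paper. For item~(1), the paper works with the intersection of the closed cones with the unit sphere $\mathbb S^{d-1}$ and invokes Lemma~\ref{L:conomatriz}-(3) to get the strict nesting $\Cl\Delta_{n_{i+1}-1}\setminus\{0\}\supsetneq\Delta_{n_{i+1}-1}\supsetneq\Cl\Delta_{n_{i+2}-1}\setminus\{0\}$, then applies Cantor's intersection theorem; you replace the sphere by the simplex and re-derive the content of that lemma by hand via the factorization $M_{v_{i+1}}\cdots M_{v_{n_{k+1}-1}}=A\cdot B$ with $B=P_k$ positive. For (2) you unwind Lemma~18 of \cite{linerosolerergodic} directly; for (3) and (4) you reach the same conclusion through Theorem~\ref{T-cone}-(4), Theorem~\ref{T-cone}-(3) and (the content of) Lemma~\ref{L:conomatriz}-(4), exactly as the paper does.
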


Taking into account Theorems~\ref{T-cone}, \ref{conver} 
and \cite[Th. 25]{linerosolerergodic} we must find $v\in\{a,b\}^\N$ generating non periodic Rauzy-subgraphs to obtain non uniquely ergodic minimal \iets. In the next section we construct this subgraph.

\def\ctoB{\bigcap_{j\in\N}(M_{\pi_1,\mathbf{v}}^\mathcal{G})^{sj}
\Lambda^n}
\def\ctoA{\bigcap_{i=1}^\infty M_{v_1}(\pi_1)\cdot M_{v_2}(\pi_2)\cdot\ldots\cdot M_{v_i}(\pi_i)
\Lambda^n}
\def\MatrizM{M^\mathcal{G}_{\pi_1,v}}
\def\MatrizMextendida{M_{v_1}(\pi_1)\cdot M_{v_2}(\pi_2)\cdots M_{v_p}(\pi_p)}

\section{The path we follow}\label{S:path}

In this section, we build our candidate to minimal non uniquely ergodic flipped IET map. Previously, in Subsection~\ref{Subs:core} we present the \textit{core path}, which is the same we used in \cite{linerosolerergodic} for the construction of minimal IETs with flips. Since we know that the   associated graph cannot be periodic {(see \cite[Th. 25]{linerosolerergodic})}, we need to modify this core path in the following way: we detect its fixed vertices by the Rauzy operators $a$ or $b$, and then in each periodic tour of the core path we apply arbitrarily many times the operator $a$ or $b$ to some of these fixed vertices, having the precaution of increasing in each step the number, as times as necessary, of such applications. In this manner, we avoid to have a periodic graph and we construct our suitable path in Subsection~\ref{Subs:path}. Then, once we have presented our candidate to minimal non-uniquely ergodic map $T$, the rest of sections are devoted to stress the relationships between the columns of the matrices appearing in the cone $\mathcal{C}(\mathcal{G}^T),$ in order to prove that this cone is two-dimensional, and therefore $T$ is {non-uniquely} ergodic.

\subsection{Core path}\label{Subs:core}

In view of {Theorems~\ref{T-cone} and \ref{T:xxlnoperiodico},} we need to find a vector $v\in\{a,b\}^{\N}$ and a permutation $\pi\in S_n^{\sigma,*}$ such that 
$\mathcal{C}(\mathcal{G}^{\pi,v})$ has dimension bigger {than} or equal to 2. 
Our first step will be to choose the initial permutation and the core path or the path we will take as the basis for doing an appropriate repetition in the form of loops of length bigger and bigger.
We will focus on $10-\iets$, we will take
$\pi_0=(-3,-4,-5,-6,-7,-8,-9,10,1,-2)$ and the vectors

\begin{eqnarray}
v^1&=&(a,a,a,a,a,a,a,b,a,b,b,a,b,a,b,a,b)\in\{a,b\}^{17},\\
v^2(r)&=&(\underbrace{b,b,\dots,b,b}_{r})\in\{a,b\}^{r}, r\geq 0,\\
v^3&=&(
a,
\underbrace{b,b}_{2},
\underbrace{a,a,a}_{3},
\underbrace{b,b,b,b}_{4},
\underbrace{a,a,a,a,a}_{5},
\underbrace{b,b,b,b,b}_{5})\in\{a,b\}^{20},\\
v^4(s)&=&(\underbrace{a,a,\dots,a,a}_{s})\in\{a,b\}^{s}, s\geq 0,\\
v^5&=&(b,
\underbrace{a,a,a,a,a,a,a}_{7},b,a,b)
\in\{a,b\}^{11},\\
v(r,s)&=&v^1*v^2(r)*v^3*v^4(s)*v^5\in\{a,b\}^p,\quad
p=48+r+s.\label{eqV}
\end{eqnarray}

This path is a generalization of the  employed in \cite{linerosolerergodic} for the construction of general non-orientable minimal $(n,k)$-\iets, $1\leq k\leq n$. 
Notice that $v(0,0)=v^1*v^2(0)*v^3*v^4(0)*v^5=v^1*v^3*v^5$, in fact 
$$v(0,0)=aaaaaaababbababababbaaabbbbaaaaabbbbbbaaaaaaabab=a^7bab^2(ab)^3*ab^2a^3b^4a^5b^5*ba^7bab,$$
where $*$ is meant the concatenation of vectors.

We begin with the vertex $\pi_0$ and we apply to it sequentially the operators of  $v(0,0)$ beginning from the left. Let  $w^0:=v(0,0)$, and let $\mathcal{G}^{\pi_0,w^0}$ be  the graph of vertices $\{\pi_i\}_{i=0}^{48}$. Then: 

\begin{lemma}\label{L:permutations}
 The permutations in the graph $\mathcal{G}^{\pi_0,w^0}$ are:
 
 \noindent
 \begin{tabular}{lp{.3cm}l}
 $\pi_{0}=(-3,-4,-5,-6,-7,-8,-9,10,1,-2)$,&&
 $\pi_{1}=a(\pi_{0})=(-4,-5,-6,-7,-8,-9,-10,-2,1,-3)$,\\
 $\pi_{2}=a(\pi_{1})=(-5,-6,-7,-8,-9,-10,3,-2,1,-4)$,&&
 $\pi_{3}=a(\pi_{2})=(-6,-7,-8,-9,-10,4,3,-2,1,-5)$,\\
 $\pi_{4}=a(\pi_{3})=(-7,-8,-9,-10,5,4,3,-2,1,-6)$,&&
 $\pi_{5}=a(\pi_{4})=(-8,-9,-10,6,5,4,3,-2,1,-7)$,\\
 $\pi_{6}=a(\pi_{5})=(-9,-10,7,6,5,4,3,-2,1,-8)$,&&
 $\pi_{7}=a(\pi_{6})=(-10,8,7,6,5,4,3,-2,1,-9)$,\\
 $\pi_{8}=b(\pi_{7})=(9,-10,8,7,6,5,4,3,-2,1)$,&&
 $\pi_{9}=a(\pi_{8})=(10,-2,9,8,7,6,5,4,-3,1)$,\\
 $\pi_{10}=b(\pi_{9})=(10,1,-2,9,8,7,6,5,4,-3)$,&&
 $\pi_{11}=b(\pi_{10})=(10,-3,1,-2,9,8,7,6,5,4)$,\\
 $\pi_{12}=a(\pi_{11})=(5,-3,1,-2,10,9,8,7,6,4)$,&&
 $\pi_{13}=b(\pi_{12})=(5,-3,1,-2,10,4,9,8,7,6)$,\\
 $\pi_{14}=a(\pi_{13})=(5,-3,1,-2,7,4,10,9,8,6)$,&&
 $\pi_{15}=b(\pi_{14})=(5,-3,1,-2,7,4,10,6,9,8)$,\\
 $\pi_{16}=a(\pi_{15})=(5,-3,1,-2,7,4,9,6,10,8)$,&&
 $\pi_{17}=b(\pi_{16})=(5,-3,1,-2,7,4,9,6,10,8)$,\\
 $\pi_{18}=a(\pi_{17})=(5,-3,1,-2,7,4,10,6,9,8)$,&&
 $\pi_{19}=b(\pi_{18})=(5,-3,1,-2,7,4,10,8,6,9)$,\\
 $\pi_{20}=b(\pi_{19})=(5,-3,1,-2,7,4,10,9,8,6)$,&&
 $\pi_{21}=a(\pi_{20})=(5,-3,1,-2,8,4,7,10,9,6)$,\\
 $\pi_{22}=a(\pi_{21})=(5,-3,1,-2,9,4,8,7,10,6)$,&&
 $\pi_{23}=a(\pi_{22})=(5,-3,1,-2,10,4,9,8,7,6)$,\\
 $\pi_{24}=b(\pi_{23})=(5,-3,1,-2,10,6,4,9,8,7)$,&&
 $\pi_{25}=b(\pi_{24})=(5,-3,1,-2,10,7,6,4,9,8)$,\\
 $\pi_{26}=b(\pi_{25})=(5,-3,1,-2,10,8,7,6,4,9)$,&&
 $\pi_{27}=b(\pi_{26})=(5,-3,1,-2,10,9,8,7,6,4)$,\\
 $\pi_{28}=a(\pi_{27})=(6,-3,1,-2,5,10,9,8,7,4)$,&&
 $\pi_{29}=a(\pi_{28})=(7,-3,1,-2,6,5,10,9,8,4)$,\\
 $\pi_{30}=a(\pi_{29})=(8,-3,1,-2,7,6,5,10,9,4)$,&&
 $\pi_{31}=a(\pi_{30})=(9,-3,1,-2,8,7,6,5,10,4)$,\\
 $\pi_{32}=a(\pi_{31})=(10,-3,1,-2,9,8,7,6,5,4)$,&&
 $\pi_{33}=b(\pi_{32})=(10,4,-3,1,-2,9,8,7,6,5)$,\\
 $\pi_{34}=b(\pi_{33})=(10,5,4,-3,1,-2,9,8,7,6)$,&&
 $\pi_{35}=b(\pi_{34})=(10,6,5,4,-3,1,-2,9,8,7)$,\\
 $\pi_{36}=b(\pi_{35})=(10,7,6,5,4,-3,1,-2,9,8)$,&&
 $\pi_{37}=b(\pi_{36})=(10,8,7,6,5,4,-3,1,-2,9)$,\\
 $\pi_{38}=b(\pi_{37})=(10,9,8,7,6,5,4,-3,1,-2)$,&&
 $\pi_{39}=a(\pi_{38})=(-2,10,9,8,7,6,5,-4,1,-3)$,\\
 $\pi_{40}=a(\pi_{39})=(-2,-3,10,9,8,7,6,-5,1,-4)$,&&
 $\pi_{41}=a(\pi_{40})=(-2,-3,-4,10,9,8,7,-6,1,-5)$,\\
 $\pi_{42}=a(\pi_{41})=(-2,-3,-4,-5,10,9,8,-7,1,-6)$,&&
 $\pi_{43}=a(\pi_{42})=(-2,-3,-4,-5,-6,10,9,-8,1,-7)$,\\
 $\pi_{44}=a(\pi_{43})=(-2,-3,-4,-5,-6,-7,10,-9,1,-8)$,&&
 $\pi_{45}=a(\pi_{44})=(-2,-3,-4,-5,-6,-7,-8,-10,1,-9)$,\\
 $\pi_{46}=b(\pi_{45})=(-2,-3,-4,-5,-6,-7,-8,9,-10,1)$,&&
 $\pi_{47}=a(\pi_{46})=(-3,-4,-5,-6,-7,-8,-9,10,-2,1)$,\\
 $\pi_{48}=b(\pi_{47})=(-3,-4,-5,-6,-7,-8,-9,10,1,-2){=\pi_0}$,&&
 \end{tabular}

 \end{lemma}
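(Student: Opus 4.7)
The proof is a direct step-by-step verification, using the explicit definitions of the Rauzy operators $a$ and $b$ from (\ref{def:a}) and (\ref{def:b}). My plan is simply to apply each of the 48 operators in the string $w^0 = v^1 * v^3 * v^5$ to the current permutation and confirm it matches the table. For each transition $\pi_i \mapsto \pi_{i+1}$ I would: (i) locate $|\pi_i|^{-1}(n)$, the position of the entry with largest absolute value; (ii) inspect $\sigma(\pi_i(n))$ (for $a$) or $\sigma(\pi_i(|\pi_i|^{-1}(n)))$ (for $b$) to determine which branch of the piecewise definition applies; (iii) write out the resulting permutation and compare with the claimed $\pi_{i+1}$.

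Several structural observations make the bookkeeping efficient rather than having to blindly iterate 48 times. First, the initial block $\pi_0 \to \pi_7$ is a run of seven $a$'s; at each step, $|\pi(n)| = n$ never holds for the intermediate permutations (the $10$ sits in position $8$ initially and then travels leftward), so one is always in the ``otherwise'' case of (\ref{def:a}), which simply increments the absolute value of most entries and produces the observed pattern. Similarly, the $b$-cascade $\pi_{33}\to \pi_{38}$ is a uniform shift under the rules of (\ref{def:b}). The terminal $a$-block $\pi_{39}\to \pi_{45}$ has an analogous structure but with negative signs (since the $-10$ reappears as $\pi(|\pi|^{-1}(n))$ is negative). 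These three blocks reduce to a single calculation each, extrapolated by induction on the length of the block.

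The genuinely case-by-case work lies in the middle part $\pi_{10}$ through $\pi_{32}$, where $a$'s and $b$'s alternate irregularly and the sign of $\pi(|\pi|^{-1}(n))$ changes, forcing one to check both the branch and the position at every step. This is where I expect the main (and only) obstacle to lie: it is purely a matter of careful bookkeeping, but with signed permutations of length $10$ the arithmetic $|\pi|^{-1}(n) + \frac{\sigma(\pi(|\pi|^{-1}(n)))-1}{2}+1$ in (\ref{def:b}) is easy to miscompute. One useful sanity check is that, for each step, the new permutation must be irreducible (so that the Rauzy induction continues to be meaningful) and must lie in $S_{10}^{\sigma,+}$; verifying these two conditions in parallel with the computation catches most arithmetic slips.

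Finally, the computation terminates with the identity $\pi_{48} = \pi_0$, confirming that the core path is a closed loop in $\mathcal{G}_{10}$. This is the crucial structural fact that will later let the path $v(r,s)$ be iterated as a loop in the Rauzy graph, so I would explicitly highlight that the last two rows of the table realize the return $\pi_{46}\to \pi_{47}\to \pi_{48}=\pi_0$.
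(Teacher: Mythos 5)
Your approach — direct step-by-step verification by applying the definitions (\ref{def:a}) and (\ref{def:b}) of the Rauzy operators, aided by the observation that the long runs of $a$'s and $b$'s behave predictably and that irreducibility and membership in $S_{10}^{\sigma,+}$ serve as running sanity checks — is exactly the right (and essentially the only) way to establish this computational lemma, and it coincides with the paper, which presents the table as the outcome of such a mechanical computation without further argument. One small caution: in the initial $a$-block, the three cases of (\ref{def:a}) are distinguished per index $i$ (the entry with $|\pi(i)|=n$ is handled by the middle line, the small-absolute-value entries by the first), so "always in the otherwise case" should be read as "the dominant effect is incrementing," not as the only case that occurs.
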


 Realize that any permutation $\pi$ of type $\pi=(\pi(1),\ldots,\pi(9), 9)$ is fixed by the operator $a$, and on the other hand 
a permutation $\pi$ is fixed by the operator $b$ if   $\pi=(\pi(1),\ldots,\pi(8), 10,\pi(10))$. Consequently:

\begin{lemma}\label{L:permutations2} 
In the graph  $\mathcal{G}^{\pi_0,w^0}$ we find the following fixed vertices by the Rauzy operator $a$: 

$\pi_{19}=a(\pi_{19})= (5,-3,1,-2,7,4,10,8,6,9),$
 
$\pi_{26}=a(\pi_{26})= (5,-3,1,-2,10,8,7,6,4,9),$
 
$\pi_{37}=a(\pi_{37})=(10,8,7,6,5,4,-3,1,-2,9).$

The following vertices are fixed by the Rauzy operator $b$:

 $\pi_{16}=b(\pi_{16})=(5,-3,1,-2,7,4,9,6,10,8),$

 $\pi_{22}=b(\pi_{22})= (5,-3,1,-2,9,4,8,7,10,6),$
 
 $\pi_{31}=b(\pi_{31})= (9,-3,1,-2,8,7,6,5,10,4).$
\end{lemma}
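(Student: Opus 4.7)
The plan is to verify the two general criteria stated just before the lemma and then to match them against the six specific permutations listed in Lemma~\ref{L:permutations}. Since the statement is essentially a bookkeeping consequence of the definitions of $a$ and $b$, no genuinely new machinery is needed.

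First, I would confirm that $a(\pi)=\pi$ whenever $\pi(10)=9$, and that $b(\pi)=\pi$ whenever $\pi(9)=10$ (working in the case $n=10$). For the $a$ claim, with $\pi(10)=9>0$ one has $\sigma(\pi(10))=1$ and $|\pi(10)|=9$, so the first branch of (\ref{def:a}) applies to every $i$ with $|\pi(i)|\le 9$, yielding $a(\pi)(i)=\pi(i)$; the unique remaining index, where $|\pi(i)|=10$, falls under the second branch, which evaluates to $\sigma(\pi(10))\sigma(\pi(i))\bigl(|\pi(10)|+\tfrac{1+\sigma(\pi(10))}{2}\bigr)=\sigma(\pi(i))\cdot 10=\pi(i)$ precisely because $\sigma(\pi(10))=+1$. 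An analogous inspection of (\ref{def:b}), with $|\pi|^{-1}(10)=9$ and $\sigma(\pi(9))=+1$, shows that the first branch fixes the first nine coordinates while the second branch places $\pi(10)$ back into position $10$ and the third branch is vacuous, so $b(\pi)=\pi$.

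Second, I would carry out the matching. Reading off the last coordinate from the explicit table in Lemma~\ref{L:permutations}, one sees $\pi_{19}(10)=\pi_{26}(10)=\pi_{37}(10)=9$, and reading off the ninth coordinate one sees $\pi_{16}(9)=\pi_{22}(9)=\pi_{31}(9)=10$. By the two criteria just established, this gives the first three vertices as fixed by $a$ and the second three as fixed by $b$, which is exactly the content of the lemma.

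The only real subtlety, and the main point worth highlighting in the write-up, is the sign bookkeeping in (\ref{def:a}) and (\ref{def:b}): the criteria work because the distinguished values $9$ and $10$ occur with \emph{positive} sign in each of the six entries; had any of them carried a negative sign, the second branch of the relevant formula would flip the sign of that coordinate and the permutation would fail to be fixed. Beyond this careful sign check, the argument is entirely a direct inspection of the list produced in Lemma~\ref{L:permutations}, and no further obstacle is anticipated.
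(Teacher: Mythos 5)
Your argument is correct and is essentially the same as the paper's: it establishes the two fixing criteria (``$\pi(10)=9$ implies $a(\pi)=\pi$'' and ``$\pi(9)=10$ implies $b(\pi)=\pi$'') directly from formulas (\ref{def:a}) and (\ref{def:b}) and then matches them against the explicit list in Lemma~\ref{L:permutations}, exactly as the paper does in the sentence immediately preceding the lemma. The only difference is that you spell out the branch-by-branch verification (and flag the role of the positive sign), which the paper leaves implicit.
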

Once we have defined the graph $\mathcal{G}^{\pi_0,w^0}$ corresponding to the vertices $\{\pi_i\}_{i=0}^{48}$, we are going to consider 
the vectors $v(r,s)=v^1*v^2(r)*v^3*v^4(s)*v^5$ having length equal to $p=48+r+s$ for any non-negative integers $r,s$. \linero{Realize that $v(r,s)$ is the core path  with $r$ applications of the operator $b$ to the permutation $\pi_{16}$ of $w^0$, and $s$ applications of the operator $a$ to $\pi_{37}$.  } Associate to it and 
the Rauzy process, we find the matrices for each one of the vectors whose concatenation originates $v(r,s)$: 
\begin{eqnarray*}
M_1&:=& M_{v_1^1}(\pi_0)\cdot M_{v_2^1}(\pi_1)\cdot \ldots \cdot M_{v_{17}^1}(\pi_{16}),\\
M_2(r)&:=&M_{v_{1}^2}(\pi_{17})\cdot M_{v_{2}^2}(\pi_{18})\cdot \ldots \cdot M_{v_{r}^2}(\pi_{17+r-1})=\left(M_b(\pi_{17})\right)^r,\\
M_3&:=& M_{v_{1}^3}(\pi_{17+r})\cdot M_{v_{2}^3}(\pi_{17+r+1}) \cdot \ldots \cdot M_{v_{20}^3}(\pi_{17+r+19}), \\
M_4(s)&:=& M_{v_{1}^4}(\pi_{17+r+20})\cdot M_{v_{2}^4}(\pi_{17+r+21}) \cdot \ldots \cdot M_{v_{s}^4}(\pi_{17+r+19+s})=\left(M_a(\pi_{37})\right)^s,\\
M_5&:=& M_{v_{1}^5}(\pi_{17+r+20+s})\cdot M_{v_{2}^5}(\pi_{17+r+21+s})\cdot \ldots \cdot M_{v_{11}^5}(\pi_{17+r+30+s})
\end{eqnarray*} 

With respect to $M_1$, a rather cumbersome calculation gives

{\begin{center}
$M_1=M_{v_1^1}(\pi_0)\cdot
M_{v_2^1}(\pi_1)\cdot \ldots \cdot M_{v_{17}^1}(\pi_{16})=
\begin{pmatrix}
1&1&1&1&0&0&0&0&0&0\\ 0&0&0&0&1&1&0&0&0&0\\ 0&0&0&0&0&0&
 1&1&0&0\\ 0&0&0&0&0&0&0&0&1&1\\ 0&0&0&0&0&0&1&0&0&1\\ 0&0&0&0&1&0
 &0&1&0&0\\ 1&0&0&0&0&1&0&0&0&0\\ 0&1&0&0&0&0&0&0&0&0\\ 0&0&1&1&0&
 0&0&0&0&0\\ 2&2&1&0&2&2&2&2&1&2\\ 
\end{pmatrix}.
$
 \end{center}
}

In relation with the value of $M_2(r)$, take into account that $\left(M_b(\pi_{17})\right)^r=(I_{10}+E_{9,10})^r=I_{10}+ rE_{9,10},$ therefore
$$M_2(r)=M_{v_{1}^2}(\pi_{17})\cdot
M_{v_{2}^2}(\pi_{18})\cdot \ldots \cdot M_{v_{r}^2}(\pi_{17+r-1})=
\begin{pmatrix} 
1&0&0&0&0&0&0&0&0&0\\ 0&1&0&0&0&0&0&0&0&0\\ 0&0&1&0&0&0& 0&0&0&0\\ 0&0&0&1&0&0&0&0&0&0\\ 0&0&0&0&1&0&0&0&0&0\\ 0&0&0&0&0&1 &0&0&0&0\\ 0&0&0&0&0&0&1&0&0&0\\ 0&0&0&0&0&0&0&1&0&0\\ 0&0&0&0&0& 0&0&0&1&r\\ 0&0&0&0&0&0&0&0&0&1\\
\end{pmatrix}.
$$
Concerning $M_3,$ a direct computation gives 
$$M_3=M_{v_{1}^3}(\pi_{17+r})\cdot
M_{v_{2}^3}(\pi_{17+r+1})
\cdot \ldots \cdot 
M_{v_{20}^3}(\pi_{17+r+19})=
\begin{pmatrix} 
1&1&1&1&1&1&0&0&0&0\\ 0&0&0&0&0&0&1&0&0&0\cr 0&0&0&0&0&0&
 0&1&0&0\cr 0&0&0&0&0&0&0&0&1&0\cr 0&1&1&1&1&0&0&0&0&1\cr 0&1&1&1&1&1
 &0&0&0&1\cr 0&1&1&1&0&0&0&0&0&0\cr 0&1&1&1&1&0&0&0&0&0\cr 0&0&1&0&0&
 0&0&0&0&0\cr 0&0&1&1&0&0&0&0&0&0\cr 
\end{pmatrix}.
$$
To compute $M_4(s)$ observe that $M_{v_{j}^4}(\pi_{37+r-j})=M_b(\pi_{37})=I_{10}+E_{10,1}$ for $j\in\{0,1,\ldots,s-1\}$. Then it is easily seen 
that  $\left(M_b(\pi_{37})\right)^s=I_{10}+sE_{10,1}$, thus 
%

$$M_4(s)=M_{v_{1}^4}(\pi_{17+r+20})\cdot
M_{v_{2}^4}(\pi_{17+r+21})
\cdot \ldots \cdot 
M_{v_{s}^4}(\pi_{17+r+19+s})=
\begin{pmatrix} 
1&0&0&0&0&0&0&0&0&0\\ 0&1&0&0&0&0&0&0&0&0\\ 0&0&1&0&0&0&
 0&0&0&0\\ 0&0&0&1&0&0&0&0&0&0\\ 0&0&0&0&1&0&0&0&0&0\\ 0&0&0&0&0&1
 &0&0&0&0\\ 0&0&0&0&0&0&1&0&0&0\\ 0&0&0&0&0&0&0&1&0&0\\ 0&0&0&0&0&
 0&0&0&1&0\\ s&0&0&0&0&0&0&0&0&1\\ 
\end{pmatrix}.
$$

Finally, a direct computation gives

$$M_5=M_{v_{1}^5}(\pi_{17+r+20+s})\cdot
M_{v_{2}^5}(\pi_{17+r+21+s})
\cdot \ldots \cdot 
M_{v_{11}^5}(\pi_{17+r+30+s})=
\begin{pmatrix} 
1&1&0&0&0&0&0&0&0&0\\ 0&0&1&0&0&0&0&0&0&0\\ 0&0&0&1&0&0&
 0&0&0&0\\ 0&0&0&0&1&0&0&0&0&0\\ 0&0&0&0&0&1&0&0&0&0\\ 0&0&0&0&0&0
 &1&0&0&0\\ 0&0&0&0&0&0&0&1&1&1\\ 0&0&0&0&0&0&0&0&1&1\\ 1&1&1&1&1&
 1&1&1&1&0\\ 0&1&0&0&0&0&0&0&0&0\\
\end{pmatrix}.
$$

Now, we are in a position to compute the product of the above five matrices,

$$
N(r,s):=M_1M_2(r)M_3M_4(s)M_5
=
\begin{pmatrix} 
2 & 2 & 2 & 2 & 2 & 2 & 2 & 2 & 3 & 2 \\ 2\,s & 2\,s+2
  & 2 & 2 & 2 & 2 & 1 & 0 & 0 & 0 \\ 0 & 0 & 2 & 2 & 2 & 1 & 0 & 0 & 
 0 & 0 \\ 0 & 0 & 0 & r+2 & r+1 & 0 & 0 & 0 & 0 & 0 \\ 0 & 0 & 1 & 2
  & 2 & 0 & 0 & 0 & 0 & 0 \\ s & s+1 & 2 & 2 & 2 & 2 & 0 & 0 & 0 & 0
  \\ s+1 & s+2 & 2 & 2 & 2 & 2 & 2 & 0 & 0 & 0 \\ 0 & 0 & 0 & 0 & 0
  & 0 & 0 & 1 & 1 & 1 \\ 1 & 1 & 1 & 1 & 1 & 1 & 1 & 1 & 2 & 1 \\ 4\,
 s+2 & 4\,s+6 & 10 & r+13 & r+12 & 8 & 4 & 2 & 3 & 3 \\ 
\end{pmatrix}.
$$

When $r=s$ we adopt the notation 
\begin{equation}\label{Eq:M(r)}
M(r):=N(r,r).
 \end{equation}
Notice that $N(r,s)$ is the associate matrix to the path 
$$v(r,s)=v^1*v^2(r)*v^3*v^4(s)*v^5,$$
whose length is $48+s+r$. In particular, $v(r,r)$ is the corresponding path of $M(r)$ having length equal to $48+2r$. 


%

As a consequence of Lemmas~\ref{L:permutations} and \ref{L:permutations2} and the {previously} built matrices we immediately obtain:

\begin{proposition} \label{P:ciclo} Let $v=v(r,s)\in\{a,b\}^{48+r+s}$ for some $r,s\in\N$ as defined in (\ref{eqV}) and let $$\pi_0=(-3,-4,-5,-6,-7,-8,-9,10,1,-2).$$ Then:
\begin{enumerate}
 \item $v(\pi_0)=\pi_0$.
 \item $M_{v_1}(\pi_0)\cdot
M_{v_2}(\pi_1)\cdot \ldots \cdot M_{v_{48+r+s}}(\pi_{47+r+s})=N(r,s)$.
\end{enumerate}
\end{proposition}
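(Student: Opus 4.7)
The plan is to prove the two items by combining the data assembled in Lemmas~\ref{L:permutations} and~\ref{L:permutations2} with the five explicit matrix computations carried out immediately before the statement. Both items amount to structured verifications rather than to genuinely new arguments.

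For item (1), I chain three observations. First, Lemma~\ref{L:permutations} shows that the core path $w^0=v(0,0)$ starting from $\pi_0$ visits exactly $\pi_1,\pi_2,\ldots,\pi_{48}$ and returns to $\pi_{48}=\pi_0$. Second, after applying the block $v^1$ (of length $17$) one reaches $\pi_{17}$, and by Lemma~\ref{L:permutations2} this permutation is fixed by $b$; hence the insertion of the block $v^2(r)=(b,\ldots,b)$ between $v^1$ and $v^3$ keeps the running vertex at $\pi_{17}$. Third, after further applying $v^3$ one arrives at $\pi_{37}$, which is fixed by $a$; the insertion of $v^4(s)=(a,\ldots,a)$ between $v^3$ and $v^5$ therefore leaves the vertex unchanged. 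Thus $v(r,s)$ and $w^0$ have the same net effect on $\pi_0$, so $v(r,s)(\pi_0)=\pi_0$.

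For item (2), the associated matrix of a concatenated path is the product of the associated matrices of its pieces, so the total product $M_{v_1}(\pi_0)\cdot M_{v_2}(\pi_1)\cdots M_{v_{48+r+s}}(\pi_{47+r+s})$ factors as $M_1\cdot M_2(r)\cdot M_3\cdot M_4(s)\cdot M_5$. It then suffices to identify each block and carry out the multiplication. The blocks $M_1$, $M_3$ and $M_5$ are obtained by a direct step-by-step application of formula~(\ref{E:matrizAB}) at the permutations listed in Lemma~\ref{L:permutations}, and have already been written out explicitly above. For $M_2(r)$ and $M_4(s)$, the vertex does not move along the subpath (by Lemma~\ref{L:permutations2}), so each factor is constantly equal to $M_b(\pi_{17})$ or $M_a(\pi_{37})$ respectively; plugging $|\pi_{17}|^{-1}(10)=9$ with $\sigma(\pi_{17}(9))=+1$ and $|\pi_{37}|^{-1}(10)=1$ into~(\ref{E:matrizAB}) yields $M_b(\pi_{17})=I_{10}+E_{9,10}$ and $M_a(\pi_{37})=I_{10}+E_{10,1}$. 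Since $E_{9,10}^2=E_{10,1}^2=0$, the binomial theorem collapses to $M_2(r)=I_{10}+rE_{9,10}$ and $M_4(s)=I_{10}+sE_{10,1}$. Multiplying the five blocks then produces $N(r,s)$.

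The main obstacle is purely computational: the final product is a $10\times 10$ matrix multiplication depending on both parameters $r$ and $s$, and the verifications of $M_1$, $M_3$, $M_5$ require tracking the row and column insertions dictated by~(\ref{E:matrizAB}) across $48$ successive vertices. Nothing conceptual is new beyond applying the definitions; the only care needed is the careful bookkeeping of the signs $\sigma(\pi_i(\cdot))$ and the indices $|\pi_i|^{-1}(10)$ at each step, together with the observation that the parametric dependence on $r$ and $s$ enters only through the middle factors $M_2(r)$ and $M_4(s)$.
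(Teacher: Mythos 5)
Your proof is correct and follows the same route the paper takes, namely invoking Lemmas~\ref{L:permutations} and \ref{L:permutations2} to track the vertices (using that $\pi_{16}=\pi_{17}$ is fixed by $b$ and $\pi_{37}$ is fixed by $a$, so inserting $v^2(r)$ and $v^4(s)$ does not move the vertex) and then reading item~(2) off the explicit block factorization $M_1\cdot M_2(r)\cdot M_3\cdot M_4(s)\cdot M_5$ already computed in the text. The paper presents the proposition as an immediate consequence of that preparatory material, and your write-up simply makes the underlying bookkeeping explicit.
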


\subsection{The path} \label{Subs:path}

\linero{The final path that we will follow, $u$, is the  \emph{concatenation} of an initial transition state $v^0$ in the Rauzy graph jointly with $w^1=v(r_1,r_1)$,$w^2=v(r_2,r_2)$, $\dots$, $w^k=v(r_k,r_k)$, $\dots$ for a suitable sequence of naturals $(r_k)_k$. Here, in turn, $v^0$ is meant the following concatenation:} 
$$v^0:=v(10^3,1)*v(10^3,10)*v(10^3,10^2)*v(10^3,10^3)*v(10^{2},10^{4})*v(10^2,10^5)*v(10^3,10^5)*v(10^7,10^7)*v(10^8,10^8),$$ \linero{whose} associate matrix is given by 
\begin{equation}\label{Eq:N0}
N_0:=M_{v^0}(\pi_0)
=N(10^3,1)\cdot N(10^3,10)\cdot N(10^3,10^2)\cdot M(10^3)\cdot N(10^2,10^4) \cdot N(10^2,10^5)\cdot N(10^3,10^{4})\cdot M(10^7)\cdot M(10^8).
\end{equation}

Then, our chosen path is 
\begin{equation}
u=(u_j)_j\in\{a,b\}^{\N},  \quad 
u=v^0*w^1*w^2*\dots *w^k*\dots,\label{E:chosenpath}
\end{equation}
where $w^{{k}}=v(r_{{k}},r_{{k}})$ for a suitable sequence of positive integers $\left(r_{{k}}\right)_{{k}}$ whose choice will be explained later 
in order to hold appropriate properties. 

In the following we must analyze the associate matrix  to the mentioned path. 
From $N_0$ (see (\ref{Eq:N0})) and (\ref{E:chosenpath}), we define 
\begin{eqnarray*}
N_1&:=&N_0\cdot M(r_1)=M_{v^0*w^1}(\pi_{0}),\\ 
N_2&:=&N_0\cdot M(r_1)\cdot M(r_2)=M_{v^0*w^1*w^2}(\pi_{0}),\\ 
& & \dots, \\
N_{k}&:=&N_0\cdot M(r_1)\cdot M(r_2)\cdot \ldots \cdot M(r_k)=M_{v^0*w^1*w^2*\dots*w^k}(\pi_{0}), \ldots
\end{eqnarray*}
Notice that \begin{equation}\label{Eq:Nk}
N_{k}=N_{k-1}\cdot M(r_k) \,\, \mathrm{for\, all\, }  k\geq 1.
\end{equation}

From now on, we deserve the letter $d$ to denote the dimension of the Euclidean space, and we use $n$ for denoting an arbitrary index.  In what follows, we will denote the matrix $N_n$ by 
$$c(n)=\left(\begin{array}{c c c c c c}
c_1(n)&\, c_2(n)&\, \dots&\, c_\ell(n)& \dots&\, c_{10}(n)
\end{array}\right),$$ 
{where,} for any $\ell \in\{1,2,\dots,10\}$ and for any $n\in\N\cup\{0\}$, $c_\ell(n)$ {denotes} the $\ell $-th
 column of the matrix $N_n$. We introduce now some useful notation.  {For} $v,w\in\R^{{d}}$, 
$K\in\R$   we write 
 $\frac{v}{w}$ to denote the vector made of the quotient of the corresponding components. Also we will say 
 $v<w$ when $v_j<w_j$ for any $j\in\{1,2,\dots, {{d}}\}$. 
 Let $V=\{v_i\}_{i=1}^k\subset\R^{{d}}$ then 
 $\max (V):=(m_j)\in\R^d$  with $m_j=\max\{(v_l)_j: 1\leq l\leq k\}$, analogously 
 $\min (V):=(m_j)\in\R^d$ with $m_j=\min\{(v_l)_j: 1\leq l\leq k\}$. 
 In $\mathbb{R}^{{d}}$, we will use the norm $|x|=\left\|x\right\|_0=\max\{\left|x_j\right|:1\leq j\leq {d}\}$ for a vector $x\in\R^{{d}}$.
Moreover sometimes we will need the standard Euclidean norm, and then we will write $\left\|\cdot\right\|_{e}$ to denote it; realize that 
 \begin{equation}\label{E:relnormas}
{\left|\cdot\right|\leq \left\|\cdot\right\|_e\leq \sqrt{d}\left|\cdot\right|.}  
 \end{equation}
 Also
$\left\langle v,w\right\rangle$ denotes the usual inner product and when writing $v<K${, for some $K\in\mathbb R$,}  we mean $|v|<K$.

 \begin{claim}\label{C:multiplicacionmatrices}
  Let $A=(a_{i,j}),B=(b_{i,j})$ be ${d}\times {d}$ real matrices and let $C=AB$. Denote by 
  $a_i$ and $c_i$ the $i$-th column of the matrices $A$ and $C$ respectively, $1\leq i\leq {d}$. Then $c_i=\sum_{j=1}^{{d}} a_jb_{j,i}.$
 \end{claim}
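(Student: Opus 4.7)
The claim is a standard fact of elementary linear algebra: the $i$-th column of the product $AB$ is a linear combination of the columns of $A$, with coefficients given by the $i$-th column of $B$. I would treat it as a direct computation, with no substantive obstacle beyond bookkeeping of indices.

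My plan is to fix $i\in\{1,\dots,d\}$ and simply compare the $k$-th entry of both sides for each $k\in\{1,\dots,d\}$. By the definition of matrix multiplication, the entry of $C=AB$ in row $k$ and column $i$ is
\[
c_{k,i}=\sum_{j=1}^{d}a_{k,j}\,b_{j,i}.
\]
Since $a_{k,j}$ is, by definition, the $k$-th component of the column vector $a_j$, the right-hand side above is precisely the $k$-th component of the vector $\sum_{j=1}^{d} a_j\, b_{j,i}$. As $k$ was arbitrary, this equality of coordinates lifts to the vector identity
\[
c_i=\sum_{j=1}^{d} a_j\, b_{j,i},
\]
which is the claim.

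Since the statement is essentially a reformulation of the definition of matrix multiplication, there is no genuine difficulty; the only point worth being careful about is to keep the roles of row and column indices straight (the coefficient $b_{j,i}$ comes from row $j$, column $i$ of $B$, while $a_j$ is the $j$-th column of $A$). I would present the argument in two or three lines and move on, since the claim will be used repeatedly in the sequel to analyze how the columns of the matrices $N_k$ evolve under multiplication by the blocks $M(r_k)$.
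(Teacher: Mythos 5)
Your proof is correct and is the standard, direct computation: compare the $k$-th entry of $c_i$ with the $k$-th entry of $\sum_j a_j b_{j,i}$ using the definition of matrix multiplication. The paper states this claim without proof, regarding it as elementary, so there is no alternative argument to compare against; your write-up is exactly what the authors would have supplied had they included one.
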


\begin{claim}\label{C:relationship}
 According to~(\ref{Eq:M(r)})-(\ref{Eq:Nk}) and taking into account Claim~\ref{C:multiplicacionmatrices}, the relationships among the columns of $N_{n+1}$ and $N_n$  are given by:
 \begin{eqnarray*}
  c_1(n+1)&=&2c_1(n)+2\rnMuno c_2(n)+\rnMuno c_6(n)+(\rnMuno +1)c_7(n)+c_9(n)+(4\rnMuno +2)c_{10}(n){,} \\
  c_2(n+1)&=&2c_1(n)+(2\rnMuno +2)c_2(n)+(\rnMuno +1)c_6(n)+(\rnMuno +2)c_7(n)+c_9(n)+(4\rnMuno +6)c_{10}(n){,}\\
  c_3(n+1)&=&2c_1(n)+2c_2(n)+2c_3(n)+c_5(n)+2c_6(n)+2c_7(n)+c_9(n)+10c_{10}(n){,}\\
  c_4(n+1)&=&2c_1(n)+2c_2(n)+2c_3(n)+(\rnMuno +2)c_4(n)+2c_5(n)+2c_6(n)+2c_7(n)+c_9(n)+(\rnMuno +13)c_{10}(n){,}\\
  c_5(n+1)&=&2c_1(n)+2c_2(n)+2c_3(n)+(\rnMuno +1)c_4(n)+2c_5(n)+2c_6(n)+2c_7(n)+c_9(n)+(\rnMuno +12)c_{10}(n){,}\\
  c_6(n+1)&=&2c_1(n)+2c_2(n)+c_3(n)+2c_6(n)+2c_7(n)+c_9(n)+8c_{10}(n){,}\\
  c_7(n+1)&=&2c_1(n)+c_2(n)+2c_7(n)+c_9(n)+4c_{10}(n){,}\\
  c_8(n+1)&=&2c_1(n)+c_8(n)+c_9(n)+2c_{10}(n){,}\\
  c_9(n+1)&=&3c_1(n)+c_8(n)+2c_9(n)+3c_{10}(n){,}\\
  c_{10}(n+1)&=&2c_1(n)+c_8(n)+c_9(n)+3c_{10}(n){.}\\
 \end{eqnarray*}

\end{claim}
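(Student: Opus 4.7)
The plan is to apply Claim~\ref{C:multiplicacionmatrices} directly to the identity $N_{n+1}=N_n\cdot M(r_{n+1})$ provided by~(\ref{Eq:Nk}). By that claim, if we denote the entries of $M(r_{n+1})$ by $m_{j,i}$, then for every $i\in\{1,\dots,10\}$
\[
c_i(n+1)=\sum_{j=1}^{10} m_{j,i}\, c_j(n),
\]
so each formula in the statement is just the explicit expansion of one column of $M(r_{n+1})$, weighted against the columns $c_1(n),\ldots,c_{10}(n)$.

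First, I would substitute $r=r_{n+1}$ into the explicit form of $M(r)=N(r,r)$ computed earlier (see~(\ref{Eq:M(r)}) together with the displayed matrix for $N(r,s)$). This gives the concrete $10\times 10$ matrix whose columns will serve as the coefficient vectors. Then, for each $i=1,2,\dots,10$, I would simply read off the $i$-th column of $M(r_{n+1})$ from top to bottom and multiply entry-by-entry against $c_1(n),\ldots,c_{10}(n)$, discarding the null contributions and collecting like terms. For instance, the first column of $M(r_{n+1})$ is $(2,\,2r_{n+1},\,0,\,0,\,0,\,r_{n+1},\,r_{n+1}+1,\,0,\,1,\,4r_{n+1}+2)^t$, which immediately produces the stated expression for $c_1(n+1)$; the other nine identities arise in the same way from columns $2$ through $10$ of $M(r_{n+1})$.

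The only possible obstacle is the bookkeeping: because $M(r)$ has $r$-dependent entries in columns $1$, $2$, $4$, $5$, and $10$, one has to be careful to track those factors correctly, and to confirm the zeros in the relevant rows of $M(r)$ so that columns $c_3(n)$, $c_4(n)$, $c_5(n)$, and $c_8(n)$ are absent (or present only with the constant coefficients indicated) in each formula. Since the matrix $M(r_{n+1})$ has already been computed explicitly and the underlying algebra is just the definition of matrix multiplication, no further ingredients are required, and the verification reduces to a transcription check carried out ten times.
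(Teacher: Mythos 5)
Your proposal is correct and is exactly the intended argument: the paper itself gives no separate proof of Claim~\ref{C:relationship}, merely pointing to $N_{n+1}=N_n\cdot M(r_{n+1})$ from~(\ref{Eq:Nk}), the displayed form of $N(r,s)$ specialized to $r=s=r_{n+1}$, and the column-expansion rule of Claim~\ref{C:multiplicacionmatrices}. Your reading of the first column of $M(r_{n+1})$ as $(2,2r_{n+1},0,0,0,r_{n+1},r_{n+1}+1,0,1,4r_{n+1}+2)^t$ is accurate, and the remaining nine identities indeed come from columns $2$ through $10$ of $M(r_{n+1})$ by the same bookkeeping.
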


\section{Relationships on the matrices associated to the path}\label{S:matricespath}

Our first result shows some useful properties of $c(0)=N_0$. Given $1\leq i,j\leq 10$, in general for any $n\geq 0$, by $\alpha_{i,j}(n)$ we will denote the angle between the column vectors $c_i(n)$ and $c_j(n)$. 
\begin{lemma}\label{L:valoresn=1} The initial matrix 
$$c(0)=N_0=N(10^3,1)\cdot N(10^3,10)\cdot N(10^3,10^2)\cdot M(10^3)\cdot N(10^2,10^4) \cdot N(10^2,10^5)\cdot N(10^3,10^{5})\cdot M(10^7) \cdot M(10^8)$$ 
verifies the following properties:  
 \begin{enumerate}
  \item $\alpha_{2,4}(0)\approx 0.613150240$ radians $\approx  35.130920977$ sexagesimal degrees.   
  \item $8.8\cdot 10^{-6}<\frac{c_3(0)}{c_4(0)}< 0.00256$.
  \item $c_3(0)<\min\{c_1(0),c_2(0),c_4(0),c_5(0)\}.$
  \item $c_2(0)>c_1(0)>c_4(0)>c_5(0)>\max\{c_3(0),c_6(0),c_7(0),c_8(0),c_9(0),c_{10}(0)\}$. In fact, 
	$$c_2(0)>c_1(0)>c_4(0)>c_5(0)> c_3(0)>c_6(0)>c_7(0)>c_9(0)>c_{10}(0)>c_{8}(0).$$
 \end{enumerate}
\end{lemma}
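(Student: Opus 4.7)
The statement is a purely computational verification about a single explicit matrix $N_0$: there are no asymptotic parameters, no limits, and the seven matrix factors are given by closed-form expressions in the formula for $N(r,s)$ derived just before the lemma. My plan is therefore to treat this as an explicit finite calculation, carried out with exact integer arithmetic, and then to translate the resulting $10\times 10$ integer matrix into the four assertions.

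The first step is to instantiate each factor. Using the formula
$$N(r,s)=\begin{pmatrix} 2&2&2&2&2&2&2&2&3&2\\ 2s&2s+2&2&2&2&2&1&0&0&0\\ 0&0&2&2&2&1&0&0&0&0\\ 0&0&0&r+2&r+1&0&0&0&0&0\\ 0&0&1&2&2&0&0&0&0&0\\ s&s+1&2&2&2&2&0&0&0&0\\ s+1&s+2&2&2&2&2&2&0&0&0\\ 0&0&0&0&0&0&0&1&1&1\\ 1&1&1&1&1&1&1&1&2&1\\ 4s+2&4s+6&10&r+13&r+12&8&4&2&3&3\end{pmatrix}$$
and the specialization $M(r)=N(r,r)$, one writes down the nine integer matrices $N(10^3,1)$, $N(10^3,10)$, $N(10^3,10^2)$, $M(10^3)$, $N(10^2,10^4)$, $N(10^2,10^5)$, $N(10^3,10^5)$, $M(10^7)$, $M(10^8)$ explicitly. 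All entries are nonnegative integers, so their product $N_0$ is again an integer matrix; the entries grow roughly like $10^{40}$ in magnitude, but computing them exactly is straightforward by successive matrix multiplication.

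Once $N_0$ is in hand, the four items are verified mechanically. For (1), I compute $\cos\alpha_{2,4}(0)=\langle c_2(0),c_4(0)\rangle/(\|c_2(0)\|_e\|c_4(0)\|_e)$ from the integer data and take the arccos, checking the resulting value against the stated approximation; since the approximation is only stated to ten digits, bounding the rational $\cos\alpha_{2,4}(0)$ between two nearby rationals suffices. For (2), I read off componentwise the vector $c_3(0)/c_4(0)$ (ten rational ratios) and check that its maximum (under the $\|\cdot\|_0$-norm convention introduced in the paper) lies in the stated interval. For (3) and (4), I simply inspect the ten components of the relevant columns of $N_0$: every required inequality is a comparison of two specific integers, and each ordering stated in (4) must hold in every coordinate.

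The main obstacle is purely bookkeeping: the factors have entries as large as $4\cdot 10^8+6$, so the intermediate products have entries with tens of decimal digits, and none of the comparisons in (3)-(4) can be avoided by a clever estimate — they really are componentwise vector inequalities between ten-dimensional vectors of huge integers. The computation is therefore best carried out with a computer algebra system using exact arithmetic and the result displayed (or at least enough of it) so that the reader can read off (1)-(4) directly. No conceptual difficulty arises; the role of the lemma is only to record the concrete numerical starting data that feeds the inductive analysis of later sections, so the proof can legitimately consist of pointing out that each item is a finite rational verification and reporting the output of the exact computation.
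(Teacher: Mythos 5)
Your plan is exactly what the paper does: compute the integer matrix $N_0$ with exact arithmetic (the authors use Maxima), display its columns, and then verify items (1)--(4) by direct inspection, computing $\cos\alpha_{2,4}(0)$ as a rational number and taking $\arccos$. One small wording slip in your treatment of item (2): the paper's convention makes $8.8\cdot10^{-6}<\frac{c_3(0)}{c_4(0)}<0.00256$ a two-sided componentwise claim, so you must check that the \emph{minimum} of the ten ratios exceeds the lower bound and the \emph{maximum} stays below the upper bound, not merely that the maximum falls in the interval.
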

\begin{proof}
Using a mathematical software, for instance \emph{Maxima}, the columns of the matrix $N_0$  
are tabulated as follows:

 %
 %
%

\newcommand{\tabulavectorgrande}[1]{
\noindent\begin{flushleft}
\begin{minipage}{\textwidth}
{\tiny 							
$\begin{aligned}
#1
\end{aligned}
$
}
\end{minipage}
\end{flushleft}
}

\tabulavectorgrande{
c_1(0)&=&( 336229277950011717660363178542095919820, 201967636924785649494795541341599531256, 17589953114807902467073230308638170012,\\
& & 4904834498292079642252074333015146057,673183649762255347016600641516654502,108427847586847325538107333139460307226,\\
& & 189066665339392901941522713202016181756,86784985800059461041522755435882101903,170839083866711572727387924392756316627,\\
& & 858859399063899832091130671508341910628)
}
\tabulavectorgrande{
c_2(0)&=& (336229281312304463537560461534387268964,201967638944461998545893090051538616554,17589953290707431856157356862572763364,\\
& & 4904834547340424134689536190075075729,673183656494091777320804955276717798,108427848671125790563798377890799761165,\\
& & 189066667230059536428789641745600455164,86784986667909310363620825292794324633,170839085575102394310599279714863485548,\\
& & 858859407652493736844209269230250658737)
}
\tabulavectorgrande{
c_3(0)&=&(6724612541543264956966751301682,4039368901385059344525956263782,351800671037916200638568783464,\\
& & 98207197706162797133942935065,13463946147438001128999534448,2168565730929434454750015698994,\\
& & 3781348451471746347423755096979,1735706677027050139353864490993,3416795383741574444420404374640,\\
& & 17177367440221022102567989614413)
}
\tabulavectorgrande{c_4(0)&=&(2636051966399736281453632722746282,1578924969986525211963142074097802,157620037474770709310050412223976,\\
& & 11049865976008017853785535419425061,27190679154735633279369625993803,857860122243592887089494724048810,\\
& & 1479491100761897064155676215897294,680047384327032904778733810129953,1339035544331075575245509796617848,\\
& & 17787004852029275641036901194433027)
}
\tabulavectorgrande{
c_5(0)&=&( 2636051940106463531669859262374434,1578924954237669673577391253586336,157620035902088388453190560538887,\\
& & 11049865865510343480612539214458009,27190678882963489346651720430674,857860113686677578670311056006734,\\
& & 1479491086004799983764320976064204,680047377543916331772165800996793,1339035530974888486457769352316008,\\
& & 17787004674331006126094264043459277)
}
\tabulavectorgrande{c_6(0)&=&(6724585743926472996397847612526,4039352849577990339359848868979,351799071970659099150771547692,\\
&  & 98096692674524976053834794395, 13463673365518320957593200270,2168557011372314256042313207270,\\
& & 3781333410774597518878804178730,1735699763732966103997409362065,3416781771295796216168493281808,\\
& & 17177188453662077506841013409861)
}
\tabulavectorgrande{c_7(0)&=&(5043438320268470857714120596625,3029514043901832763157527773346,263849252307341555920443013596,\\
& & 73572505089616924687763638299,10097753046639519292302676590,1626417440021498832982334488332,\\
& & 2835999502696001394247010397680,1301774567868085712895794679869,2562585826630575275650498329534,\\
& & 12882888817331401848001502584467)
                   }
\tabulavectorgrande{
c_8(0)&=&( 362291905303653610244790047886,2019675844131795723935335346578,175899485414183072428641652150,\\
& & 49048332230335163458058741433,6731834747344359740974171084,1084278193955959588943286872674,\\
& & 1890666161820407448765023586346,867849632359543540999822440384,1708390394485324463746603092522,\\
& & 8588591757603678967516110418151)
}
\tabulavectorgrande{c_9(0)&=&( 5043437899983169979222189979768,3029513791443065429713378397494,263849230319967992643489472546,\\
& & 73572498958592878225045696659,10097752205162553326997484020,1626417304487107466321173135306,\\
& & 2835999266363398794161600789578,1301774459387188120931320227721,2562585613082379227704206267786,\\
& & 12882887743760465093613874955371)
}
\tabulavectorgrande{c_{10}(0)&=&(3362292073414832120494672496670,2019675945113535539640918622302,175899494208978594209301791322,\\
& & 49048334682701825681593837472,6731835083929256018066697278,1084278248168767445132915746674,\\
& & 1890666256351794248289714412662,867849675751143251758826240245,1708390479903108124797461126868,\\
& & 8588592187024539029467183604381)
}

To see at first glance the magnitude of matrix $N_0$, we write it rounding all positions to two decimals in the mantissa:

{\tiny
\begin{equation}\label{E:c0}
\begin{pmatrix} 3.36  \cdot 10^{38} & 3.36\cdot 10^{38} &   6.72 \cdot 10^{30} & 2.64 \cdot 10^{33} & 2.64 \cdot 10^{33} &  
6.72 \cdot 10^{30} & 5.04\cdot 10^{30}& 3.36\cdot 10^{30} & 5.04\cdot 10^{30} & 3.36\cdot 10^{30} \\ 
	2.02  \cdot 10^{38} 
  &  2.02  \cdot 10^{38} & 4.04 \cdot 10^{30} & 1.58 \cdot 
 10^{33} &  1.58 \cdot 10^{33}  & 4.04 \cdot 10^{30} &  3.03 \cdot 10^{30} &
2.02  \cdot 10^{30} & 3.03  \cdot 10^{30} & 2.02
   \cdot 10^{30} \\ 
	1.76 \cdot 10^{37} &  1.76 \cdot 10^{37} & 3.52\cdot 10^{29} & 1.58  \cdot 10^{32} &1.58 \cdot 10^{32} & 3.52
   \cdot 10^{29} & 2.64  \cdot 10^{29} & 1.76 \cdot 10^{29} & 2.64
   \cdot 10^{29} & 1.76  \cdot 10^{29} \\ 	
 4.90\cdot 10^{36}&   4.90  \cdot 10^{36} & 9.82 \cdot 10^{28} &  1.10 \cdot 10^{34} & 1.10
   \cdot 10^{34} & 9.81 \cdot 10^{28} & 7.36  \cdot 10^{28} & 4.90
   \cdot 10^{28} & 7.36  \cdot 10^{28} & 4.90\cdot 10^{28}
  \\ 
6.73 \cdot 10^{35} 	& 6.73  \cdot 10^{35}  & 1.35  \cdot 10^{28}
  & 2.72  \cdot 10^{31} & 2.72  \cdot 10^{31} & 1.35 \cdot 10^{28}
  & 1.01  \cdot 10^{28} & 6.73  \cdot 10^{27} & 1.01  \cdot 10^{28}
  & 6.73 \cdot 10^{27} \\ 
1.08 \cdot 10^{38} & 1.08   \cdot 10^{38} & 2.17  \cdot 10^{30} & 8.58  \cdot 10^{32} & 8.58 
 \cdot 10^{32} & 2.17  \cdot 10^{30} & 1.63  \cdot 10^{30} & 1.08
   \cdot 10^{30} & 1.63 \cdot 10^{30} & 1.08 \cdot 10^{30}
  \\  	
 1.89\cdot 10^{38}&  1.89 \cdot 10^{38} & 3.78 \cdot 
 10^{30} & 1.48 \cdot 10^{33} & 1.48  \cdot 10^{33} & 3.78
 \cdot 10^{30} & 2.84 \cdot 10^{30} & 1.89  \cdot 10^{30} & 2.84 
   \cdot 10^{30} &1.89 \cdot 10^{30} \\ 	
8.68  \cdot 10^{37}	 & 8.68  \cdot 10^{37}	& 1.74  \cdot 10^{30} & 6.80  \cdot 10^{32} & 6.80
   \cdot 10^{32} & 1.74  \cdot 10^{30} & 1.30  \cdot 10^{30} & 8.68
   \cdot 10^{29} & 1.30 \cdot 10^{30} & 8.68\cdot 10^{29} \\ 
	1.71
   \cdot 10^{38}&  1.71\cdot 10^{38} & 3.42 \cdot 10^{30} &1.34
   \cdot 10^{33} & 1.34\cdot 10^{33} & 3.42  \cdot 10^{30} & 2.56
   \cdot 10^{30} & 1.71 \cdot 10^{30} & 2.56  \cdot 10^{30} & 1.71
   \cdot 10^{30} \\
		8.59\cdot 10^{38} & 8.59 \cdot 10^{38}
  & 1.72  \cdot 10^{31} & 1.78  \cdot 10^{34} & 1.78\cdot 
 10^{34} & 1.72 \cdot 10^{31} & 1.29 \cdot 10^{31} & 8.59
 \cdot 10^{30} & 1.29 \cdot 10^{31} & 8.59\cdot 10^{30} \\ 
 \end{pmatrix} 
\end{equation}

}

Concerning the angle $\alpha_{2,4}(0)$ between the columns $c_2(0)$ and $c_4(0)$, we find 
{\tiny
\begin{eqnarray*}
& & \cos(\alpha_{2,4}(0))=\frac{\left\langle c_2(0), c_4(0)\right\rangle}{\left\|c_2(0)\right\|_e \left\|c_4(0)\right\|_e}\\
&=&\frac{  17199250545610936768824152621451389334681231597441086451563040915278861401 }{
4\sqrt{ 113281183918496857424656165570234490385231936299604045194111322953603 }}\\
& & \times\frac{1}{\sqrt{ 244008721509185038750928385963528788972842561456599203452853614664788080554809}}\\
&=&   0.8178392835711894...,
\end{eqnarray*}
}
hence $\alpha_{2,4}(0)\approx  0.6131502403070084$ radians, or $\alpha_{2,4}(0)\approx   35.13092097702379$ degrees. This proves Part (1). The other inequalities 
are easily obtained from the values of the columns $c_j(0)$ of $N_0$.
\end{proof}

Using the relationships given by Claim~\ref{C:relationship} we obtain:

\begin{lemma} \label{C:ordencolumnas}
Let $(r_n)_{n\geq 1}$ be an increasing sequence, with $r_n>8$. For any $n\in\N\cup\{0\}$ it holds:
$$\cdosn>\cunon> \ccuatron>\ccincon>\max\{c_j(n)\}_{j\in\{3,6,7,8,9,10\}},$$
$$\cdosn>\cunon>\ccuatron>\ccincon>\ctresn>\cseisn>\csieten>\cnueven>\cdiezn>\cochon.$$
\end{lemma}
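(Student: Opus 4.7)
The plan is to argue by induction on $n$. The base case $n=0$ is Lemma~\ref{L:valoresn=1}(4). For the inductive step, I will verify each consecutive inequality in the chain
\[
\cdosn>\cunon>\ccuatron>\ccincon>\ctresn>\cseisn>\csieten>\cnueven>\cdiezn>\cochon
\]
at level $n+1$ by subtracting the corresponding formulas given in Claim~\ref{C:relationship} and showing the difference is componentwise positive, using the hypothesis that the chain holds at level $n$.

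Most of the nine consecutive differences turn out to be non-negative linear combinations of the $c_j(n)$ with at least one strictly positive coefficient, so they are immediate. Explicitly, one computes
\[
c_2(n+1)-c_1(n+1)=2c_2(n)+c_6(n)+c_7(n)+4c_{10}(n),
\]
\[
c_4(n+1)-c_5(n+1)=c_4(n)+c_{10}(n),\qquad c_5(n+1)-c_3(n+1)=(r_{n+1}+1)c_4(n)+c_5(n)+(r_{n+1}+2)c_{10}(n),
\]
\[
c_3(n+1)-c_6(n+1)=c_3(n)+c_5(n)+2c_{10}(n),\qquad c_6(n+1)-c_7(n+1)=c_2(n)+c_3(n)+2c_6(n)+4c_{10}(n),
\]
\[
c_9(n+1)-c_{10}(n+1)=c_1(n)+c_9(n),\qquad c_{10}(n+1)-c_8(n+1)=c_{10}(n),
\]
all of which are manifestly positive.

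Two comparisons need the induction hypothesis to absorb negative terms. For $c_1(n+1)>c_4(n+1)$ the difference equals
\[
(2r_{n+1}-2)c_2(n)-2c_3(n)-(r_{n+1}+2)c_4(n)-2c_5(n)+(r_{n+1}-2)c_6(n)+(r_{n+1}-1)c_7(n)+(3r_{n+1}-11)c_{10}(n).
\]
Using the inductive hypothesis $c_2(n)>\max\{c_3(n),c_4(n),c_5(n)\}$, the negative contribution is bounded above by $(r_{n+1}+6)c_2(n)$, and the hypothesis $r_{n+1}>8$ yields $(2r_{n+1}-2)>(r_{n+1}+6)$; the positive coefficients of $c_6(n),c_7(n),c_{10}(n)$ then make the whole expression strictly positive componentwise. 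For $c_7(n+1)>c_9(n+1)$ the difference is
\[
-c_1(n)+c_2(n)+2c_7(n)-c_8(n)-c_9(n)+c_{10}(n),
\]
which is positive by regrouping as $(c_2(n)-c_1(n))+(c_7(n)-c_9(n))+c_7(n)+(c_{10}(n)-c_8(n))$, each summand being positive by the inductive hypothesis.

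This closes the induction and establishes the total chain, which in particular implies $\cdosn>\cunon>\ccuatron>\ccincon>\max\{c_j(n)\}_{j\in\{3,6,7,8,9,10\}}$ for every $n\ge 0$. The main obstacle is the comparison $c_1(n+1)>c_4(n+1)$, where one must exploit the dominance of $c_2(n)$ together with the lower bound $r_{n+1}>8$; every other step is a direct non-negativity check.
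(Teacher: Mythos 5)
Your proposal is correct and follows essentially the same argument as the paper: base case from Lemma~\ref{L:valoresn=1}, then the easy differences are nonnegative combinations of columns of $c(n)$, and the two non-obvious inequalities $c_1>c_4$ and $c_7>c_9$ are handled by absorbing the negative terms with the inductive hypothesis and the lower bound $r_{n+1}>8$. The only cosmetic difference is in the handling of $c_7(n+1)>c_9(n+1)$: you regroup the difference as $(c_2-c_1)+(c_7-c_9)+c_7+(c_{10}-c_8)$, whereas the paper bounds $-c_8-c_9>-2c_7$ and concludes with $(c_2-c_1)+c_{10}>0$ — the same inductive facts, a slightly different rearrangement.
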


\begin{proof}
Case $n=0$ is immediate from the values of $c(0)$ presented in Lemma~\ref{L:valoresn=1}.   

By simply inspecting the relationships given in Claim~\ref{C:relationship}, we obtain
 $c_2(n)>c_1(n)$, $c_4(n)>c_5(n)$, $c_5(n)>c_3(n)$, $c_3(n)>c_6(n)>c_7(n)$, $c_9(n)>c_{10}(n)>c_{8}(n)$ for all $n\geq 0$.  

It only remains to prove that $c_1(n)>c_4(n)$ and $c_7(n)>c_9(n)$ for $n\geq 1$. We apply induction, by assuming that the hypothesis 
of the statement are true for $m<n$. Then,  
\begin{eqnarray*}
  c_7(n)-c_9(n)&=&-\cunonmuno+\cdosnmuno+2\csietenmuno-\cochonmuno-\cnuevenmuno+\cdieznmuno\\
  &>&[\cdosnmuno-\cunonmuno]+2\csietenmuno-\csietenmuno-\csietenmuno+\cdieznmuno
  \\
  &=&[\cdosnmuno-\cunonmuno]+\cdieznmuno>0;
 \end{eqnarray*}
 
\begin{eqnarray*}
  \cunon-\ccuatron&=&(2\rn-2)\cdosnmuno-2\ctresnmuno-(\rn+2)\ccuatronmuno-2\ccinconmuno\\
	& & +(\rn-2)\cseisnmuno+(\rn-1)\csietenmuno+(3\rn-11)\cdieznmuno
  \\
  &>&(2\rn-2)\cdosnmuno-2\cdosnmuno-(\rn+2)\cdosnmuno-2\cdosnmuno\\
	& & +(\rn-2)\cseisnmuno+(\rn-1)\csietenmuno+(3\rn-11)\cdieznmuno\\
  &=& (\rn-8)\cdosnmuno+(\rn-2)\cseisnmuno+
  (\rn-1)\csietenmuno+(3\rn-11)\cdieznmuno
  >0{,}
 \end{eqnarray*}
if $\rn>8$.
\end{proof}

\subsection{Relationship between $\cdosn$ and $\cdosnMuno$}
 
Our interest in this subsection is to prove Theorem~\ref{T:c2}, in which we will give an estimate of the ratio $\frac{\cdosnMuno}{\cdosn}$  
in terms of appropriate sequences $(r_n)$ and $(p_n)$. 
We introduce some technical lemmas before. The proof of the first one is immediate and we omit its proof.

\begin{lemma}\label{L:herigon}
 Let $a,b,c,d,p,q$ be positive real numbers such that $\frac{a}{b}<\frac{c}{d}$. Then:
 \begin{enumerate}
  \item $\frac{a}{b}<\frac{a+c}{b+d}<\frac{c}{d}$.
  \item If $\frac{a}b>\frac{p}q$ and $\frac{c}d>\frac{p}q$ then $\frac{a+c}{b+d}>\frac{p}q$ (the same is true reversing the inequalities).
 \end{enumerate}

\end{lemma}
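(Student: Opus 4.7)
The plan is to prove both items by the standard cross-multiplication trick, exploiting that all denominators involved are positive so inequalities between fractions are equivalent to linear inequalities between cross products.

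For item (1), I would start from the hypothesis $\frac{a}{b}<\frac{c}{d}$, which, since $b,d>0$, is equivalent to $ad<bc$. Adding $ab$ to both sides of $ad<bc$ gives $a(b+d)<b(a+c)$, and dividing by the positive quantity $b(b+d)$ yields $\frac{a}{b}<\frac{a+c}{b+d}$. For the right-hand inequality, I would instead add $cd$ to both sides of $ad<bc$, producing $d(a+c)<c(b+d)$, whence $\frac{a+c}{b+d}<\frac{c}{d}$ after dividing by $d(b+d)>0$.

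For item (2), the hypotheses $\frac{a}{b}>\frac{p}{q}$ and $\frac{c}{d}>\frac{p}{q}$ translate respectively to $aq>bp$ and $cq>dp$ (again using positivity of $b,d,q$). Summing these two inequalities gives $(a+c)q>(b+d)p$, and dividing by $q(b+d)>0$ yields the desired $\frac{a+c}{b+d}>\frac{p}{q}$. The parenthetical \emph{same is true reversing the inequalities} follows by the symmetric argument: if $\frac{a}{b}<\frac{p}{q}$ and $\frac{c}{d}<\frac{p}{q}$, then $aq<bp$ and $cq<dp$, whose sum gives $\frac{a+c}{b+d}<\frac{p}{q}$.

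I do not anticipate any real obstacle here; the lemma is the classical \emph{mediant inequality}, and the only thing worth being careful about is to invoke positivity of $b$, $d$, $q$ consistently when crossing denominators. One could alternatively package both items in a unified way by writing $\frac{a+c}{b+d}$ as the convex combination $\frac{b}{b+d}\cdot\frac{a}{b}+\frac{d}{b+d}\cdot\frac{c}{d}$ with weights in $(0,1)$, from which both the two-sided bound in (1) and the preservation-of-lower-bound in (2) fall out immediately; but the cross-multiplication route above is cleaner and matches the elementary style the authors use elsewhere.
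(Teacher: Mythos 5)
Your proof is correct. The paper itself omits the proof of this lemma, stating only that it "is immediate," so there is no authorial argument to compare against; your cross-multiplication verification (together with the convex-combination remark, which is the usual way to see the mediant inequality at a glance) is exactly the kind of routine argument the authors had in mind when they declared the statement immediate.
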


For the second lemma, recall that $r_n>0$ for all $n\in\mathbb N$ and that $\frac{c_i(n)}{c_j(n)}$ is meant a componentwise division 
of $i$-th and $j$-th columns of $N_n$.  Also, for $K\in\mathbb R,$ $\frac{c_i(n)}{c_j(n)}\geq K$ is used to indicate that any  element of the componentwise division is greater than or equal to $K$. Notice that all the matrices $N_n$, and consequently all their columns $c_j(n)$, are positive. 

\begin{lemma}\label{L:cocientec2c6}
 For any $n\in\N\cup\{0\}$ we have:  
 $$\frac{\cdosnMuno}{\cseisnMuno}\geq \frac{\rnMuno +1}3{.}$$
\end{lemma}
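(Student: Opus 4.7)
The plan is to compute the difference $D(n+1):=3\,c_2(n+1)-(r_{n+1}+1)\,c_6(n+1)$ explicitly as an integer linear combination of the columns $c_1(n),\ldots,c_{10}(n)$ by plugging in the recursions from Claim~\ref{C:relationship}, and then use the componentwise ordering of Lemma~\ref{C:ordencolumnas} to rewrite $D(n+1)$ as a non-negative combination of positive vectors.

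First I would carry out the substitution. Multiplying the formula for $c_2(n+1)$ by $3$ and the formula for $c_6(n+1)$ by $r_{n+1}+1$ and subtracting, a routine bookkeeping gives
\begin{align*}
D(n+1)\;=\;&(4-2r_{n+1})\,c_1(n)+(4r_{n+1}+4)\,c_2(n)-(r_{n+1}+1)\,c_3(n)\\
&+(r_{n+1}+1)\,c_6(n)+(r_{n+1}+4)\,c_7(n)+(2-r_{n+1})\,c_9(n)+(4r_{n+1}+10)\,c_{10}(n).
\end{align*}
The only potentially negative coefficients are those on $c_1(n),\, c_3(n),\, c_9(n)$ (when $r_{n+1}$ is large), so the task reduces to absorbing these against the large positive contributions from $c_2(n)$ and $c_7(n)$.

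Next I would regroup. By Lemma~\ref{C:ordencolumnas} one has (componentwise) $c_2(n)\geq c_1(n)$, $c_2(n)\geq c_3(n)$, and $c_7(n)\geq c_9(n)$. The identity $4r_{n+1}+4=(2r_{n+1}-4)+(r_{n+1}+1)+(r_{n+1}+7)$ lets me split
\[
(4r_{n+1}+4)\,c_2(n)\;\geq\;(2r_{n+1}-4)\,c_1(n)+(r_{n+1}+1)\,c_3(n)+(r_{n+1}+7)\,c_2(n),
\]
so that the $c_1$ and $c_3$ contributions with negative signs are eliminated, leaving a positive residue $(r_{n+1}+7)\,c_2(n)$ (for the small-$r_{n+1}$ case where $2r_{n+1}-4<0$, the splitting is even easier). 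Similarly, from $r_{n+1}+4=(r_{n+1}-2)+6$ I get
\[
(r_{n+1}+4)\,c_7(n)+(2-r_{n+1})\,c_9(n)\;\geq\;6\,c_7(n)\;\geq\;0.
\]
Assembling these estimates yields
\[
D(n+1)\;\geq\;(r_{n+1}+7)\,c_2(n)+(r_{n+1}+1)\,c_6(n)+6\,c_7(n)+(4r_{n+1}+10)\,c_{10}(n)\;>\;0,
\]
componentwise, which is the claimed inequality.

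The only real obstacle is verifying the arithmetic identity that allows the regrouping, namely that the coefficient of $c_2(n)$ is large enough to simultaneously absorb the negative contributions of $c_1(n)$ and $c_3(n)$ (i.e.\ $4r_{n+1}+4\geq (2r_{n+1}-4)^{+}+(r_{n+1}+1)$, which holds for every $r_{n+1}\geq 1$), and analogously for $c_7(n)$ against $c_9(n)$; once these elementary inequalities are checked, the column ordering of Lemma~\ref{C:ordencolumnas} does all the work, and no induction is needed since the estimate $D(n+1)\geq 0$ depends only on the (already known) inequalities among the columns of $N_n$.
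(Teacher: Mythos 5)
Your proof is correct, and it takes a genuinely different route from the paper's. The paper proves the ratio inequality by first replacing the numerator $c_2(n+1)$ with a smaller sum and the denominator $c_6(n+1)$ with a larger sum (both via Lemma~\ref{C:ordencolumnas}), and then applying the mediant-inequality Lemma~\ref{L:herigon} twice: once to the pair of terms in $c_7(n)$ and $c_{10}(n)$, once to the pair in $c_2(n)$ and $c_6(n)$, and a final time to combine. You instead bypass Lemma~\ref{L:herigon} altogether: you clear denominators, compute $D(n+1)=3c_2(n+1)-(r_{n+1}+1)c_6(n+1)$ explicitly as a linear combination of the $c_i(n)$, and then use only the column ordering from Lemma~\ref{C:ordencolumnas} to absorb the few negative coefficients (on $c_1$, $c_3$, $c_9$) into the large positive contributions from $c_2$ and $c_7$. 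Your calculation of $D(n+1)$ is correct, the regrouping identities ($4r_{n+1}+4=(2r_{n+1}-4)+(r_{n+1}+1)+(r_{n+1}+7)$ and $r_{n+1}+4=(r_{n+1}-2)+6$) are valid, and the needed componentwise inequalities $c_2(n)>c_1(n)$, $c_2(n)>c_3(n)$, $c_7(n)>c_9(n)$ are all in Lemma~\ref{C:ordencolumnas}. What your approach buys is transparency — the sign bookkeeping is fully explicit and there is no reliance on an auxiliary mediant lemma; what the paper's approach buys is uniformity with the style of the surrounding estimates (Lemma~\ref{L:cocientec34}, Theorem~\ref{T:c2}) which are also phrased through Lemma~\ref{L:herigon}. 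One small caveat worth making explicit: the column ordering you invoke is only established under the running hypothesis $r_n>8$ of Lemma~\ref{C:ordencolumnas}, so the clause ``holds for every $r_{n+1}\geq 1$'' in your closing paragraph is a bonus about the arithmetic identity, not an extension of the lemma's range of validity.
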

\begin{proof}

We use Claim~\ref{C:relationship} and Lemma~\ref{C:ordencolumnas}:
\begin{eqnarray}
 \frac{\cdosnMuno}{\cseisnMuno}&=&\frac{\defcdosnMuno}{\defcseisnMuno}\nonumber \\
 &\geq&\frac{(2\rnMuno+2)\cdosn+(\rnMuno+3)\cseisn+(\rnMuno+2)\csieten+(4\rnMuno+7)\cdiezn}{\defcseisnMuno}\nonumber\\
 &\geq&\frac{(2\rnMuno+2)\cdosn+(\rnMuno+3)\cseisn+(\rnMuno+2)\csieten+(4\rnMuno+7)\cdiezn}{
 5\cdosn+2\cseisn+3\csieten+8\cdiezn}\label{E:c2c60}
\end{eqnarray}

Now observe that $\frac{(4\rnMuno+7)\cdiezn}{8\cdiezn}\geq\frac{\rnMuno+1}{3}$ and 
$\frac{(\rnMuno+2)\csieten}{3\csieten}\geq\frac{\rnMuno+1}{3}$ and then by Lemma~\ref{L:herigon} we have:
\begin{equation}
 \frac{(\rnMuno+2)\csieten+(4\rnMuno+7)\cdiezn}{3\csieten+8\cdiezn}\geq\frac{\rnMuno+1}{3}. \label{E:c2c6A}
\end{equation}

Also, $\frac{(2\rnMuno+2)\cdosn}{5\cdosn}\geq \frac{\rnMuno+1}{3}$  and 
$\frac{(\rnMuno+3)\cseisn}{2\cseisn}\geq\frac{\rnMuno+1}{3}$ and then by Lemma~\ref{L:herigon} we have:
\begin{equation}
 \frac{(2\rnMuno+2)\cdosn+(\rnMuno+3)\cseisn}{5\cdosn+2\cseisn}\geq\frac{\rnMuno+1}{3}. \label{E:c2c6B}
\end{equation}

Applying Lemma~\ref{L:herigon} to Equations~(\ref{E:c2c6A}) and (\ref{E:c2c6B}), and taking into account
Equation~(\ref{E:c2c60}), we obtain:
$$\frac{\cdosnMuno}{\cseisnMuno}\geq\frac{\rnMuno+1}3.$$

\end{proof}


 \begin{theorem}\label{T:c2} Let $(p_n)_n$ be a strictly increasing sequence of naturals. 
Then it is possible to choose a strictly  increasing sequence $(\rn)_n$ such that for any $n\in\N \cup\{0\}$ 
 we have
 $$2\rnMuno\cdosn<\cdosnMuno<2\rnMuno\cdosn(1+10^{-\pnMuno}).$$
In fact, we can take $r_n=10^{k+p_{n+1}}$, being $k\geq 2$ constant. 
\end{theorem}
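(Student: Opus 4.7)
My plan is to start from the explicit decomposition of $c_2(n+1)$ provided by Claim~\ref{C:relationship}, namely
\[
c_2(n+1) = 2c_1(n) + (2r_{n+1}+2)c_2(n) + (r_{n+1}+1)c_6(n) + (r_{n+1}+2)c_7(n) + c_9(n) + (4r_{n+1}+6)c_{10}(n),
\]
and bound the ``excess'' $E(n) := c_2(n+1) - 2r_{n+1}c_2(n)$ from above. The lower bound of the statement is then immediate, since the omitted summands have non-negative coefficients on strictly positive columns, so $E(n)>0$ componentwise.

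For the upper bound I would first apply Lemma~\ref{C:ordencolumnas}, which gives $c_1(n)\leq c_2(n)$ and $c_7(n),c_9(n),c_{10}(n)\leq c_6(n)$ componentwise (valid once $r_n>8$, automatic under the intended choice). Grouping collapses the six-term expression into two:
\[
E(n) \leq 4\,c_2(n) + (6r_{n+1}+10)\,c_6(n).
\]
Then Lemma~\ref{L:cocientec2c6}, read componentwise at index $n\geq 1$, gives $c_6(n)\leq \tfrac{3}{r_n+1}c_2(n)$, which after division by $2r_{n+1}c_2(n)$ yields
\[
\frac{E(n)}{2r_{n+1}c_2(n)} \leq \frac{2}{r_{n+1}} + \frac{9}{r_n+1} + \frac{15}{r_{n+1}(r_n+1)}.
\]

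The final step is to specialize $r_n = 10^{k+p_{n+1}}$ and verify that the right-hand side is strictly below $10^{-p_{n+1}}$. The dominant contribution $\frac{9}{r_n+1}$ is bounded by $9\cdot 10^{-k}\cdot 10^{-p_{n+1}}$, which is at most $0.09\cdot 10^{-p_{n+1}}$ when $k\geq 2$; the two remaining terms are much smaller because $r_{n+1}\gg r_n$ (as $p_{n+2}>p_{n+1}$). Summing, the ratio is dominated by $10^{-p_{n+1}}$, as required. Strict monotonicity of $(r_n)$ is automatic from that of $(p_n)$, and $r_n>8$ is trivial.

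For the base case $n=0$ the induction hypothesis for Lemma~\ref{L:cocientec2c6} is not directly available, but the required estimate on $c_6(0)/c_2(0)$ is checked directly from the explicit entries given in Lemma~\ref{L:valoresn=1} (the trailing factor $M(10^8)$ in the construction of $N_0$ morally plays the role of a virtual $r_0$, yielding $c_6(0)/c_2(0) \sim 10^{-8}$ componentwise), which is more than sufficient for the chosen $r_1$. The main obstacle is therefore bookkeeping rather than conceptual: one has to pick the column inequalities that collapse the six summands into exactly two groups proportional to $c_2(n)$ and $c_6(n)$, and then calibrate $k$ so that the resulting error ratio is uniformly below $10^{-p_{n+1}}$; the value $k\geq 2$ is the smallest that achieves this in a single shot.
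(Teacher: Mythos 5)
Your proposal is correct and follows essentially the same route as the paper's own proof: both start from the expansion of $c_2(n+1)$ in Claim~\ref{C:relationship}, collapse the tail terms to $c_2(n)$ and $c_6(n)$ via Lemma~\ref{C:ordencolumnas}, and then invoke Lemma~\ref{L:cocientec2c6} to turn $c_6(n)/c_2(n)$ into a quantity of order $1/r_n$. The only difference is cosmetic: you phrase the target as a bound on the excess ratio $E(n)/(2r_{n+1}c_2(n))$ and split it into three elementary terms, while the paper carries the inequality as $\frac{3r_{n+1}+5}{r_{n+1}10^{-p_{n+1}}-2}<\frac{r_n+1}{3}$ and clears denominators; these are algebraically equivalent. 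One genuine point in your favour: you flag that Lemma~\ref{L:cocientec2c6} only gives $c_2(m)/c_6(m)\geq (r_m+1)/3$ for $m\geq1$, so the $n=0$ instance of the theorem needs $c_2(0)/c_6(0)$ checked separately from the entries of $N_0$; the paper's proof applies the lemma at index $n$ without comment, silently skipping this base case, so your remark actually tightens the argument.
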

\begin{proof}
%

 We recall (see Claim~\ref{C:relationship}) that for any $n\in\N \cup\{0\}$:
 $$c_2(n+1)=2c_1(n)+(2\rnMuno +2)c_2(n)+(\rnMuno +1)c_6(n)+(\rnMuno +2)c_7(n)+c_9(n)+(4\rnMuno +6)c_{10}(n).$$
 Thus, it is evident that $\cdosnMuno>2\rnMuno\cdosn$.
 Also, by using the inequalities from Lemma~\ref{C:ordencolumnas} we  have 
 \begin{eqnarray*}
c_2(n+1)&=&2c_1(n)+(2\rnMuno +2)c_2(n)+(\rnMuno +1)c_6(n)+(\rnMuno +2)c_7(n)+c_9(n)+(4\rnMuno +6)c_{10}(n)  \\
&\leq & (2\rnMuno +4)c_2(n)+(6\rnMuno +10)c_6(n).
 \end{eqnarray*}
We need now to show that $(2\rnMuno +4)c_2(n)+(6\rnMuno +10)c_6(n)<[2\rnMuno+2\rnMuno 10^{-\pnMuno}]\cdosn$ which is equivalent to prove 
$$\frac{6\rnMuno+10}{2\rnMuno10^{^{-\pnMuno}}-4}<\frac{\cdosn}{\cseisn}.$$
By using Lemma~\ref{L:cocientec2c6},  we 
have 
$$\frac{\cdosn}{\cseisn}\geq\frac{\rn+1}{3}.$$
Then, it  will be enough if we obtain that
\begin{equation}
\frac{3\rnMuno+5}{ \rnMuno10^{^{-\pnMuno}}-2}<\frac{\rn+1}{3}.\label{E:aprobarc2} 
\end{equation}
Since $\lim_{x\to\infty} \frac{3x+5}{x10^{-\pnMuno}-2}=3\cdot 10^{\pnMuno}$
then we can guarantee~(\ref{E:aprobarc2}) by taking $\rn$ and $\rnMuno$ big enough in order to satisfy  
$3\cdot 10^{\pnMuno}<\frac{\rn+1}3$, $3\cdot 10^{p_{n+2}}<\frac{\rnMuno+1}3$,
which is always possible by taking $\rnMuno$  big enough. For instance, this is easily achieved if we take
$r_n=10^{k+p_{n+1}}$ with $k\geq 2$ constant. Indeed, (\ref{E:aprobarc2}) is rewritten as 
$\frac{5+3\cdot 10^{{\tiny k+p_{n+2}}} }{ 10^{{\tiny k+p_{n+2}}} 10^{{\tiny -p_{n+1}}}-2}<\frac{1+10^{{\tiny k+p_{n+1}}}}{3},$ 
and the inequality holds if and only if 
$$9\cdot 10^{{\tiny k+p_{n+2}}} + 2\cdot 10^{{\tiny k+p_{n+1}}}+17 < 10^{{\tiny 2k+p_{n+2}}}+ 10^{{\tiny k+p_{n+2}-p_{n+1}}}, $$
which is satisfied due to $17<10^{{\tiny k+p_{n+2}-p_{n+1}}}$ and 
$9\cdot 10^{{\tiny k+p_{n+2}}} + 2\cdot 10^{{\tiny k+p_{n+1}}}< 11\cdot 10^{{\tiny k+p_{n+2}}} < 10^{{\tiny 2k+p_{n+2}}}$ 
because $p_{n+2}-p_{n+1}\geq 1$ and $k\geq 2$.

 
\end{proof}

 \begin{coro}\label{C:c1} Let $(p_n)_n$ be a strictly increasing sequence of naturals. 
Then it is possible to choose a strictly  increasing sequence $(\rn)_n$ such that for any $n\in\N \cup\{0\}$ 
 we have
 $$2\rnMuno\cdosn<\cunonMuno<2\rnMuno\cdosn(1+10^{-\pnMuno}).$$
In fact, we can take $r_n=10^{k+p_{n+1}}$, being $k\geq 2$ constant. 
\end{coro}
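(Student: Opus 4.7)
My plan is to derive Corollary~\ref{C:c1} directly from Theorem~\ref{T:c2} together with the recursion for $c_1(n+1)$ given in Claim~\ref{C:relationship}. The key observation is that, comparing the recursions for $c_1(n+1)$ and $c_2(n+1)$, we see that $c_2(n+1) - c_1(n+1)$ is a nonnegative combination of columns of $N_n$, so $c_1(n+1) < c_2(n+1)$ componentwise. This means the choice of the sequence $(r_n)$ that works for Theorem~\ref{T:c2} will also work here; specifically, we take the same choice $r_n = 10^{k+p_{n+1}}$ with $k \geq 2$ constant.

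First I would write the lower bound. From the recursion
\[
c_1(n+1) = 2c_1(n) + 2r_{n+1} c_2(n) + r_{n+1} c_6(n) + (r_{n+1}+1) c_7(n) + c_9(n) + (4r_{n+1}+2) c_{10}(n),
\]
since every $c_i(n)$ is a positive vector (each $N_n$ is a positive matrix), dropping all summands except $2r_{n+1} c_2(n)$ yields componentwise
\[
c_1(n+1) > 2r_{n+1} c_2(n).
\]

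For the upper bound, I subtract the recursions of Claim~\ref{C:relationship}:
\[
c_2(n+1) - c_1(n+1) = 2c_2(n) + c_6(n) + c_7(n) + 4c_{10}(n),
\]
which is again a positive vector. Hence $c_1(n+1) < c_2(n+1)$ componentwise. Applying Theorem~\ref{T:c2} (valid for the chosen sequence $r_n = 10^{k+p_{n+1}}$, $k \geq 2$), we obtain
\[
c_1(n+1) < c_2(n+1) < 2r_{n+1} c_2(n)\bigl(1 + 10^{-p_{n+1}}\bigr),
\]
which combined with the lower bound completes the proof. There is no real obstacle here: the corollary is essentially a trivial consequence of Theorem~\ref{T:c2} once one notices that the only difference between the two recursion formulas for $c_1$ and $c_2$ lies in a few extra positive terms, so the estimate for $c_2(n+1)$ automatically sandwiches $c_1(n+1)$ in the same way.
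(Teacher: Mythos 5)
Your proof is correct and takes essentially the same route as the paper: both arguments rest on the lower bound $c_1(n+1) > 2r_{n+1}c_2(n)$ from the recursion and on the comparison $c_1(n+1) < c_2(n+1)$ combined with Theorem~\ref{T:c2}. The small stylistic difference is that you obtain $c_1(n+1) < c_2(n+1)$ cleanly by subtracting the two recursions from Claim~\ref{C:relationship} (yielding the explicit positive vector $2c_2(n)+c_6(n)+c_7(n)+4c_{10}(n)$) and then invoke the \emph{conclusion} of Theorem~\ref{T:c2} directly, whereas the paper re-derives the same intermediate upper bound $(2r_{n+1}+4)c_2(n)+(6r_{n+1}+10)c_6(n)$ used inside the proof of Theorem~\ref{T:c2} and then says that proof applies from that point on; your version is a bit more economical but logically equivalent.
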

\begin{proof}
By Claim~\ref{C:relationship} we have for any $n\in\N \cup\{0\}$:
 $$c_1(n+1)=2c_1(n)+2\rnMuno c_2(n)+\rnMuno c_6(n)+(\rnMuno +1)c_7(n)+c_9(n)+(4\rnMuno +2)c_{10}(n).$$ 
 Thus, it is clear that $\cunonMuno>2\rnMuno\cdosn$.
 Also, by using the inequalities from Lemma~\ref{C:ordencolumnas} we  have 
 \begin{eqnarray*}
c_1(n+1)&=&2c_1(n)+2\rnMuno c_2(n)+\rnMuno c_6(n)+(\rnMuno +1)c_7(n)+c_9(n)+(4\rnMuno +2)c_{10}(n)  \\
&\leq & (2\rnMuno +2)c_2(n)+(6\rnMuno +10)c_6(n)\\
&\leq & (2\rnMuno +4)c_2(n)+(6\rnMuno +10)c_6(n).
 \end{eqnarray*}
\linero{Since $\cunonMuno\leq\cdosnMuno$, from this point} the proof of Theorem~\ref{T:c2} applies.
\end{proof}

\subsection{Relationship between $\ccuatron$ and $\ccuatronMuno$}
 
Our interest now is to prove, by recurrence, Theorem~\ref{T:c4} about the existence of an increasing sequence $\rn$ such that 
	$$\rn\ccuatronmuno<\ccuatron<\rn\ccuatronmuno(1+10^{-\pn}).$$
We begin with   a preliminary result. 

\begin{lemma}
\label{L:cocientec34} Let $(p_n)_n$ be a 
sequence of positive numbers with $p_1=2$. 
Then, for each $n\in\N$ it is possible to choose $\rn$ big enough, fixing $r_1=10^{10}$, such that 
\begin{equation}\label{Eq:trescuartos} 
\frac{\cjotan}{\celen}<\frac{1}{20}10^{-\pn},\quad j\in\{3,6,7,8,9,10\},
\; l\in\{1,2,4,5\}.
\end{equation}
\end{lemma}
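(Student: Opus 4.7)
The plan is to argue by induction on $n$, exploiting an asymmetry visible in Claim~\ref{C:relationship}: the six recurrences for $c_j(n+1)$ with $j\in\{3,6,7,8,9,10\}$ have only bounded coefficients (independent of $\rnMuno$), whereas the recurrences for $c_4(n+1)$ and $c_5(n+1)$ contain the summand $(\rnMuno+1)c_4(n)$. By the componentwise ordering of Lemma~\ref{C:ordencolumnas}, $c_5(n)\leq c_l(n)$ componentwise for every $l\in\{1,2,4,5\}$, so it suffices to prove the bound only for $l=5$.

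Two componentwise inequalities drive the argument. First, majorising every summand of the recurrence for $c_j(n+1)$ by $c_2(n)$ (allowed by Lemma~\ref{C:ordencolumnas}) and summing the coefficients gives
$$c_j(n+1)\leq 22\,c_2(n)\quad\text{for }j\in\{3,6,7,8,9,10\},$$
while discarding all non-negative terms except $(\rnMuno+1)c_4(n)$ in the recurrence for $c_5(n+1)$ yields $c_5(n+1)\geq(\rnMuno+1)c_4(n)$. Combining,
$$\frac{c_j(n+1)}{c_5(n+1)}\leq\frac{22}{\rnMuno+1}\cdot\frac{c_2(n)}{c_4(n)}\quad\text{componentwise.}$$

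For the base case $n=1$, the explicit entries of $N_0$ in Lemma~\ref{L:valoresn=1} give $\max_{1\leq i\leq 10} c_2(0)_i/c_4(0)_i<1.3\cdot 10^{5}$, so with $r_1=10^{10}$ and $p_1=2$ the displayed quotient is bounded by $3\cdot 10^{-4}<\frac{1}{20}10^{-p_1}$. For the inductive step, once $r_1,\ldots,\rn$ are fixed, the matrix $N_n$ is a definite product of positive integer matrices, hence the quantity $M_n:=\max_{1\leq i\leq 10} c_2(n)_i/c_4(n)_i$ is a finite positive number. Choosing $\rnMuno$ larger than $440\,M_n\,10^{p_{n+1}}$ (and simultaneously large enough to satisfy the constraints imposed in Theorem~\ref{T:c2} and Corollary~\ref{C:c1}) forces $\frac{c_j(n+1)}{c_5(n+1)}<\frac{1}{20}10^{-p_{n+1}}$.

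The only genuinely delicate step is the verification of the base case against the numerical data of $N_0$; the inductive step is purely structural and uses merely the freedom to inflate $\rnMuno$, whereas the upper bounds for the small columns are insensitive to it. A minor bookkeeping task is checking that the constant $C=22$ works uniformly for each of the six indices $j\in\{3,6,7,8,9,10\}$, which follows at once from summing coefficients in the six relevant lines of Claim~\ref{C:relationship}.
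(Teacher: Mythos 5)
Your proof is correct. It pursues the same two-phase strategy as the paper---numerically verify the case $n=1$ against the explicit entries of $N_0$ with $r_1=10^{10}$, then for $n\geq 1$ exploit the freedom to inflate $r_{n+1}$ once $N_n$ is pinned down---but with a somewhat cleaner execution. The paper first reduces, via Lemma~\ref{C:ordencolumnas}, to the single extreme ratio $\frac{c_3(n)}{c_5(n)}$ (largest ``small'' column over smallest ``large'' column) and then handles the base case $n=1$ with the explicit recurrence and an invocation of Lemma~\ref{L:herigon}-(2), splitting numerator and denominator into two tailored sub-quotients. You instead reduce only in $l$ (to $l=5$) and compensate by majorising uniformly: since every summand of the six recurrences for $c_j(n+1)$, $j\in\{3,6,7,8,9,10\}$, is dominated by $c_2(n)$ and the total coefficient is at most $22$ (attained at $j=3$), while $c_5(n+1)\geq(r_{n+1}+1)c_4(n)$, you get the single clean estimate $\frac{c_j(n+1)}{c_5(n+1)}\leq\frac{22}{r_{n+1}+1}\cdot\frac{c_2(n)}{c_4(n)}$. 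This trades a tighter decomposition for avoiding Lemma~\ref{L:herigon} altogether and gives an explicit sufficient growth condition $r_{n+1}>440\,M_n\,10^{p_{n+1}}$. The numerical check $\max_i c_2(0)_i/c_4(0)_i<1.3\cdot 10^5$ is consistent with the tabulated data (the maximum occurs around $1.28\cdot 10^5$), so the base case indeed yields roughly $2.9\cdot 10^{-4}<\frac{1}{20}\cdot 10^{-2}$. Both approaches are sound; yours is marginally more streamlined, at the cost of a coarser constant.
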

\begin{proof}
%
%
Observe that, by Lemma~\ref{C:ordencolumnas}, for any   $j\in\{3,6,7,8,9,10\}$ and $l\in\{1,2,4,5\}$ we have:
$$\frac\cjotan\celen<\frac\ctresn\ccincon.$$ 
Then we will be done if we show that $\frac\ctresn\ccincon<\frac{1}{20}10^{-\pn}$.

\def\defctresnuno{2c_1(0)+2c_2(0)+2c_3(0)+c_5(0)+2c_6(0)+2c_7(0)+c_9(0)+10c_{10}(0)}
\def\defccincouno{2c_1(0)+2c_2(0)+2c_3(0)+(r_1 +1)c_4(0)+2c_5(0)+2c_6(0)+2c_7(0)+c_9(0)+(r_1 +12)c_{10}(0)}

Realize that, according to Claim~\ref{C:relationship}, the quotient $\frac{\ctresnMuno}{\ccinconMuno}$ equals to
 $$
 \frac{\defctresnMuno}{\defccinconMuno}.$$
Then note that, for $n\in\mathbb{N}$, it is possible to choose $\rnMuno$ big enough to obtain 
 $$\frac{\ctresnMuno}{\ccinconMuno}<\frac1{20}10^{-\pnMuno}.$$
Finally we need to prove that, for $n=0$, with  $r_1=10^{10}$, the value $\frac{c_3(1)}{c_5(1)}$  verifies the corresponding bound, that is:
 $$\frac\defctresnuno\defccincouno<\frac1{20}10^{-2}.$$
In order to prove this inequality   it is necessary to use  the values of 
$c(0)$ given in the proof of Lemma~\ref{L:valoresn=1}.  With a simple use of Lemma~\ref{L:herigon}-(2)  we conclude the proof by considering (we also apply  Lemma~\ref{C:ordencolumnas}):
$$\frac{p}{q}=\frac1{20}10^{-2},$$ 
$$\frac{a}{b}=\frac{2c_1(0)+2c_2(0)+c_5(0)}{
2c_1(0)+2c_2(0)+(r_1 +1)c_4(0)+2c_5(0)}<\frac{5c_2(0)}{(7+r_1)c_5(0)}<\frac{p}{q},$$
$$\frac{c}{d}=\frac{2c_3(0)+2c_6(0)+2c_7(0)+c_9(0)+10c_{10}(0)}{
2c_3(0)+2c_6(0)+2c_7(0)+c_9(0)+(r_1 +12)c_{10}(0)}
<\frac{17c_3(0)}{(19+r_1)c_{10}(0)}<\frac{p}{q}.$$

\end{proof}

\begin{theorem}\label{T:c4} Let $(p_n)_n$ be the sequence $p_n=n+1$,  $n\geq 1$. Then, there exists an increasing sequence 
$(r_n)_{n\geq 1}$ of positive numbers such that  
\begin{equation}\label{E:propic4}
r_{n+1}\ccuatron <\ccuatronMuno<r_{n+1} \ccuatron (1+10^{-p_{n+1}})
\end{equation} 
for any $n\in\N\cup\{0\}$. 
\end{theorem}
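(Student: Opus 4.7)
The plan is to proceed by induction on $n \geq 0$, constructing the sequence $(r_n)$ iteratively so that, at every stage, both (\ref{E:propic4}) and the hypothesis of Lemma~\ref{L:cocientec34} are satisfied. From the recurrence in Claim~\ref{C:relationship} we have
$$c_4(n+1) - r_{n+1} c_4(n) = 2c_1(n) + 2c_2(n) + 2c_3(n) + 2c_4(n) + 2c_5(n) + 2c_6(n) + 2c_7(n) + c_9(n) + (r_{n+1}+13)c_{10}(n),$$
whose right-hand side is strictly positive; this immediately gives the lower inequality $r_{n+1} c_4(n) < c_4(n+1)$.

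For the upper inequality, isolate the $r_{n+1} c_{10}(n)$ term on the left to rewrite (\ref{E:propic4}) as
$$A(n) < r_{n+1} \bigl( c_4(n) \cdot 10^{-p_{n+1}} - c_{10}(n) \bigr),$$
where $A(n) := 2c_1(n)+2c_2(n)+2c_3(n)+2c_4(n)+2c_5(n)+2c_6(n)+2c_7(n)+c_9(n)+13\,c_{10}(n)$ is a vector depending only on $n$ and $r_1,\ldots,r_n$. For the base case $n=0$, take $r_1=10^{10}$ (the choice that already appears in Lemma~\ref{L:cocientec34}); with $p_1 = 2$, the explicit values of $c_j(0)$ from Lemma~\ref{L:valoresn=1} show that the columns $c_3,c_6,c_7,c_8,c_9,c_{10}$ of $N_0$ are of much smaller order of magnitude than $c_4(0)$ (compare (\ref{E:c0})), so the numerical inequality holds.

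For the inductive step, assume $r_1 < \cdots < r_n$ have been chosen so that (\ref{E:propic4}) holds for $m = 0, 1, \ldots, n-1$ and the conclusion of Lemma~\ref{L:cocientec34} holds at index $n$. Then componentwise $c_{10}(n)/c_4(n) < \tfrac{1}{20} 10^{-p_n}$, and since $p_{n+1} = p_n + 1$ we obtain
$$c_4(n) \cdot 10^{-p_{n+1}} - c_{10}(n) > c_4(n)\bigl(10^{-p_{n+1}} - \tfrac{1}{20} 10^{-p_n}\bigr) = \tfrac{1}{20} c_4(n) \cdot 10^{-p_n} > 0.$$
Thus the coefficient of $r_{n+1}$ in the rewritten inequality is strictly positive in every component, while $A(n)$ is a fixed vector; consequently the inequality is satisfied for all $r_{n+1}$ above some threshold $R_0(n)$. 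Choose $r_{n+1}$ larger than $\max\{r_n,R_0(n)\}$ and also large enough for the hypothesis of Lemma~\ref{L:cocientec34} to hold at index $n+1$ (this is always possible since that hypothesis only requires $r_{n+1}$ big). This closes the induction.

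The main obstacle is ensuring the simultaneous compatibility of three growth requirements on $r_{n+1}$: it must dominate the error vector $A(n)$, preserve the bound of Lemma~\ref{L:cocientec34} at the next index, and keep the sequence strictly increasing. The key estimate enabling this is the margin produced by $p_{n+1} = p_n + 1$, which turns the previously established bound $\tfrac{1}{20} 10^{-p_n}$ into strict positivity of the coefficient $c_4(n) \cdot 10^{-p_{n+1}} - c_{10}(n)$; without that gap the linear inequality for $r_{n+1}$ could be trivial or inconsistent.
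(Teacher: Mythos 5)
Your proof is correct and essentially equivalent to the paper's, but the decomposition you use for the upper inequality is genuinely cleaner. The paper first applies Lemma~\ref{C:ordencolumnas} to collapse the error term to $8c_2(n)+(r_{n+1}+20)c_3(n)$, which still contains $r_{n+1}$, and therefore needs a two-piece splitting argument (bounding $8c_2(n)$ by $\tfrac{r_{n+1}}{20}10^{-p_{n+1}}c_4(n)$ separately from the $(r_{n+1}+20)c_3(n)$ term and then checking the resulting scalar inequality for $r_{n+1}>25$). You instead isolate $r_{n+1}c_{10}(n)$, move it to the right-hand side, and observe that the coefficient $c_4(n)10^{-p_{n+1}}-c_{10}(n)$ is strictly positive by Lemma~\ref{L:cocientec34} (with $j=10$, $l=4$) together with the gap $p_{n+1}=p_n+1$; since $A(n)$ is a fixed vector once $r_1,\ldots,r_n$ are chosen, the upper bound becomes a single affine constraint in $r_{n+1}$ that can be satisfied for $r_{n+1}$ above a threshold. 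This avoids the splitting and makes the "choose $r_{n+1}$ large" step transparent.

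The one weak spot is the base case. You write that the inequality holds for $r_1=10^{10}$ because $c_3(0),c_6(0),\ldots,c_{10}(0)$ are much smaller than $c_4(0)$, but that observation only shows that $c_4(0)10^{-2}-c_{10}(0)$ is positive; it does not bound the required size of $r_1$. The vector $A(0)$ is actually dominated by $c_1(0)$ and $c_2(0)$, which from (\ref{E:c0}) are roughly $10^5$ times \emph{larger} than $c_4(0)$ componentwise, so the relevant threshold is $\max_j A(0)_j/\bigl(c_4(0)_j10^{-2}-c_{10}(0)_j\bigr)$, which comes out on the order of $10^8$ -- comfortably below $10^{10}$, but this needs to be checked against the actual entries. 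The paper does this by explicitly computing $c_4(1)/c_4(0)$ as an affine function of $r_1$ and reading off the bounds; you should either do the same or compute the componentwise ratio just described, rather than gesture at the wrong columns.
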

\begin{proof}
 We use recurrence. For the first step, we need to prove the existence of a positive $r_1$ such that $c(0)=N_0$ and $c(1)=c(0)\cdot M(r_1)=N_1$  (recall the definitions of $c(1)$ and $M(r)$ in Subsection~\ref{Subs:path})  verify
$$r_1<\frac{c_4(1)}{c_4(0)}<r_1\left(1+10^{-2}\right)=1.01\, r_1.$$
Using a mathematical software, we find {\small
\begin{eqnarray*}
  c_4(1)&=&( 2639414258473151113574127395242952\cdot r_1+1344927748470953440647878694947308077144,\\
	&& 1580944645931638747502782992720104\cdot r_1+807876918940116412119940682984917963530,\\
  && 157795936968979687904259714015298\cdot r_1+70360447776618068033180129608346841714,\\
	&& 11049915024342700555611217013262533 \cdot r_1
+19663538805901685050878025032296529025,\\
  && 27197410989819562535387692691081 \cdot r_1+2692843546891123062354627555934964804,\\
	&& 858944400491761654534627639795484\cdot r_1+
433714851605532988072322282270484879130,\\
  && 1481381767018248858403965930309956 \cdot r_1+756272631315301738956135092916644697798,\\
	&& 680915234002784048030492636370198 \cdot r_1
+347142687255643546117014569546773523324,\\
  && 1340743934810978683370307257744716 \cdot r_1+683361738589766361162774537323530361096,\\
&& 17795593444216300180066368378037408 \cdot r_1+3435508980461315455942689535315933953144)
\end{eqnarray*} 
}
and 
{\small
\begin{eqnarray*}
\frac{c_4(1)}{c_4(0)}&=&( 1.00127550295528 \cdot r_1+510205.3243312298, 1.001279146244125\cdot r_1+511662.6402754344,\\
& & 1.001115971655806\cdot r_1+446392.7867539065, 1.000004438817158 \cdot r_1+1779.527357942266,\\
& & 1.000247578776742 \cdot r_1+99035.53830218055, 1.001263933618144 \cdot r_1+505577.5881867812,\\
& & 1.001277916612934 \cdot r_1+511170.7876619482, 1.001276160596676 \cdot r_1+510468.3809631471,\\
& & 1.001275836543052 \cdot r_1+510338.7594771761, 1.000482857696305 \cdot r_1+193147.1323610375)
\end{eqnarray*} 
}
\linero{Obviously, $\frac{c_4(1)}{c_4(0)}>r_1.$ On the other hand,} notice that $\frac{c_4(1)}{c_4(0)}\leq  1.001279146244125\cdot r_1+511662.6402754344$. In particular, for $r_1=10^{10}$ we have 
$\frac{c_4(1)}{c_4(0)}\leq  1.001330312508153\cdot 10^{10} < 10^{10}\cdot (1+10^{-2}),$ 
so we have finished the first induction step. 


In order to prove~(\ref{E:propic4}), 
recall that, by Claim~\ref{C:relationship}, 
  \begin{eqnarray*}
	& & \ccuatronMuno\\
&=&	\defccuatronMuno.
\end{eqnarray*} 
  Then it is clear that $\ccuatronMuno>\rnMuno\ccuatron.$ 
We now prove the upper inequality in 
  Equation~(\ref{E:propic4}). Using Lemma~\ref{C:ordencolumnas}, we have  
  $$\ccuatronMuno\leq 8\cdosn+(\rnMuno+20)\ctresn+\rnMuno \ccuatron,
  $$
 then we will finish the proof  if we show that 
 $8\cdosn+(\rnMuno+20)\ctresn+\rnMuno \ccuatron<\rnMuno\ccuatron(1+10^{-\pnMuno})$ or simply:
 \begin{equation}\label{E:c4aprobar}
  8\cdosn+(\rnMuno+20)\ctresn<\rnMuno\ccuatron10^{-\pnMuno}.
 \end{equation}

 Observe that we can choose $\rnMuno$ big enough to satisfy $ 8\cdosn<\frac\rnMuno{20}10^{-\pnMuno}\ccuatron$
 and then by Lemma~\ref{L:cocientec34}
 \begin{equation}\label{Eq:papas26}
  8\cdosn+(\rnMuno+20)\ctresn<\frac\rnMuno{20}10^{-\pnMuno}\ccuatron
  +\frac{\rnMuno+20}{20}10^{-\pn}\ccuatron.
 \end{equation}
 \linero{Then,  inequality~(\ref{E:c4aprobar}) will occur if (multiply  by $10^{\pn}$ the right part of~(\ref{Eq:papas26}) and the corresponding right part of~(\ref{E:c4aprobar}),}  with $p_{n+1}-p_n=1$): 
 $$\frac\rnMuno{20}10^{-1}\ccuatron
  +\frac{\rnMuno+20}{20}\ccuatron<\rnMuno\ccuatron10^{-1},$$
equivalently:
 $$\frac\rnMuno{20}10^{-1}
  +\frac{\rnMuno+20}{20}<\rnMuno10^{-1},$$
or
 $$\frac\rnMuno{200}
  +\frac{\rnMuno}{20}+1<\frac\rnMuno{10},$$
 which holds for instance if $\rnMuno>25$. So, to finish the induction it suffices to consider a sufficiently large number $r_{n+1}$ with 
$r_{n+1}\geq\max\{r_n,25\}+1,$ and  $r_1=10^{10}$.
 \end{proof}

\nuevo{ 
\begin{coro}\label{C:c5} Let $(p_n)_n$ be the sequence $p_n=n+1$,  $n\geq 1$. Then, there exists an increasing sequence 
$(r_n)_{n\geq 1}$ of positive numbers such that  
\begin{equation}\label{E:propic5}
r_{n+1}\ccuatron <\ccinconMuno<r_{n+1} \ccuatron (1+10^{-p_{n+1}})
\end{equation} 
for any $n\in\N\cup\{0\}$. 
\end{coro}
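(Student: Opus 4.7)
The plan is to derive this corollary directly from Theorem~\ref{T:c4} together with the explicit recursion in Claim~\ref{C:relationship}, using the same sequence $(r_n)$ that works for $c_4$. The key observation is that the formulas for $c_4(n+1)$ and $c_5(n+1)$ differ by a very small amount compared to their dominant term. Specifically, subtracting the two expressions in Claim~\ref{C:relationship} yields
\begin{equation*}
c_4(n+1) - c_5(n+1) = c_4(n) + c_{10}(n),
\end{equation*}
so that $c_5(n+1)$ is strictly smaller than $c_4(n+1)$ but only by a term of order $c_4(n)$, which is negligible compared to $r_{n+1} c_4(n)$ when $r_{n+1}$ is large.

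For the lower bound, I would read off the identity
\begin{equation*}
c_5(n+1) = 2c_1(n)+2c_2(n)+2c_3(n)+(r_{n+1}+1)c_4(n)+2c_5(n)+2c_6(n)+2c_7(n)+c_9(n)+(r_{n+1}+12)c_{10}(n)
\end{equation*}
from Claim~\ref{C:relationship}. Since every column $c_j(n)$ is strictly positive, discarding all summands except $(r_{n+1}+1)c_4(n)$ immediately gives $c_5(n+1) > (r_{n+1}+1)c_4(n) > r_{n+1}c_4(n)$, which is the left inequality of~(\ref{E:propic5}).

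For the upper bound, I would invoke Lemma~\ref{C:ordencolumnas}, which guarantees $c_5(n+1) < c_4(n+1)$ (alternatively, this follows from the subtraction identity above since $c_4(n)+c_{10}(n)>0$). Combining this with the upper estimate $c_4(n+1) < r_{n+1}c_4(n)(1+10^{-p_{n+1}})$ provided by Theorem~\ref{T:c4} completes the argument with the very same increasing sequence $(r_n)$ constructed there, in particular with $r_1 = 10^{10}$ and $p_n = n+1$.

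Consequently there is essentially no obstacle: the bulk of the work is already inside Theorem~\ref{T:c4}, and the corollary is obtained by a two-line sandwich argument using monotonicity between columns $c_4$ and $c_5$. If I wanted to avoid invoking Theorem~\ref{T:c4} as a black box and redo the estimate, the only nontrivial ingredient would again be Lemma~\ref{L:cocientec34} controlling the ratio $c_3(n)/c_5(n)$, which feeds into bounding the ``lower order'' terms $2c_1(n)+2c_2(n)+2c_3(n)+\ldots$ in the expression above by a fraction of $r_{n+1}c_4(n)10^{-p_{n+1}}$; but this is exactly the computation already carried out in the proof of Theorem~\ref{T:c4}, so no new ideas or choices of $r_n$ are required.
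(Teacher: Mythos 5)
Your proof is correct and takes a slightly cleaner route than the paper's. The paper's proof re-establishes the base case numerically (computing $c_5(1)/c_4(0)$ for $r_1=10^{10}$), then re-derives the intermediate bound $c_5(n+1)\leq 8c_2(n)+r_{n+1}c_4(n)+(r_{n+1}+20)c_3(n)$ (the same bound used inside the proof of Theorem~\ref{T:c4}), noting $c_5(n+1)\leq c_4(n+1)$ almost in passing, and then says ``from here the proof of Theorem~\ref{T:c4} applies.'' You short-circuit all of this: the identity $c_4(n+1)-c_5(n+1)=c_4(n)+c_{10}(n)>0$ from Claim~\ref{C:relationship} (or, equivalently, Lemma~\ref{C:ordencolumnas}) gives $c_5(n+1)<c_4(n+1)$ directly, and the lower bound $c_5(n+1)>(r_{n+1}+1)c_4(n)>r_{n+1}c_4(n)$ is immediate from positivity, so the sandwich $r_{n+1}c_4(n)<c_5(n+1)<c_4(n+1)<r_{n+1}c_4(n)(1+10^{-p_{n+1}})$ follows from the conclusion of Theorem~\ref{T:c4} as a black box, using the very same sequence $(r_n)$. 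This avoids both the explicit base-case recomputation (Theorem~\ref{T:c4} already covers $n=0$) and the duplicated upper-bound estimate. The underlying mechanism is the same, but your packaging is more economical and a bit more illuminating about why the same sequence works.
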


\linero{
\begin{proof}
We start the proof by checking that the property is true in the first step, when comparing the columns $c_5(1)$ and $c_4(0)$. To this end, 
by using a mathematical software, we find {(recall that $c(1)=c(0)\cdot M(r_1)$)} {\small
\begin{eqnarray*}
  c_5(1)&=&(  2639414258473151113574127395242952\cdot r_1+1344925109056694967496765120819912834192,\\
	&&  1580944645931638747502782992720104\cdot r_1+807875337995470480481193180201925243426,\\
	&&  157795936968979687904259714015298\cdot r_1+70360289980681099053492225348632826416,\\
	&&  11049915024342700555611217013262533\cdot r_1+19652488890877342350322413815283266492,\\
  &&  27197410989819562535387692691081\cdot r_1 +2692816349480133242792092168242273723,\\	
	&& 858944400491761654534627639795484\cdot r_1+433713992661132496310667747642845083646,\\
  &&  1481381767018248858403965930309956\cdot r_1+756271149933534720707276688950714387842,\\
	&&  680915234002784048030492636370198\cdot r_1+347142006340409543332966539054137153126,\\
  &&  1340743934810978683370307257744716\cdot r_1+683360397845831550184091167016272616380,\\
&& 17795593444216300180066368378037408\cdot r_1+3435491184867871239642509468947555915736),
\end{eqnarray*} 
}
and taking into account the value  of $c_4(0)$ -see Lemma~\ref{L:valoresn=1}-, we obtain: 
{\small

\begin{eqnarray*}
\frac{c_5(1)}{c_4(0)}&=&( 1.00127550295528\cdot r_1 + 510204.3230557268,  1.001279146244125\cdot r_1 + 511661.6389962881,\\
& & 1.001115971655806\cdot r_1 + 446391.7856379349,  1.000004438817158\cdot r_1 + 1778.527353503449,\\
& &  1.000247578776742\cdot r_1 + 99034.53805460177,  1.001263933618144\cdot r_1 + 505576.5869228475,\\
& & 1.001277916612934\cdot r_1 + 511169.7863840316,  1.001276160596676\cdot r_1 + 510467.3796869864,\\
& &  1.001275836543052\cdot r_1 + 510337.7582013395, 1.000482857696305\cdot r_1 + 193146.1318781798 ).
\end{eqnarray*} 
} 
The inequality $\frac{c_5(1)}{c_4(0)}>r_1$ holds trivially; concerning the converse inequality, we find 
$\frac{c_5(1)}{c_4(0)}\leq   1.001279146244125\cdot r_1 + 511661.6389962881$. 
In particular, for $r_1=10^{10}$, we get 
$\frac{c_5(1)}{c_4(0)}\leq  1.001330312408025\cdot 10^{10} < 10^{10}\cdot (1+10^{-2}),$ which ends the first step of the induction.

By Claim~\ref{C:relationship} we have for any $n\in\N \cup\{0\}$:
 $$c_5(n+1)=2c_1(n)+2c_2(n)+2c_3(n)+(\rnMuno +1)c_4(n)+2c_5(n)+2c_6(n)+2c_7(n)+c_9(n)+(\rnMuno +12)c_{10}(n).
$$ 
 Thus $\ccinconMuno>\rnMuno c_4(n)$.
 Also, by using  Lemma~\ref{C:ordencolumnas} we  have 
 \begin{eqnarray*}
c_5(n+1)&=&2c_1(n)+2c_2(n)+2c_3(n)+(\rnMuno +1)c_4(n)+2c_5(n)+2c_6(n)+2c_7(n)+c_9(n)+(\rnMuno +12)c_{10}(n)\\
&\leq & 4c_2(n)+(\rnMuno +3)c_4(n)+(\rnMuno +19)c_3(n)\\
&\leq & 7c_2(n)+\rnMuno c_4(n)+(\rnMuno +19)c_3(n)\\
&\leq & 8c_2(n)+\rnMuno c_4(n)+(\rnMuno +20)c_3(n),
 \end{eqnarray*}
with $c_5(n+1)\leq c_4(n+1)$,
 and from here the proof of Theorem~\ref{T:c4} applies.
\end{proof}}

}

\subsection{The angle between $\cdosn$ and $\ccuatron$}

As introduced before Lemma~\ref{L:valoresn=1},  $\alpha_{i,j}(n)$ denotes the angle between columns $i$ and $j$
in $N_n$ \linero{(namely, between $c_i(n)$ and $c_j(n)$)}. Our objective in this section is to show that the second and fourth columns do not tend to the same direction
as $n$ increases. Following Theorem~\ref{T:c2} and Theorem~\ref{T:c4} our final choice of the values $p_n,  n\geq 1$,  is:
\begin{eqnarray*}
p_n&=&n+1.
\end{eqnarray*}
\begin{remark}\label{R:choicesrp}
With these values of $p_n$,  we can apply the above mentioned theorems by starting with $r_1=10^{10}$: for Theorem~\ref{T:c2}, it is clear that $r_1$ 
verifies its statement; for 
Theorem~\ref{T:c4}, see the first part of its proof. 
Therefore, in the following we assume $r_1=10^{10}$ and take $c(1)=c(0)\cdot {M(10^{10})}$. 
Also we are taking $r_n$ big enough to satisfy simultaneously the conditions of Lemmas~\ref{C:ordencolumnas}, \ref{L:cocientec34}, Theorems~\ref{T:c2}, \ref{T:c4} and Corollaries~\ref{C:c1}, \ref{C:c5}. 
\end{remark}
In order to proceed to establish the linear independence of the second and fourth columns of $N_n$, 
as a previous step, it will be convenient to work with the angle $\alpha_{2,4}(1)$ {between the second and fourth columns} of the matrix $c(1)=c(0)\cdot M(10^{10})$.

\begin{lemma}\label{L:NuevoAlpha}
$\alpha_{2,4}(1)\approx  0.61287  $ radians, or $ 35.1150^{\mathrm{o}}$. 
\end{lemma}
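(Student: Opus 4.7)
The plan is purely computational: since $r_1 = 10^{10}$ is fixed by Remark~\ref{R:choicesrp} and the columns of $c(0) = N_0$ are known explicitly from the proof of Lemma~\ref{L:valoresn=1}, we can evaluate $c_2(1)$ and $c_4(1)$ exactly as integer vectors in $\mathbb{Z}^{10}$, and then compute
\[
\cos(\alpha_{2,4}(1)) = \frac{\langle c_2(1), c_4(1)\rangle}{\|c_2(1)\|_e\,\|c_4(1)\|_e}
\]
by a direct arithmetic check using a computer algebra system (e.g.\ \emph{Maxima}).

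First, I would write down $c_2(1)$ using Claim~\ref{C:relationship} with $n=0$ and $r_1 = 10^{10}$:
\[
c_2(1) = 2c_1(0) + (2\cdot 10^{10}+2)\,c_2(0) + (10^{10}+1)\,c_6(0) + (10^{10}+2)\,c_7(0) + c_9(0) + (4\cdot 10^{10}+6)\,c_{10}(0),
\]
producing a fully explicit ten-component integer vector analogous to the expression for $c_4(1)$ already displayed in the proof of Theorem~\ref{T:c4}. (The formula for $c_4(1)$ was written there as a linear polynomial in $r_1$, and one specializes at $r_1 = 10^{10}$.)

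Next, I would form the three integer quantities $\langle c_2(1), c_2(1)\rangle$, $\langle c_4(1), c_4(1)\rangle$ and $\langle c_2(1), c_4(1)\rangle$ exactly, and then evaluate the quotient numerically at sufficient precision (30--40 significant digits is ample, since only four decimal digits of $\alpha_{2,4}(1)$ are claimed). Taking the $\arccos$ of the resulting ratio gives $\alpha_{2,4}(1) \approx 0.61287$ radians, i.e.\ approximately $35.1150^{\mathrm{o}}$.

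There is no serious obstacle here beyond the bookkeeping: the entries of $c_2(1)$ and $c_4(1)$ grow to at most roughly $10^{48}$ in modulus, well within the range of any exact-arithmetic package, so the inner products and Euclidean norms are computed as exact rationals (or as integers under a square root) and then evaluated in floating point only at the very last step. The only thing to be careful about is using exact integer arithmetic throughout the computation of $\langle c_2(1), c_4(1)\rangle$, $\|c_2(1)\|_e^2$ and $\|c_4(1)\|_e^2$ before converting to decimal, exactly as was done in the analogous computation of $\cos(\alpha_{2,4}(0))$ in the proof of Lemma~\ref{L:valoresn=1}.
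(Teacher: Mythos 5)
Your approach is exactly the one the paper uses: compute $c_2(1)$ and $c_4(1)$ as explicit integer vectors (the paper does so directly; you suggest the equivalent route via Claim~\ref{C:relationship} specialized at $r_1=10^{10}$), form the inner product and Euclidean norms exactly, and take $\arccos$ of the resulting ratio. This matches the paper's proof in both method and substance.
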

\begin{proof}
With the help of a computer, we find
{\tiny
\begin{eqnarray*}
c_2(1)&=& ( 
6724585879762930009847217678925752031759280143132,4039352931172817287241960107428595442430070842597,\\
& & 351799079076971463272071897141315476273379662114,98096694645053186347651975943685219190151236056,\\
& & 13463673637462237721825378743371275851182965738,2168557055177101658034221170497937527466169936066,\\
& & 3781333487157436806125785995140120675027945513480,1735699798794056509112723701314448844530629244066,\\
& & 3416781840314699421838535355519580051273702574220,17177188800629772648590734203820777647998180299182
)
\end{eqnarray*}
}
 and 
{\tiny
\begin{eqnarray*}
c_4(1)&=& ( 26395487512479982089181921831124467308077144,15810254336235327591439949867884024917963530,\\
& & 1578029730137573497110630320282588346841714,110499169906965811457797221010650362296529025,\\
& & 271976802741742516476939281538365934964804,8589877719769222078334348720237110484879130,\\
& & 14814573942813803885778615438192476644697798,6809499482715096123851043378271526773523324,\\
& & 13408122709848376600064235351984483530361096,177959369951143463116119626469909395933953144
).
\end{eqnarray*}
}

Therefore, 
{\tiny
\begin{eqnarray*}
& & \cos(\alpha_{2,4}(1))=\frac{\left\langle c_2(1), c_4(1)\right\rangle}{\left\|c_2(1)\right\|_e \left\|c_4(1)\right\|_e} \\&=&\frac{ 3441880392395080399922695837058635750445825204823181953754814346118247133079596338663535984898 }{
\sqrt{  45348258594099938427599236739072860113897478329474707678837030240546527959551372587345105 }}\\
& &\times \frac{1}{\sqrt{  390413983851904440488080881284455556376331157805097083753494347951581607913506960663091586303699245 }}\\
&=&   0.8179990222644798...,
\end{eqnarray*}
}
so $\alpha_{2,4}(1)=\arccos(0.8179990222644798...)\approx 0.61287$ radians or $35.1150^{\mathrm{o}}$.
\end{proof}

\begin{lemma}\label{L:exponencial} Let $p_n=n+1, n\geq 1.$ Then
 $$\prod_{j=2}^{n}\left(1+10^{-p_j}\right)\leq e^{\frac1{900}}.$$
\end{lemma}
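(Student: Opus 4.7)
The plan is to take the logarithm of the product and bound it by a convergent geometric series whose sum equals $\frac{1}{900}$ exactly.

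First I would apply the elementary inequality $\log(1+x)\leq x$, valid for all $x\geq 0$, term by term. With $p_j=j+1$, this yields
\begin{equation*}
\log\prod_{j=2}^{n}\bigl(1+10^{-p_j}\bigr) = \sum_{j=2}^{n}\log\bigl(1+10^{-(j+1)}\bigr) \leq \sum_{j=2}^{n} 10^{-(j+1)}.
\end{equation*}

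Next, I would estimate the finite tail by the corresponding infinite geometric series, which is a clean closed form:
\begin{equation*}
\sum_{j=2}^{n} 10^{-(j+1)} \leq \sum_{j=2}^{\infty} 10^{-(j+1)} = 10^{-3}\sum_{k=0}^{\infty} 10^{-k} = \frac{10^{-3}}{1-10^{-1}} = \frac{1}{900}.
\end{equation*}

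Combining the two bounds and exponentiating gives the claim. I do not expect any real obstacle here; the only thing to verify is the substitution $p_j=j+1$ so that the relevant geometric series starts at $10^{-3}$ (corresponding to $j=2$), which is precisely what produces the sharp constant $\frac{1}{900}$.
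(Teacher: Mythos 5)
Your proof is correct and matches the paper's argument: the paper applies $1+x\leq e^x$ factor by factor and then bounds $\sum_{j=2}^n 10^{-p_j}$ by the geometric series $\frac{10^{-3}}{1-1/10}=\frac{1}{900}$, which is exactly the logarithm of the bound you obtain via $\log(1+x)\leq x$. The two presentations are the same inequality in exponential versus logarithmic form.
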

\begin{proof}
Observe that $$\prod_{j=2}^{n}\left(1+10^{-p_j}\right)\leq 
\prod_{j=2}^{n} e^{10^{-p_j}}=  e^{\sum_{j=2}^n 10^{-p_j}}
\leq e^{\frac{10^{-p_2}}{1-1/10}}=e^{^{\frac{10^{-3}}{9/10}}}= e^{^{\frac1{900}}}.$$
\end{proof}

\begin{proposition} The angle $\angulon$ satisfies 
$${34.91}^{\mathrm{o}}<\angulon<{35.42}^{\mathrm{o}}\,\textrm{ for any } n\in\N\cup\{0\},$$
{or, in radians, $$0.609 < \angulon <0.619.$$}
\end{proposition}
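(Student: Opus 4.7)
The plan is to bound how far the directions of $c_2(n)$ and $c_4(n)$ can drift from those of $c_2(1)$ and $c_4(1)$, and then combine this with the explicit computation of $\alpha_{2,4}(1)$ carried out in Lemma~\ref{L:NuevoAlpha}. Set $K_n:=\prod_{k=2}^n 2r_k$ and $L_n:=\prod_{k=2}^n r_k$, with $K_1=L_1=1$, and put $\hat c_2(n):=c_2(n)/K_n$, $\hat c_4(n):=c_4(n)/L_n$. Dividing the componentwise inequalities supplied by Theorem~\ref{T:c2} by $K_{n+1}=2r_{n+1}K_n$ gives
$$\hat c_2(n)\leq \hat c_2(n+1)\leq (1+10^{-p_{n+1}})\hat c_2(n),$$
and the completely analogous manipulation based on Theorem~\ref{T:c4} yields the same relation for $\hat c_4$. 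Telescoping from $n=1$ and using Lemma~\ref{L:exponencial} to bound $\prod_{k=2}^n(1+10^{-p_k})\leq e^{1/900}$, I obtain the componentwise bounds
$$c_2(1)\leq \hat c_2(n)\leq e^{1/900}c_2(1),\qquad c_4(1)\leq \hat c_4(n)\leq e^{1/900}c_4(1),\qquad n\geq 1.$$

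Since the angle is invariant under positive scalings of each vector, $\cos(\alpha_{2,4}(n))=\langle \hat c_2(n),\hat c_4(n)\rangle/(\|\hat c_2(n)\|_e\|\hat c_4(n)\|_e)$. The componentwise bounds propagate to both the numerator (a sum of products $\hat c_2(n)_j\hat c_4(n)_j$, each lying between $c_2(1)_jc_4(1)_j$ and $e^{2/900}c_2(1)_jc_4(1)_j$) and the denominator (the product of two Euclidean norms, each trapped between its value at $n=1$ and $e^{1/900}$ times that value). Therefore
$$e^{-2/900}\cos(\alpha_{2,4}(1))\leq \cos(\alpha_{2,4}(n))\leq e^{2/900}\cos(\alpha_{2,4}(1))\qquad (n\geq 1).$$

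Plugging in $\cos(\alpha_{2,4}(1))\approx 0.81800$ from Lemma~\ref{L:NuevoAlpha} together with $e^{\pm 2/900}\in[1-0.00222,1+0.00223]$ confines $\cos(\alpha_{2,4}(n))$ to an interval contained in $[0.8162,0.8199]$, equivalently $\alpha_{2,4}(n)\in[34.94^{\mathrm{o}},35.30^{\mathrm{o}}]\subset(34.91^{\mathrm{o}},35.42^{\mathrm{o}})$. The remaining case $n=0$ is handled directly by item~(1) of Lemma~\ref{L:valoresn=1}, which reads $\alpha_{2,4}(0)\approx 35.13^{\mathrm{o}}$, well inside the claimed window. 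The only delicate point is the inductive bookkeeping; once the componentwise multiplicative perturbations are under control by $e^{1/900}$, the rest is a routine Cauchy--Schwarz/scale-invariance estimate and the final numerical margin is comfortable.
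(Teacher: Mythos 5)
Your proof is correct and follows essentially the same route as the paper's: normalize $c_2(n)$ and $c_4(n)$ by the accumulated factors $\prod 2r_k$ and $\prod r_k$, telescope the bounds of Theorems~\ref{T:c2} and~\ref{T:c4} to confine each normalized column componentwise within a band of width $e^{1/900}$ around $c_2(1)$ and $c_4(1)$ (Lemma~\ref{L:exponencial}), and then read off that the cosine of the angle can drift by at most a factor $e^{\pm 2/900}$ from $\cos\alpha_{2,4}(1)$; the paper encodes the same componentwise control through the auxiliary ratio vectors $\gamma,\beta$ with $1\le\gamma_j,\beta_j\le e^{1/900}$, but the resulting inequality $e^{-2/900}\cos\alpha_{2,4}(1)\le\cos\alpha_{2,4}(n)\le e^{2/900}\cos\alpha_{2,4}(1)$ and the final numerics are the same, with the bases $n=0,1$ settled by Lemmas~\ref{L:valoresn=1} and~\ref{L:NuevoAlpha}.
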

\begin{proof}
 We proceed by recurrence in $n$. For $n=0$ and $n=1$, see Lemma~\ref{L:valoresn=1} and Lemma~\ref{L:NuevoAlpha}, respectively. 
Suppose $n\geq 2$. We apply Theorem~\ref{T:c2} several times to obtain:
$$
2^{n-1}\rn r_{n-1}r_{n-2}\dots r_3 r_2 c_2(1)\leq 
\cdosn\leq 
2^{n-1}\rn r_{n-1}r_{n-2}\dots r_3 r_2 c_2(1)\prod_{j=2}^n\left(1+10^{{-p_j}}\right).$$
Reasoning in the same way, by using Theorem~\ref{T:c4} we obtain:
$$
\rn r_{n-1}r_{n-2}\dots r_3 r_2 c_4(1)\leq 
\ccuatron\leq 
\rn r_{n-1}r_{n-2}\dots r_3 r_2 c_4(1)\prod_{j=2}^n\left(1+10^{ {-p_j}}\right).$$
And applying Lemma~\ref{L:exponencial}:
\begin{equation}
\label{E:C} 
c_2(1)
\leq 
\frac\cdosn{2^{n-1}\rn r_{n-1}r_{n-2}\dots r_2}
\leq 
c_2(1)e^{\frac1{900}}
\end{equation}
and 
\begin{equation}
\label{E:D} 
c_4(1)\leq 
\frac\ccuatron{ \rn r_{n-1}r_{n-2}\dots r_2}\leq 
c_4(1)e^{\frac1{900}}.
\end{equation}
Observe now that the angle $\angulon$ equals the angle between the vectors
$\cdosn'=\frac\cdosn{K_2}$ and $\ccuatron'=\frac\ccuatron{K_4}$, with $K_2=2^{n-1}\rn r_{n-1}r_{n-2}\dots r_2$ and 
$K_4=\rn r_{n-1}r_{n-2}\dots r_2$. Put $\gamma:=\frac{\cdosn'}{c_2(1)}=(\gamma_1,\ldots,\gamma_{10})$ and 
$\beta{:=}\frac{\ccuatron'}{c_4(1)}=(\beta_1,\ldots,\beta_{10})$. Notice that 
$1\leq \gamma_j, \beta_j \leq e^{^{\frac{1}{900}}}$ \linero{and 
$\cdosn'=(\gamma_1 c_2(1)_1,\gamma_2 c_2(1)_2,\ldots, \gamma_{10}c_2(1)_{10}), 
\ccuatron'=(\beta_1 c_4(1)_1,\beta_2 c_4(1)_2,\ldots, \beta_{10}c_4(1)_{10}).$} Thus, using Equations~(\ref{E:C}) and (\ref{E:D}), {and Lemma~\ref{L:NuevoAlpha},} we have:
\begin{eqnarray*}
\cos\angulon&=&\frac{<\cdosn',\ccuatron'>}{\|\cdosn'\| \|\ccuatron'\| }\leq
\frac{\max_{1\leq j\leq 10}\{\gamma_j\}\max_{1\leq j\leq 10}\{\beta_j\}
<c_2(1),c_4(1)>}{\min_{1\leq j\leq 10}\{\gamma_j\}\min_{1\leq j\leq 10}\{\beta_j\}\|c_2(1)\| \|c_4(1)\|}\\
&\leq& \frac{e^{\frac1{900}}e^{\frac1{900}}}{1}\cos\alpha_{2,4}(1)\leq  0.818\cdot e^{^{\frac1{450}}} < 0.82.
\end{eqnarray*}
Similarly:
\begin{eqnarray*}
\cos\angulon&=&\frac{<\cdosn',\ccuatron'>}{\|\cdosn'\| \|\ccuatron'\| }\geq
\frac{\min_{1\leq j\leq 10}\{\gamma_j\}\min_{1\leq j\leq 10}\{\beta_j\}<c_2(1),c_4(1)>}
{\max_{1\leq j\leq 10}\{\gamma_j\}\max_{1\leq j\leq 10}\{\beta_j\}\|c_2(1)\| \|c_4(1)\|}\\
&\geq& \frac{1}{e^{\frac1{900}}e^{\frac1{900}}}\cos\alpha_{2,4}(1) \geq 0.817\cdot e^{^{-\frac1{450}}} > 0.815.
\end{eqnarray*}
Therefore $0.609385308030795...=\arccos(0.82) < \alpha_{2,4}(n) < \arccos(0.815)= 0.6180671318552149\ldots$.
Finally, we find  
$$34.91520^{\mathrm{o}}  < \alpha_{2,4}(n) <  35.41264^{\mathrm{o}}.$$

\end{proof}
%
%
%
%
%
%
%
%
%
%
%
%
%
%
%
%

\subsection{The limit direction of columns $2$ and $4$}
We know by the previous subsection that columns $2$ and $4$ of $N_n$ do not accumulate in the same direction. Our interest now is to show that both columns  have a limit direction. Recall that we have fixed the conditions on the choices of $r_n$ and $p_n$ in Remark~\ref{R:choicesrp}. 

In what follows, given $1\leq i,j\leq 10$, $n,m\in\N$ we will denote by $\delta_i(n,m)$ the angle between the $i$-th columns of $N_n$ and $N_m$; also $\alpha_{i,j}(n,m)$  will refer 
the angle between the $i$-th column of $N_n$ and the $j$-th column of  $N_m$.
{On the other hand, recall that $\lambda$ is the finite limit of a double sequence $f(m,n)$, $\lim_{m,n\rightarrow\infty}f(m,n)=\lambda,$ if and only if for any $\varepsilon>0$ there exists a positive integer $n_0=n_0(\varepsilon)$ such that 
$|f(m,n)-\lambda|<\varepsilon$ for all $m,n\geq n_0$.}

\begin{proposition}\label{P:4y5} It holds:
\begin{enumerate}
 \item[(1)] $\lim_{m,n\rightarrow\infty}\delta_4(m,n)=0$.
 \item[(2)] $\lim_{m,n\rightarrow\infty}\alpha_{4,5}(m,n)=0$.
\end{enumerate}

 \end{proposition}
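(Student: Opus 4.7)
My plan is to show that both $c_4(n)/(r_n r_{n-1}\cdots r_2)$ and $c_5(n)/(r_n r_{n-1}\cdots r_2)$ converge componentwise to the same positive limit vector, which immediately yields convergence of directions and hence both angle statements.

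Set $K^{(n)}:=r_n r_{n-1}\cdots r_2$ for $n\geq 1$ (with $K^{(1)}=1$) and consider the normalized vectors $\tilde c_4(n):=c_4(n)/K^{(n)}$. By Theorem~\ref{T:c4}, for every component $j\in\{1,\dots,10\}$,
\begin{equation*}
\tilde c_4(n)_j \;<\; \tilde c_4(n+1)_j \;<\; \tilde c_4(n)_j\bigl(1+10^{-p_{n+1}}\bigr).
\end{equation*}
Thus each coordinate $\tilde c_4(n)_j$ is strictly increasing in $n$, and iterating from $n=1$ together with Lemma~\ref{L:exponencial} gives $c_4(1)_j \leq \tilde c_4(n)_j \leq c_4(1)_j\, e^{1/900}$. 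Hence $\tilde c_4(n)$ converges componentwise to a vector $v$ with $c_4(1)\leq v\leq c_4(1)\,e^{1/900}$, in particular $v$ is strictly positive. A standard Cauchy estimate shows moreover that $\left|\tilde c_4(n)-v\right|\to 0$.

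For the columns $c_5$, define $\tilde c_5(n):=c_5(n)/K^{(n)}$. Corollary~\ref{C:c5} yields, componentwise,
\begin{equation*}
\tilde c_4(n-1)_j \;<\; \tilde c_5(n)_j \;<\; \tilde c_4(n-1)_j\bigl(1+10^{-p_{n}}\bigr),
\end{equation*}
so $\tilde c_5(n)-\tilde c_4(n-1)\to 0$ coordinatewise, and therefore $\tilde c_5(n)\to v$ as well.

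Finally, since $c_4(n)$ and $c_5(n)$ differ from $\tilde c_4(n)$ and $\tilde c_5(n)$ only by the positive scalar $K^{(n)}$, their directions coincide with those of the normalized vectors. Because $v$ has strictly positive entries, the map $x\mapsto x/\|x\|_e$ is continuous at $v$, so $c_4(n)/\|c_4(n)\|_e$ and $c_5(n)/\|c_5(n)\|_e$ both converge to $v/\|v\|_e$. Consequently, for any $\varepsilon>0$ there exists $n_0$ such that for all $m,n\geq n_0$ the unit vectors in the directions of $c_4(m),\,c_4(n),\,c_5(m),\,c_5(n)$ lie in a common $\varepsilon$-neighbourhood of $v/\|v\|_e$. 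This proves both $\lim_{m,n\to\infty}\delta_4(m,n)=0$ and $\lim_{m,n\to\infty}\alpha_{4,5}(m,n)=0$. There is no real obstacle here beyond bookkeeping: the inequalities of Theorem~\ref{T:c4} and Corollary~\ref{C:c5}, together with the uniform bound of Lemma~\ref{L:exponencial}, force the two sequences of directions to be Cauchy and to share the same limit.
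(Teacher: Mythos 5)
Your argument is correct, and it takes a genuinely different, slightly more conceptual route than the paper's. The paper fixes $n>m$, rescales $c_4(n)$ by $r_{m+1}\cdots r_n$ so that it is comparable to $c_4(m)$, bounds the componentwise ratio $\beta(n,m)$ between $1$ and $e^{1/(9\cdot 10^{m+1})}$, and then plugs this directly into a lower bound for $\cos\delta_4(n,m)$ of the form $\min_j\beta_j/\max_j\beta_j\geq e^{-1/(9\cdot 10^{m+1})}$; the same cosine computation with Corollary~\ref{C:c5} gives part (2). This yields an explicit quantitative rate. You instead normalize by $K^{(n)}=r_n\cdots r_2$, observe from Theorem~\ref{T:c4} that each coordinate of $\tilde c_4(n)$ is monotone increasing and bounded (via Lemma~\ref{L:exponencial}), so a genuine limit vector $v>0$ exists; Corollary~\ref{C:c5} then forces $\tilde c_5(n)\to v$ as well, and continuity of $x\mapsto x/\|x\|_e$ at $v\neq 0$ gives convergence of both sequences of directions to $v/\|v\|_e$, from which both double-limit statements follow. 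In effect, you first prove what the paper states afterwards as Corollary~\ref{C:convergencia45}, and then read Proposition~\ref{P:4y5} off as a triviality, whereas the paper proves the Cauchy-type statement directly (with a rate) and derives the corollary from it. Both approaches rest on exactly the same ingredients (Theorem~\ref{T:c4}, Corollary~\ref{C:c5}, Lemma~\ref{L:exponencial}); the paper's version buys an explicit estimate $\cos\delta_4(n,m)\geq e^{-1/(9\cdot 10^{m+1})}$, while yours is shorter and identifies the limit direction outright.
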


\begin{proof}
 Let $n>m$  positive integers {(the reasoning for the case $n<m$ is supplied by the fact that $\delta_4(m,n)=\delta_4(n,m)$).} Use recursively Theorem~\ref{T:c4} to obtain:
 $$c_4(m)<\frac{\ccuatron}{r_{m+1}r_{m+2}\dots r_n}<c_4(m)\prod_{j=m+1}^n(1+10^{-p_j}).$$
 Write $c_4'(n):=\frac{\ccuatron}{r_{m+1}r_{m+2}\dots r_n}$ and $\beta(n,m):=\frac{c_4'(n)}{c_4(m)}$; then for any $1\leq i\leq 10$ we have (here, $p_j=j+1$ for $j\geq 1$):
 $$1<\beta(n,m)_i<\prod_{j=m+1}^n(1+10^{-p_j})<\prod_{j=m+1}^n e^{10^{-p_j}}=e^{\sum_{j=m+1}^n10^{-p_j}}
 \leq e^{\frac{10^{-p_{m+1}}}{9/10}}=e^{^{\frac1{9\cdot 10^{{m+1}}}}}.$$
 Now keep in mind that $\delta_4(n,m)$ equals the angle between 
 $c_4'(n)$ and $c_4(m)$, then
\begin{eqnarray*}
\cos\delta_4(n,m)&=&\frac{<c_4'(n),c_4(m)>}{\|c_4'(n)\|_e\|c_4(m)\|_e}
 \geq\frac{\min_{1\leq j\leq 10}\beta_j(n,m)<c_4(m),c_4(m)>}
 {\max_{1\leq j\leq 10}\beta_j(n,m)\|c_4(m)\|_e\|c_4(m)\|_e} \\
 &\geq&\frac{1}{e^{^{\frac1{9\cdot 10^{ {m+1}}}}}}=e^{^{-\frac1{9\cdot 10^{{m+1}}}}},
\end{eqnarray*}
and now, {given $\varepsilon>0$,} we can take $m$ big enough to have $\cos\delta_4(n,m)$ close enough to $1$ and then $\delta_4(n,m)<\varepsilon.$ {This proves $(1)$.}

Now, by  Corollary~\ref{C:c5} we have:
 $$c_4(m)<\frac{\ccincon}{r_{m+1}r_{m+2}\dots r_n}<c_4(m)\prod_{j=m+1}^n(1+10^{-p_j}).$$
Let $c_5'(n):=\frac{\ccincon}{r_{m+1}r_{m+2}\dots r_n}$ and $\gamma(n,m):=\frac{c_5'(n)}{c_4(m)}$, then reasoning as previously for $\beta(n,m)$ we obtain:
 $$1<\gamma(n,m)_i<
 e^{^{\frac1{9\cdot 10^{ {m+1}}}}}, \quad  1\leq i\leq 10.$$
Then the angle $\alpha_{{4,5}}(n,m)$  satisfies:
\begin{eqnarray*}
\cos\alpha_{ {4,5}}(n,m)&=&\frac{<c_5'(n),c_4(m)>}{\|c_5'(n)\|_e\|c_4(m)\|_e}
 \geq\frac{\min_{1\leq j\leq 10}\gamma_j(n,m)<c_4(m),c_4(m)>}
 {\max_{1\leq j\leq 10}\gamma_j(n,m)\|c_4(m)\|\|c_4(m)\|} 
 \geq
 e^{^{-\frac1{9\cdot 10^{{m+1}}}}}
\end{eqnarray*} 
and, similarly to the statement of $(1)$, we have $\alpha_{4,5}(n,m)<\varepsilon,$ {which proves $(2)$.}
\end{proof}

An easy consequence of last proposition is the existence of a common limit direction of columns $4$ and $5$.
\begin{coro} \label{C:convergencia45} The sequences $\left(\frac{c_4(n)}{\|c_4(n)\|}\right)_{n\in\N}$  and $\left(\frac{c_5(n)}{\|c_5(n)\|}\right)_{n\in\N}$ converge to a common limit.
\end{coro}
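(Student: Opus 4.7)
The plan is to deduce the Corollary directly from Proposition \ref{P:4y5}, using the fact that the unit sphere in $\mathbb{R}^{10}$ is a complete metric space under the angular (or equivalently Euclidean chord) distance, so that Cauchy sequences converge. Both statements (1) and (2) of Proposition \ref{P:4y5} can be phrased as Cauchy-type conditions on the normalized columns, since the angle between two nonzero vectors coincides with the angular distance between their normalizations.

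First, I would argue that the sequence $u_n:=c_4(n)/\|c_4(n)\|_e$ is a Cauchy sequence on $S^9$. This is immediate from Proposition \ref{P:4y5}(1): given $\varepsilon>0$, there exists $n_0$ such that $\delta_4(m,n)<\varepsilon$ for all $m,n\geq n_0$, and this angle coincides with the angular distance between $u_n$ and $u_m$. Completeness of $S^9$ then yields a limit $u^\star\in S^9$.

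Next I would show that $v_n:=c_5(n)/\|c_5(n)\|_e$ is also Cauchy and has the same limit $u^\star$. For the first point, using the triangle inequality for angular distance,
\begin{equation*}
\alpha_{5,5}(m,n)\;\leq\;\alpha_{4,5}(m,m)\;+\;\delta_4(m,n)\;+\;\alpha_{4,5}(n,n),
\end{equation*}
and each of the three terms on the right tends to zero as $m,n\to\infty$ by Proposition \ref{P:4y5}(1)--(2) (the terms $\alpha_{4,5}(m,m),\alpha_{4,5}(n,n)$ are special cases of item (2) taking the two indices equal). Hence $(v_n)$ is Cauchy and converges to some $v^\star\in S^9$. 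Finally, letting $n\to\infty$ in $\alpha_{4,5}(n,n)\to 0$ and using continuity of the angle on $S^9\times S^9$ gives that the angle between $u^\star$ and $v^\star$ is zero, so $u^\star=v^\star$.

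There is essentially no obstacle here; everything reduces to Proposition \ref{P:4y5}, and the only care needed is in recognizing that the angles $\delta_i(m,n)$ and $\alpha_{i,j}(m,n)$ satisfy the triangle inequality (which they do, being the geodesic distance on the sphere), and that convergence of angles to zero is the same as convergence of the corresponding points on $S^9$. If one prefers to avoid spherical geometry, the same argument works verbatim using the Euclidean distance on $S^9$ together with the elementary inequality $\|x-y\|_e\leq\sqrt{2(1-\cos\theta)}$ relating chord and angle between unit vectors.
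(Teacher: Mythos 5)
Your proof is correct and is essentially the argument the paper has in mind (the corollary is stated without proof, as an ``easy consequence'' of Proposition~\ref{P:4y5}): pass from the angle estimates of Proposition~\ref{P:4y5} to a Cauchy condition on $S^9$, invoke completeness, and identify the two limits via $\alpha_{4,5}(n,n)\to 0$. The only minor remark is notational: the angle between $c_5(m)$ and $c_5(n)$ is $\delta_5(m,n)$ in the paper's notation rather than $\alpha_{5,5}(m,n)$, but the triangle inequality you write is exactly the geodesic triangle inequality on $S^9$ and the argument is sound.
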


Next result is needed to prove below the existence of a common limit direction for the first and second column of $N_n$.

\begin{proposition} It holds: 
\begin{enumerate}
 \item[(1)] $\lim_{m,n\rightarrow\infty}\delta_2(n,m)=0$.
 \item[(2)] $\lim_{m,n\rightarrow\infty}\alpha_{2,1}(n,m)=0$. 
\end{enumerate}

\end{proposition}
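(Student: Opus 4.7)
The plan is to carry out the same double-iteration/telescoping argument used in the proof of Proposition~\ref{P:4y5}, but now invoking the $c_2$-analogues: Theorem~\ref{T:c2} for part (1) and Corollary~\ref{C:c1} for part (2). Both these bounds are of the form ``$2r_{n+1}c_2(n)\le (\text{next column})\le 2r_{n+1}c_2(n)(1+10^{-p_{n+1}})$'', which matches perfectly the shape exploited in the $c_4,c_5$ case, so essentially the same computation will work; the only bookkeeping differences are the extra factor $2^{n-m}$ in the normalisation and the fact that for part (2) we compare $c_1(n)$ with $c_2(m)$.

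For part (1), fix $n>m$ (the symmetric case $n<m$ follows since $\delta_2(n,m)=\delta_2(m,n)$). Iterating Theorem~\ref{T:c2} from level $m$ up to level $n$ gives, componentwise,
\[
c_2(m)\;<\;\frac{c_2(n)}{2^{n-m}\,r_{m+1}r_{m+2}\cdots r_n}\;<\;c_2(m)\prod_{j=m+1}^{n}\bigl(1+10^{-p_j}\bigr).
\]
Set $c_2'(n):=\dfrac{c_2(n)}{2^{n-m}r_{m+1}\cdots r_n}$ and $\beta(n,m):=\dfrac{c_2'(n)}{c_2(m)}$ (coordinatewise). With $p_j=j+1$, the same geometric-series estimate as in the proof of Proposition~\ref{P:4y5} gives
\[
1<\beta(n,m)_i<\prod_{j=m+1}^{n}e^{10^{-p_j}}\le e^{\tfrac{1}{9\cdot 10^{m+1}}},\quad 1\le i\le 10.
\]
Since $\delta_2(n,m)$ equals the angle between $c_2'(n)$ and $c_2(m)$, the standard sandwich
\[
\cos\delta_2(n,m)=\frac{\langle c_2'(n),c_2(m)\rangle}{\|c_2'(n)\|_e\,\|c_2(m)\|_e}\ge \frac{\min_i\beta(n,m)_i}{\max_i\beta(n,m)_i}\ge e^{-\tfrac{1}{9\cdot 10^{m+1}}}
\]
tends to $1$ as $m\to\infty$, uniformly in $n>m$. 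This yields $\lim_{m,n\to\infty}\delta_2(n,m)=0$.

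For part (2), I would repeat the argument verbatim but replace the upper bound by Corollary~\ref{C:c1}, which gives exactly the same estimate $2r_{n+1}c_2(n)<c_1(n+1)<2r_{n+1}c_2(n)(1+10^{-p_{n+1}})$. Iterating yields $c_2(m)<\dfrac{c_1(n)}{2^{n-m}r_{m+1}\cdots r_n}<c_2(m)\prod_{j=m+1}^{n}(1+10^{-p_j})$, so defining $c_1'(n)$ analogously and $\gamma(n,m):=c_1'(n)/c_2(m)$ we obtain $1<\gamma(n,m)_i<e^{1/(9\cdot 10^{m+1})}$ for every coordinate $i$. The very same $\min/\max$ sandwich on the cosine then gives $\cos\alpha_{2,1}(n,m)\ge e^{-1/(9\cdot 10^{m+1})}\to 1$.

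There is essentially no obstacle here: the proof is a routine transcription of the one just given for $\delta_4$ and $\alpha_{4,5}$, the only substantive inputs being Theorem~\ref{T:c2} and Corollary~\ref{C:c1}, whose statements have precisely the shape the argument needs. The only point to be slightly careful with is tracking the harmless factor $2^{n-m}$ in the normalisation (it cancels in both $\beta$ and $\gamma$, hence does not affect the angular estimates) and the fact that in (2) the two vectors being compared come from different columns, which is why we need Corollary~\ref{C:c1} rather than Theorem~\ref{T:c2} to force $c_1(n)$ close to $c_2(m)$.
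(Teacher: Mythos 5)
Your proposal is correct and matches exactly what the paper intends: the paper's proof of this proposition simply says it is "completely analogous" to Proposition~\ref{P:4y5}, using Theorem~\ref{T:c2} for part (1) and Corollary~\ref{C:c1} for part (2), and you have carried out precisely that transcription, correctly tracking the extra normalisation factor $2^{n-m}$ (which is harmless since it only rescales the vector and leaves the angle unchanged).
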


\begin{proof}
The proof is completely analogous to that of Proposition~\ref{P:4y5}; now in Part (1) we use Theorem~\ref{T:c2}, and for Part (2) we apply Corollary~\ref{C:c1}. The details are left in charge of the reader. 
\end{proof}

An easy consequence of last proposition is, again, the existence of a (unique) limit direction for {columns $2$ and $1$} of $N_n$.

\begin{coro} \label{C:convergencia12}
The sequences $\left(\frac{c_2(n)}{\|c_2(n)\|}\right)_{n\in\N}$  and $\left(\frac{c_1(n)}{\|c_1(n)\|}\right)_{n\in\N}$ converge to a common limit.
\end{coro}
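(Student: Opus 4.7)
The strategy mirrors the proof of Corollary~\ref{C:convergencia45}: convert the two vanishing-angle statements from the preceding Proposition into honest Euclidean convergence on the unit sphere of $\mathbb{R}^{10}$. Throughout, I will use the elementary identity that for unit vectors $v,w\in\mathbb{R}^{10}$ with angle $\theta=\angle(v,w)\in[0,\pi]$ between them, one has $\|v-w\|_e^2 = 2(1-\cos\theta) = 4\sin^2(\theta/2)$, so in particular $\|v-w\|_e \leq \theta$.

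First, set $v_n := c_2(n)/\|c_2(n)\|_e$ and apply part~(1) of the preceding Proposition. Since $\delta_2(m,n)$ is precisely the angle between $v_m$ and $v_n$, the hypothesis $\lim_{m,n\to\infty}\delta_2(m,n)=0$ together with the above identity gives $\lim_{m,n\to\infty}\|v_n-v_m\|_e=0$. Hence $(v_n)_n$ is Cauchy in $\mathbb{R}^{10}$, and by completeness it converges to some unit vector $u$. Positivity of every $c_2(n)$ ensures that $u$ has non-negative entries.

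Second, set $w_n := c_1(n)/\|c_1(n)\|_e$ and use part~(2) of the Proposition specialized to $m=n$: since $\lim_{n\to\infty}\alpha_{2,1}(n,n)=0$ (which is a particular case of the double limit), the chord-angle inequality gives $\|v_n - w_n\|_e \to 0$. Given $\varepsilon>0$, choose $N$ such that for all $n\geq N$ one has both $\|v_n-u\|_e<\varepsilon/2$ and $\|v_n-w_n\|_e<\varepsilon/2$; the triangle inequality then yields $\|w_n-u\|_e<\varepsilon$, so $w_n\to u$ as well. Consequently both sequences converge to the common unit vector $u$.

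No substantial obstacle is expected; this is essentially bookkeeping once both parts of the Proposition are in place, and the argument is word-for-word parallel to the one that underlies Corollary~\ref{C:convergencia45}. The only subtlety is the direction in which the chord-angle relation is used — angles must control chords, not the reverse — and that direction is exactly what the inequality $\|v-w\|_e\leq\theta$ provides on the unit sphere.
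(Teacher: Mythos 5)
Your proof is correct and fills in precisely the details the paper leaves implicit when it declares this an ``easy consequence'' of the preceding proposition: you use the chord--angle bound $\|v-w\|_e\le\theta$ on the unit sphere to convert the double-limit condition $\delta_2(m,n)\to0$ into Cauchyness of the normalized second columns, identify the limit $u$, and then use $\alpha_{2,1}(n,n)\to 0$ plus the triangle inequality to drag the normalized first columns to the same limit. This is the intended argument and exactly parallels Corollary~\ref{C:convergencia45}.
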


In the sequel, for any pair of vectors $\bbb,\sss\in\R^n$, the angle between them will be denoted by $\alpha_{\,\bbb,\,\sss}.$ 

\begin{lemma}\label{L:cocientebs} Let $\bbb$ and $\sss$ be in $\R_+^{{d}}$ and assume that  
$\left|\frac{\sss}{\bbb}\right|<K\in\R${. Then} $\cos(\alpha_{\bbb,\sss+\bbb}){ > }\frac1{1+K{\sqrt{d}}}.$

\end{lemma}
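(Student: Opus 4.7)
The plan is to compute $\cos(\alpha_{\bbb,\sss+\bbb})$ directly from the inner product definition and bound it using (a) positivity of the coordinates and (b) the norm-equivalence $|\cdot|\leq \|\cdot\|_e\leq \sqrt{d}\,|\cdot|$ from (\ref{E:relnormas}).

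First, I would write
\[
\cos(\alpha_{\bbb,\sss+\bbb}) \;=\; \frac{\langle \bbb,\sss+\bbb\rangle}{\|\bbb\|_e\,\|\sss+\bbb\|_e}
\;=\; \frac{\|\bbb\|_e^2+\langle \bbb,\sss\rangle}{\|\bbb\|_e\,\|\sss+\bbb\|_e}.
\]
Since $\bbb,\sss\in\R_+^{d}$, the inner product $\langle \bbb,\sss\rangle$ is strictly positive, so the numerator is at least $\|\bbb\|_e^2$. Applying the triangle inequality to the denominator, this yields
\[
\cos(\alpha_{\bbb,\sss+\bbb}) \;\geq\; \frac{\|\bbb\|_e}{\|\bbb\|_e+\|\sss\|_e}
\;=\; \frac{1}{1+\|\sss\|_e/\|\bbb\|_e}.
\]

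The remaining task is to estimate $\|\sss\|_e/\|\bbb\|_e$ in terms of $K$. The hypothesis $\left|\sss/\bbb\right|<K$ (componentwise quotient, sup-norm) means $s_i<Kb_i$ for every coordinate $i$, hence $|\sss|=\max_i s_i < K\max_i b_i = K|\bbb|$. Combining the two inequalities in (\ref{E:relnormas}) I then get
\[
\|\sss\|_e \;\leq\; \sqrt{d}\,|\sss| \;<\; \sqrt{d}\,K\,|\bbb| \;\leq\; \sqrt{d}\,K\,\|\bbb\|_e,
\]
so $\|\sss\|_e/\|\bbb\|_e < K\sqrt{d}$. Substituting into the previous inequality gives exactly
\[
\cos(\alpha_{\bbb,\sss+\bbb}) \;>\; \frac{1}{1+K\sqrt{d}},
\]
which is the desired bound.

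There is no real obstacle here: the only subtlety is to be careful that the inequality $|\sss/\bbb|<K$ is with respect to the sup-norm, so it transfers to Euclidean norms only after paying the $\sqrt{d}$ factor coming from (\ref{E:relnormas}); a direct coordinatewise bound actually gives the sharper $\frac{1}{1+K}$, but the weaker form suffices for the subsequent applications in the paper.
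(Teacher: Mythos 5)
Your proof is correct and follows essentially the same route as the paper: dropping the positive term $\langle \bbb,\sss\rangle$ (equivalently $\langle\bbb,\bbb\rangle\leq\langle\bbb+\sss,\bbb\rangle$), the triangle inequality on $\|\sss+\bbb\|_e$, and the transfer from the sup-norm to the Euclidean norm via~(\ref{E:relnormas}) to get $\|\sss\|_e<K\sqrt d\,\|\bbb\|_e$. Your closing remark is also right that a direct coordinatewise estimate $\|\sss\|_e<K\|\bbb\|_e$ would give the sharper bound $\frac{1}{1+K}$; the paper pays the $\sqrt d$ factor unnecessarily, but that does not affect its later use.
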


\begin{proof}
Since $\left|\frac\sss\bbb\right|<K$ then $s_j<K b_j$ for any $1\leq j\leq {d}$, and $|\sss|<K|\bbb|$. We claim that 
$\left\|\sss\right\|_e<K{\sqrt{d}}\left\|\bbb\right\|_e.$ Indeed, by the equivalence between norms (\ref{E:relnormas}), we know that $|\sss|\geq \frac{1}{ {\sqrt{d}}}\left\|\sss\right\|_e$ and 
$|\bbb|\leq \left\|b\right\|_e$; consequently, $\frac{1}{{\sqrt{d}}}\left\|\sss\right\|_e\leq |\sss|<K|\bbb|\leq K\left\|\bbb\right\|_e$, and thus $\left\|\sss\right\|_e<K{\sqrt{d}}\left\|\bbb\right\|_e$, as claimed.

 Observe that $  \left\|\bbb\right\|_e^2 = <\bbb,\bbb>\, \leq\, <\bbb+\sss,\bbb>=\left\|\bbb+\sss\right\|_e \left\| \bbb \right\|_e\cos(\as)\leq
 ( \left\|\bbb \right\|_e+ \left\|\sss \right\|_e) \left\|\bbb \right\|_e \cos(\as)$, thus
 $$\cos(\as)\geq\frac{\left\|\bbb\right\|_e}{ \left\|\bbb \right\|_e+\left\|\sss \right\|_e}= 
\frac{1}{\left(1+\frac{ \left\|\sss\right\|_e}{ \left\|\bbb\right\|_e}\right)} { > } 
 \frac{1}{1+K{\sqrt{d}}}.$$
\end{proof}

\def\angipi{\delta_{c_i(n+1), \mathcal{P}(n)}}

In what follows we denote by {$\mathcal{P}(n)$} the space  generated by $c_2(n)$, $c_4(n)$ and $c_5(n)$. We are going to show that the angle between {$\mathcal{P}(n)$} and $c_i(n+1)$, say {$\angipi$}, goes to 0 as $n$ goes to $\infty$,  for $i=1,\ldots,10$.

\begin{lemma} \label{L:convergenciapequegnas}
The following angles goes to $0$ as $n$ goes to $\infty$:
\begin{enumerate}
 \item The angle between $c_3(n+1)$  and $2c_1(n)+2c_2(n)+c_5(n)$.
 \item The angle between $c_6(n+1)$  and $2c_1(n)+2c_2(n)$.
\item The angle between $c_7(n+1)$ and $2\cunon+\cdosn$.
\item The angle between $c_9(n+1)$  and $3\cunon$.
\item The angle between $c_i(n+1)$, $i\in\{8,10\}$,  and $2\cunon$.
\end{enumerate}

%
\end{lemma}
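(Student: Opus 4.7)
The plan is to treat all five items uniformly, by writing $c_i(n+1)$ as the sum of the claimed ``main'' part $\bbb(n)$ and a ``small'' remainder $\sss(n)$, then applying Lemma~\ref{L:cocientebs} with $d=10$. Namely, using Claim~\ref{C:relationship}, in each case I decompose
\[
c_i(n+1)=\bbb(n)+\sss(n),
\]
where $\bbb(n)$ is the vector appearing in the statement and $\sss(n)$ is a non-negative combination of columns $c_j(n)$ with $j\in\{3,6,7,8,9,10\}$. Explicitly: for (1), $\sss(n)=2c_3(n)+2c_6(n)+2c_7(n)+c_9(n)+10c_{10}(n)$; for (2), $\sss(n)=c_3(n)+2c_6(n)+2c_7(n)+c_9(n)+8c_{10}(n)$; for (3), $\sss(n)=2c_7(n)+c_9(n)+4c_{10}(n)$; for (4), $\sss(n)=c_8(n)+2c_9(n)+3c_{10}(n)$; and for (5), $\sss(n)=c_8(n)+c_9(n)+2c_{10}(n)$ or $\sss(n)=c_8(n)+c_9(n)+3c_{10}(n)$, respectively.

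The next step is the componentwise bound $|\sss(n)/\bbb(n)|\to 0$. In every case $\bbb(n)\geq 2c_1(n)$ componentwise (in item (4), $\bbb(n)=3c_1(n)$), and, by Lemma~\ref{L:cocientec34}, every $c_j(n)$ with $j\in\{3,6,7,8,9,10\}$ satisfies $c_j(n)\leq \tfrac{1}{20}10^{-p_n}\,c_\ell(n)$ for each $\ell\in\{1,2,4,5\}$. Thus, letting $S_i$ denote the sum of the integer coefficients appearing in $\sss(n)$ (so $S_1=17$, $S_2=14$, $S_3=7$, $S_4=6$, and $S_5\in\{4,5\}$), one obtains
\[
\sss(n)\leq \tfrac{S_i}{20}\,10^{-p_n}\, c_1(n)\leq \tfrac{S_i}{40}\,10^{-p_n}\,\bbb(n),
\]
componentwise, so $K_n:=|\sss(n)/\bbb(n)|\leq \tfrac{S_i}{40}\,10^{-p_n}$. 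Since $p_n=n+1\to\infty$, $K_n\to 0$.

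Finally, Lemma~\ref{L:cocientebs} (applied with $\bbb=\bbb(n)$, $\sss=\sss(n)$, and $d=10$) gives
\[
\cos\alpha_{\bbb(n),\,\bbb(n)+\sss(n)}>\frac{1}{1+K_n\sqrt{10}},
\]
and the right-hand side tends to $1$ as $n\to\infty$. Consequently, the angle between $c_i(n+1)=\bbb(n)+\sss(n)$ and $\bbb(n)$ tends to $0$ in each of the five cases. There is no substantial obstacle beyond the bookkeeping of coefficient sums and verifying the inequality $\bbb(n)\geq 2c_1(n)$ (respectively, $3c_1(n)$ in case~(4)) in each decomposition; the heart of the argument is Lemma~\ref{L:cocientec34}, which already forces every ``small'' column to be dominated by any ``large'' one with a quantitative rate depending on $p_n$.
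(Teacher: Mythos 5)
Your proof is correct and follows essentially the same route as the paper: decompose $c_i(n+1)=\bbb(n)+\sss(n)$ via Claim~\ref{C:relationship}, bound $|\sss(n)/\bbb(n)|$ using Lemma~\ref{L:cocientec34}, and conclude via Lemma~\ref{L:cocientebs}. The only difference is cosmetic -- the paper routes the small part through $c_3(n)$ and the large part through $c_5(n)$ (via Lemma~\ref{C:ordencolumnas}), whereas you route both through $c_1(n)$, which is a slightly cleaner bookkeeping.
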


\begin{proof} We prove (1) and (2). 
Let $\ssstresn=2\ctresn+2\cseisn+2\csieten+\cnueven+10\cdiezn$, 
  $\sssseisn=\ctresn+2\cseisn+2\csieten+\cnueven+8\cdiezn$,
  $\bbbtresn=2\cunon+2\cdosn+\ccincon$
  and 
  $\bbbseisn=2\cunon+2\cdosn${;} then by Claim~\ref{C:relationship} we have 
  $\ctresnMuno=\bbbtresn+\ssstresn$
  and $\cseisnMuno=\bbbseisn+\sssseisn$.
  
Observe that{, by Lemma~\ref{C:ordencolumnas},} $\sssin<17\ctresn$   and $\bbbin>4\ccincon$ 
for $i\in\{3,6\}$. Then  Lemma~\ref{L:cocientec34} yields 
 $\frac{\sssin}{\bbbin}<\frac{17\ctresn}{4\ccincon}<\frac{17}4 \frac1{20}10^{-p_n}${, $i\in\{3,6\}$.}
 Now we apply Lemma~\ref{L:cocientebs} to obtain that 
$\cos(\alpha_{\cinMuno,\bbbin})\geq \frac{1}{1+\frac{17}{80} 10^{-p_n}{\sqrt{10}}}$. Then $\cos(\alpha_{\cinMuno,\bbbin})$ goes to $1$ as $n$ goes to $\infty$ and the angle $\alpha_{\cinMuno,\bbbin}$ goes to $0$. This finishes the proof of (1) and (2). 

The proof of the other items follows with minor changes. {For instance, in $(3)$ take $s_7(n)=2c_7(n)+c_9(n)+4c_{10}(n)$ and $b_7(n)=2c_1(n)+c_2(n)$, with $c_7(n+1)=s_7(n)+b_7(n)$ by Claim~\ref{C:relationship}, and $\frac{s_7(n)}{b_7(n)}<\frac{7}{3}\frac{\csieten}{\ccuatron}<\frac73\frac1{20}10^{-p_n}$ by Lemmas~\ref{C:ordencolumnas} and
\ref{L:cocientec34}; for the rest of cases, proceed similarly.}
\end{proof}

As a consequence of  Corollaries~\ref{C:convergencia45}, \ref{C:convergencia12}   and 
Lemma~\ref{L:convergenciapequegnas} we have:
\begin{theorem}\label{T:limiteplano}
Let $c_2$ be the common limit point of $\left(\frac{\cunon}{|\cunon|}\right)_n$  and $\left(\frac{\cdosn}{|\cdosn|}\right)_n$ and let $c_4$ be the common limit point of $\left(\frac{\ccuatron}{|\ccuatron|}\right)_n$ and $\left(\frac{\ccincon}{|\ccincon|}\right)_n$.

 Then the  sequences $\left\{\left(\frac{c_i(n)}{|c_i(n)|}\right)_n\right\}_{i=1}^{10}$ converge to  ${\mathcal{P}}=\{{\alpha} c_2+{\beta} c_4: {\alpha}\geq 0, {\beta}\geq 0\}.$
\end{theorem}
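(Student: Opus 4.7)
My plan is to dispatch each of the ten normalized column sequences individually, reducing every case to the two directional limits $c_2$ and $c_4$ already constructed in Corollaries~\ref{C:convergencia12} and~\ref{C:convergencia45}. The statement ``converges to $\mathcal{P}$'' I read as: the distance from $c_i(n)/|c_i(n)|$ to the cone $\mathcal{P}$ tends to $0$ as $n\to\infty$.

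For $i\in\{1,2\}$, Corollary~\ref{C:convergencia12} already gives $c_i(n)/|c_i(n)|\to c_2\in\mathcal{P}$; for $i\in\{4,5\}$, Corollary~\ref{C:convergencia45} gives $c_i(n)/|c_i(n)|\to c_4\in\mathcal{P}$. So the four ``large'' columns are handled with no further work.

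For the six remaining indices $i\in\{3,6,7,8,9,10\}$, I will combine Lemma~\ref{L:convergenciapequegnas} with the elementary observation that non-negative combinations respect the cone $\mathcal{P}$ in the following quantitative sense. Writing $c_1(n)=|c_1(n)|(c_2+\varepsilon_n)$, $c_2(n)=|c_2(n)|(c_2+\delta_n)$ and $c_5(n)=|c_5(n)|(c_4+\eta_n)$ with $\varepsilon_n,\delta_n,\eta_n\to 0$ (by Corollaries~\ref{C:convergencia12} and~\ref{C:convergencia45}), any non-negative combination $a\,c_1(n)+b\,c_2(n)+c\,c_5(n)$ takes the form
\[
\bigl(a|c_1(n)|+b|c_2(n)|\bigr)c_2+c\,|c_5(n)|\,c_4+\rho_n,
\]
where $|\rho_n|\leq a|c_1(n)|\,|\varepsilon_n|+b|c_2(n)|\,|\delta_n|+c\,|c_5(n)|\,|\eta_n|$ is of order $o\bigl(a|c_1(n)|+b|c_2(n)|+c\,|c_5(n)|\bigr)$. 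Normalizing the left-hand side therefore places it at vanishing distance from $\mathcal{P}$. Combined with Lemma~\ref{L:convergenciapequegnas}, which asserts that the angle between $c_i(n+1)$ and such a non-negative combination of $c_1(n),c_2(n),c_5(n)$ tends to $0$ (using $a=c=0$ for items (4)--(5), $a=b, c=0$ for items (2)--(3), and $a=b=c=1$ up to scalars for item (1)), this gives the desired conclusion for each remaining $i$.

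The only bookkeeping that requires attention concerns the relative magnitudes entering the decomposition for $c_3(n+1)$: by Lemma~\ref{C:ordencolumnas} the norm $|c_5(n)|$ is substantially smaller than $|c_1(n)|$ and $|c_2(n)|$, so in practice the normalized limit of $c_3(n+1)$ may simply equal $c_2$ rather than a genuine two-parameter combination. Either outcome lies in $\mathcal{P}$, so no sharper estimate is required and the theorem follows. No step here appears to present a substantive obstacle, as all required directional convergences and magnitude comparisons have been established in the preceding results.
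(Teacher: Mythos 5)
Your argument reproduces what the paper leaves implicit: the paper gives no actual proof of Theorem~\ref{T:limiteplano}, merely asserting it ``as a consequence of'' Corollaries~\ref{C:convergencia12}, \ref{C:convergencia45} and Lemma~\ref{L:convergenciapequegnas}, and your decomposition $c_i(n)=|c_i(n)|(c_\ast+\text{error})$ followed by the normalization estimate is exactly the missing bridge. The case split over $i$ and the reduction of $i\in\{3,6,\dots,10\}$ to the angle content of Lemma~\ref{L:convergenciapequegnas} are correct. (There are small slips in the coefficient bookkeeping: items (3)--(5) use $b_7(n)=2c_1(n)+c_2(n)$, $b_9(n)=3c_1(n)$, $b_8(n)=b_{10}(n)=2c_1(n)$, so it is $b=c=0$ rather than $a=c=0$ for (4)--(5), and $a\neq b$ in (3); these do not affect the argument since all that matters is that only $c_1(n),c_2(n)$ appear there.)

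Two points deserve to be tightened. First, you read ``converges to $\mathcal{P}$'' as ``distance to the cone tends to $0$,'' but in the proof of the Main Theorem the authors invoke Theorem~\ref{T:limiteplano} to conclude that each sequence $\bigl(c_i(n)/|c_i(n)|\bigr)_n$ \emph{actually converges} (this is how they obtain boundedness of the coefficients $\lambda^i_n$); so the stronger reading is needed, and for $i=3$ one must really show that the normalized direction of $2c_1(n)+2c_2(n)+c_5(n)$ converges, not merely approaches $\mathcal{P}$. Second, this in turn requires knowing that the $c_5(n)$ term is asymptotically negligible, i.e.\ $|c_5(n)|/|c_1(n)|\to 0$; Lemma~\ref{C:ordencolumnas} only gives $|c_5(n)|<|c_1(n)|$, not the vanishing of the ratio. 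The correct references are Corollary~\ref{C:c1} and Corollary~\ref{C:c5} (or Theorems~\ref{T:c2} and~\ref{T:c4}), which give $c_1(n+1)\approx 2r_{n+1}c_2(n)$ and $c_5(n+1)\approx r_{n+1}c_4(n)$, whence $|c_1(n)|/|c_5(n)|$ grows roughly like $2^n|c_2(0)|/|c_4(0)|\to\infty$. With that citation fixed, all ten normalized column sequences converge (to $c_2$ for $i\in\{1,2,3,6,\ldots,10\}$ and to $c_4$ for $i\in\{4,5\}$), both limits lie in $\mathcal{P}$, and the theorem in the form the paper later uses is established.
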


\section{Proof of Main Theorem}\label{S:proofmaintheorem}

We begin by introducing some notation and a technical lemma. 
For a given real ${d}\times {d}$ matrix $A${,} we define $\Delta_A=A\Lambda^{{d}}${;} recall that
$\Lambda^{{d}}=\R_+^{{d}}=\{\sum_{{{i=1}}}^{{d}}\lambda_ie_i:\lambda_i>0\}${,} where $e_i$ denotes the $i$-th vector of the canonical basis of $\R^{{d}}$. Observe that 
$\Cl{\Lambda^d}=\{ \sum_{{{i=1}}}^{{d}}\lambda_ie_i:\lambda_i\geq 0\}$ and
$\Bd{\Lambda^d}=\{ \sum_{{{i=1}}}^{{d}}\lambda_ie_i:\lambda_i\geq 0 \textrm{ and at least one}{\,} \lambda_j=0, \, 1\leq j\leq {d} \}${, where $\Cl$ and $\Bd$ denote the closure and the boundary of a set of points of $\mathbb R^{d}$, respectively.}

\begin{lemma}\label{L:conomatriz} Let $A$, $B$ and $A_j$, $j\in\N$, be nonnegative invertible  ${d}\times  {d}$ matrices. For any $1\leq i\leq {d,}$ $a_i$ and $b_i$ denote the $i$-th columns of $A$ and $B${,} respectively. It holds:
\begin{enumerate}
 \item $\Delta_A=\{\sum_{i=1}^{{d}}\lambda_i a_i: \lambda_i>0\}=\{\sum_{i=1}^{{d}}\lambda_i \frac{a_i}{|a_i|}: \lambda_i>0\}$.
 \item  $\Delta_{AB}{\,\subseteq\,} \Delta_A$.
 \item If $B$ is positive{,} $\Cl \Delta_{AB}\backslash\{0\}\subsetneq \Delta_A${, and $\Delta_{AB}\,\subsetneq\,  \Delta_A$.}
 \item  $A\left(\cap_{j\in\N}A_1A_2\dots A_j\Lambda^{{d}}\right)=\cap_{j\in\N}AA_1A_2\dots A_j\Lambda^{{d}}${.}
 
\end{enumerate}
\end{lemma}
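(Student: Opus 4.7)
The plan is to handle items (1), (2), and (4) as bookkeeping around the identity $\Delta_M = M\Lambda^d$, and then to concentrate on the strict inclusion in (3), which is the only item that uses positivity in an essential way. For (1), I would expand $A\lambda = \sum_{i=1}^d \lambda_i a_i$ for $\lambda=(\lambda_1,\ldots,\lambda_d)\in\Lambda^d$; invertibility of $A$ makes every column nonzero, so $|a_i|>0$, and the rescaling $\lambda_i\mapsto \lambda_i|a_i|$ is a bijection of $(0,\infty)$ that delivers the second equality. For (2), if $x=AB\lambda$ with $\lambda\in\Lambda^d$, then $(B\lambda)_i=\sum_j B_{ij}\lambda_j>0$ because each $\lambda_j>0$, each $B_{ij}\geq 0$, and the $i$-th row of $B$ has a positive entry (otherwise $B$ would be singular), so $B\lambda\in\Lambda^d$ and $x\in\Delta_A$. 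For (4), the inclusion $\subseteq$ is automatic; conversely, if $x\in\bigcap_j AA_1\cdots A_j\Lambda^d$, then for each $j$ there exists $\lambda_j\in\Lambda^d$ with $x=AA_1\cdots A_j\lambda_j$, and invertibility of $A$ forces $A^{-1}x=A_1\cdots A_j\lambda_j$ to be independent of $j$; thus $y:=A^{-1}x$ lies in every $A_1\cdots A_j\Lambda^d$, and $x=Ay$ belongs to the left-hand side.

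For (3), I would first establish the non-strict inclusion $\Cl\Delta_{AB}\setminus\{0\}\subseteq \Delta_A$ via the observation that if $B$ is strictly positive and $\mu\in\Cl\Lambda^d\setminus\{0\}$, then $B\mu\in\Lambda^d$: choosing any $j_0$ with $\mu_{j_0}>0$, we have $(B\mu)_i\geq B_{i,j_0}\mu_{j_0}>0$ for every $i$. Since $AB$ is a homeomorphism of $\R^d$, we get $\Cl\Delta_{AB}=AB\cdot\Cl\Lambda^d$, and the previous observation then yields $\Cl\Delta_{AB}\setminus\{0\}=A(B(\Cl\Lambda^d\setminus\{0\}))\subseteq A\Lambda^d=\Delta_A$.

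The step I expect to be the main obstacle is producing an explicit element of $\Delta_A$ lying outside $\Cl\Delta_{AB}$. My approach rests on showing that a strictly positive invertible matrix $B$ cannot have a nonnegative inverse when $d\geq 2$: if the first column of $B^{-1}$ were nonnegative, then the off-diagonal relations $(BB^{-1})_{i,1}=\sum_k B_{ik}(B^{-1})_{k,1}=0$ for $i\neq 1$ would force every $(B^{-1})_{k,1}$ to vanish (since each $B_{ik}>0$ and the summands are nonnegative), contradicting invertibility of $B^{-1}$. Choosing an index $k$ with $(B^{-1})_{k,1}<0$ and setting $v_\varepsilon:=e_1+\varepsilon(e_2+\cdots+e_d)\in\Lambda^d$, continuity yields $(B^{-1}v_\varepsilon)_k<0$ for all sufficiently small $\varepsilon>0$, so $v_\varepsilon$ cannot be written as $B\mu$ with $\mu\in\Cl\Lambda^d$. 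Invertibility of $A$ then gives $Av_\varepsilon\in\Delta_A\setminus\Cl\Delta_{AB}$, from which both $\Cl\Delta_{AB}\setminus\{0\}\subsetneq\Delta_A$ and, a fortiori, $\Delta_{AB}\subsetneq\Delta_A$ follow at once.
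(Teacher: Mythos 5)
Your proof of items (1), (2) and (4) follows the paper's argument essentially verbatim: the column expansion $A\lambda=\sum_i\lambda_i a_i$, the containment $B\Lambda^d\subseteq\Lambda^d$ (you spell out why each entry of $B\lambda$ is strictly positive, where the paper just asserts it), and the double inclusion via $A^{-1}$ for (4).

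For item (3) you and the paper diverge on the strictness part. You both show the non-strict inclusion $\Cl\Delta_{AB}\setminus\{0\}\subseteq\Delta_A$ in the same way, by observing that a strictly positive $B$ sends $\Cl\Lambda^d\setminus\{0\}$ into $\Lambda^d$. The paper then finishes with a purely topological remark: $\Delta_A$ is open (as the image of the open cone $\Lambda^d$ under a linear homeomorphism), whereas $\Cl\Delta_{AB}\setminus\{0\}$ is not open for $d\geq 2$, so the two sets cannot coincide; $\Delta_{AB}\subsetneq\Delta_A$ then falls out since $\Delta_{AB}\subsetneq\Cl\Delta_{AB}\setminus\{0\}$. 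Your route is more constructive: you prove the nice algebraic fact that a strictly positive invertible matrix cannot have a nonnegative inverse when $d\geq 2$ (by reading off the off-diagonal equations in $BB^{-1}=I$), pick a column index $k$ with $(B^{-1})_{k,1}<0$, and exhibit the explicit point $Av_\varepsilon$ with $v_\varepsilon=e_1+\varepsilon(e_2+\cdots+e_d)$ that lies in $\Delta_A$ but not in $\Cl\Delta_{AB}=AB\,\Cl\Lambda^d$. Both arguments are correct and both quietly require $d\geq 2$ (for $d=1$ the strict inclusion in (3) is simply false, since then $\Cl\Delta_{AB}\setminus\{0\}=\Delta_A=(0,\infty)$); you make this restriction explicit, the paper leaves it implicit, and in the application $d=10$ so it is harmless. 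The paper's openness argument is shorter; your witness-based argument is longer but more self-contained and yields as a byproduct the statement about nonnegative inverses of positive matrices.
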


\begin{proof}
 We prove the first item. {Given $\lambda=(\lambda_i)\in\Lambda^{d}$, observe} that  
 ${A\lambda}=A\sum_{i=1}^{{d}}\lambda_ie_i=
 \sum_{i=1}^{{d}}\lambda_i Ae_i=\sum_{i=1}^{{d}}\lambda_i a_i$ and the first  equality holds. The second equality is trivial. 
 Note now that{,} since $B$ is nonnegative and invertible{,} $B\Lambda^{{d}}\subseteq \Lambda^{{d}}$ and then $AB\Lambda^{{d}}{\subseteq} A\Lambda^{{d}}$, thus  (2) holds. 
 We now prove the third item. Let $u\in\Cl \Delta_{AB}\backslash\{0\}=(\Cl AB\Lambda^{{d}})\backslash\{0\}${;} then there exists $(\lambda_i)\in(\Cl \Lambda^{{d}})\backslash\{0\}$ such that $u=AB(\lambda_i)=\sum_{i=1}^{{d}} A B\lambda_i e_i=\sum_{i=1}^{{d}} A \lambda_i b_i=A\sum_{i=1}^{{d}}\lambda_i b_i$. Since  
  $(\lambda_i)\not =0$ there exists 
  $j\in\{1,2,\dots, {d}\}$ such that 
  $\lambda_j\not=0${;} also $b_j\not=0$ by hypothesis and then 
$\sum_{i=1}^{{d}}\lambda_i b_i>0$, therefore $u\in \Delta_A$. Moreover the equality does not hold since {$\Delta_{A}$} is open and 
$\Cl \Delta_{AB}\backslash\{0\}$  is not. {As a direct consequence, $\Delta_{AB}\,\subsetneq\,  \Delta_A$ because $\Cl\Delta_{AB}\setminus\{0\}\supsetneq \Delta_{AB}\backslash\{0\}=\Delta_{AB}$.} 

Finally we prove (4). Let us first see ``$\subseteq$''{;} take $u\in A\left(\cap_{j\in\N}A_1A_2\dots A_j\Lambda^{{d}}\right)${,} then there exists a sequence $(\lambda^j)_{j=1}^\infty$, $\lambda^j\in\Lambda^{{d}}$ for any $j\in\N$,  such that $u=AA_1A_2\dots A_j\lambda^j${;} then  
$u\in \cap_{j\in\N}AA_1A_2\dots A_j\Lambda^{{d}}$. Now we show  ``$\supseteq$''{;} let $u\in \cap_{j\in\N}AA_1A_2\dots A_j\Lambda^{{d}}{,}$ then there exists a sequence $(\lambda^j)_{j=1}^\infty$, $\lambda^j\in\Lambda^{{d}}$ for any $j\in\N$,  such that $u=AA_1A_2\dots A_j\lambda^j$, then $A^{-1}u=A_1A_2\dots A_j\lambda^j$ and 
$A^{-1}u\in \cap_{j\in\N} A_1A_2\dots A_j\Lambda^{{d}}{,}$ thus $u\in A {(}\cap_{j\in\N} A_1A_2\dots A_j\Lambda^{{d}}{)}$. 
\end{proof}

\subsection{Proof of Theorem~\ref{T:xxlnoperiodico}}

For any $k\in\N$ we write {
$$P_k:= M_{v_{n_k}}(\pi_{n_k})\cdot
M_{v_{n_k+1}}(\pi_{n_k+1})\cdot \ldots \cdot M_{v_{n_{k+1}-1}}(\pi_{v_{n_{k+1}-1}})$$} and 
{$$\Delta_k:=M_{v_1}(\pi_1)\cdot
M_{v_2}(\pi_2)\cdot \ldots \cdot M_{v_k}(\pi_k) \Lambda^{{d}}.$$}
Note that by Lemma~\ref{L:conomatriz}-(2) $\Delta_k{\supset} \Delta_{k+1}$ for any $k\in\N$ and 
$\bigcap_{i=1}^\infty \Delta_i=
\bigcap_{i=1}^\infty 
\Delta_{n_i}.
$ Thus, the following equality is true:
\begin{equation}
\bigcap_{i=1}^\infty M_{v_1}(\pi_1)\cdot
M_{v_2}(\pi_2)\cdot \ldots \cdot M_{v_i}(\pi_i) \Lambda^{{d}}=
\bigcap_{i=1}^\infty 
P_1\cdot P_2
\cdot \ldots \cdot P_{i}\Lambda^{{d}}.
\label{E:conoreducido} 
\end{equation}

Let us proceed with the first item in Theorem~\ref{T:xxlnoperiodico}{;} it will be enough if we show that   
$ \bigcap_{i=1}^\infty 
P_1\cdot P_2
\cdot \ldots \cdot P_{i}\Lambda^{{d}}=\bigcap_{i=1}^\infty 
\Delta_{{n_{_{i+1}}-1}}$ is nonempty.
Observe that{, by applying Lemma~\ref{L:conomatriz} to the positive matrices $P_i$, with $n_1=1$, $n_{k+1}>n_k$,}
$$\Cl\Delta_{{n_{_{i+1}}-1}}\backslash\{0\}\supsetneq \Delta_{{n_{_{i+1}}-1}}\supsetneq\Cl\Delta_{{n_{_{i+2}}-1}}\backslash\{0\}\supsetneq \Delta_{{n_{_{i+2}}-1}},$$
and 
$$\Cl\Delta_{{n_{_{i+1}}-1}}\cap \S^{{d}-1}\supsetneq 
\Delta_{{n_{_{i+1}}-1}}\cap \S^{ {d}-1}
\supsetneq
\Cl\Delta_{ {n_{_{i+2}}-1}} \cap \S^{ {d}-1}
\supsetneq 
\Delta_{ {n_{_{i+2}}-1}}\cap \S^{{d}-1},$$
{where $\S^{d-1}$ denotes the $(d-1)$-sphere.}
Then 
$C=\bigcap_{i=1}^\infty\left( \Delta_{{n_{_{i+1}}-1}}\cap \S^{{d}-1}\right)=\bigcap_{i=1}^\infty \left(\Cl\Delta_{{n_{_{i+1}}-1}}\cap\S^{{d}-1} \right)$ 
is a nonempty compact set (cf.~\cite[Th. {5.9}, Ch. 3]{munkres}) and also $\bigcap_{i=1}^\infty \Delta_{{n_{_{i+1}}-1}}$ is nonempty. 
The second item follows by applying \cite[Lemma 18]{linerosolerergodic} repeteadly to any 
$\lambda^1\in{\mathcal{C} (}\mathcal{G}^{\pi_1,v}{)}$.
Item (3) follows from \cite{mioangosto} and (4)  from Theorem~\ref{T-cone}-(3) and Lemma~\ref{L:conomatriz}-(4). {$\hfill\Box$}

\subsection{Finishing the proof {of Main Theorem}}

Let $u\in\{a,b\}^{\N}$ be the vector defined by (\ref{E:chosenpath}) in Section~\ref{Subs:path}, let $$\pi_0=(-3,-4,-5,-6,-7,-8,-9,10,1,-2){,}$$ and 
consider  the Rauzy-subgraph 
associated to $\pi_0$ and $u$. 
Recall that 
$u=v^0*w^1*w^2*w^3*\dots*w^k*\dots$, where $w^j=v(r_j,r_j)$, see~(\ref{eqV}), with a suitable  increasing sequence $(r_j)_j$ satisfying the properties in Remark~\ref{R:choicesrp}. Note  that 
$M_{v^0}(\pi_0)$ is positive, see (\ref{E:c0}), 
$v^0(\pi_0)=\pi_0$ and 
$v^0*w^1*w^2*w^3*\dots*w^k(\pi_0)=\pi_0$ for any $k\in\N$, cf. Proposition~\ref{P:ciclo}.
In what follows, given two real ${d}\times {d}$ matrices $A=(a_{i,j})$ and $B=(b_{i,j})$, we write $A\geq B$ if $a_{i,j}\geq b_{i,j}$ for any $1\leq i,j\leq {d}.$

Take into account Equation~(\ref{Eq:M(r)}) {(in particular, $M_{w^{k}}(\pi_0)=M(r_k)$, with 
$M(r_k)\geq M(1)$);} {since each matrix $M_{w^{5j+s}}(\pi_0)$, $s\in\{1,\ldots,5\},$ has the same distribution of zeros as $M(1)$ and they differ only in the entries containing the values $r_{5j+s}$,}  for any $j\in\N$ {we have:}
\begin{eqnarray*}
M_{w^{5j+1}}(\pi_0)M_{w^{5j+2}}(\pi_0)M_{w^{5j+3}}(\pi_0)M_{w^{5j+4}}(\pi_0)M_{w^{5j+5}}(\pi_0)\geq M(1)^5= \\
\begin{pmatrix}
39272&64132&72637&119636&107632&53638&25558&13266&20228&
 14654\\ 17386&28410&32290&53432&48030&23780&11295&5846&8912&6446
 \\ 1059&1753&2589&5505&4749&1601&655&337&513&368\\ 30&56&470&1691&
 1370&130&8&4&6&4\\ 129&223&697&2095&1729&283&65&33&50&35\\ 8820&
 14426&16744&28410&25423&12154&5710&2956&4506&3258\\ 17386&28409&
 32290&53432&48030&23780&11296&5846&8912&6446\\ 10800&17633&19929&
 32735&29465&14739&7033&3654&5572&4039\\ 20596&33630&38050&62586&
 56320&28119&13408&6962&10616&7692\\ 84081&137408&157716&264200&
 236962&115365&54586&28288&43127&31212\\ 
\end{pmatrix}{.}
\end{eqnarray*}
Then we can apply Theorem~\ref{T:xxlnoperiodico}{, with $n_k=5k+1$, and $$P_k=M_{w^{5k+1}}(\pi_0)M_{w^{5k+2}}(\pi_0)M_{w^{5k+3}}(\pi_0)M_{w^{5k+4}}(\pi_0)M_{w^{5k+5}}(\pi_0),$$}to obtain the existence of a minimal flipped \iet\, $T=(\lambda,\pi_0)$ 
whose associated graph is $\mathcal{G}^{\pi_0,u}$. Moreover{, by the same theorem,} we also find minimal flipped \iets\ with the associated permutations  listed 
in Lemma~\ref{L:permutations}{;} a simple inspection shows the existence of minimal $(10,k)$-flipped \iets\ with $2\leq k\leq 9$ which will be 
non uniquely ergodic since, as we will show, $\mathcal{C}(\mathcal{G}^T)$ has {dimension} 2 and Theorem~\ref{T-cone}-(3) gives the non unique ergodicity.
The existence of minimal flipped non uniquely ergodic $(10,10)$ and $(10,1)$-\iets\ is an easy consequence of Lemma~\ref{L:conomatriz}-(4) and the following relations
in the Rauzy graph (see again Lemma~\ref{L:permutations}):
$$a*b((-2,-3,-4,-5,-6,-7,-8,-10,-1,-9))=\pi_0,\quad
b*a((9,1,10,8,7,6,5,4,3,-2))=\pi_{11}{,}$$
{where notice that the operations are applied from right to left.}

 \def\conoN{\bigcap_{i=0}^\infty N_i \Lambda^{{d}}}

We finish the proof {of Main Theorem by showing that the dimension of $\mathcal{C}(\mathcal{G}^T)$ is just $2$, therefore by Theorem~\ref{T-cone}-(3) we will deduce the non unique ergodicity of $T$}. {To this end,} {r}easoning as in Equation~(\ref{E:conoreducido}) we have:
\begin{equation}\label{Eq:neo}
\bigcap_{i=1}^\infty M_{u_1}(\pi_0)\cdot
M_{u_2}(\pi_1)\cdot \ldots \cdot M_{u_i}(\pi_{i-1}) \Lambda^{ {d}}=
\conoN{,}
\end{equation} 
{where $N_i=M_{v^0*w^1*\ldots*w^i}(\pi_0),$} and then we will be done if we show \begin{equation}
{\left(\bigcap_{n=0}^\infty N_n \Lambda^{d}\right)\setminus\{0\}=\mathcal{P}\setminus\{0\},} 
\label{E:conoigualplano}                                     
                                    \end{equation}
{being $\mathcal{P}=\{\alpha c_2+\beta c_4:\alpha,\beta\geq 0\},$ and $c_2, c_4$ are the vectors given in Theorem~\ref{T:limiteplano}. }

                                    First we begin with the inclusion $\subseteq$. Let ${z}\in
{\bigcap_{n=0}^\infty N_n \Lambda^{d},}$ then 
there exist sequences {of positive real numbers $(\lambda^i_n)_{n\in\N}$,}  $1\leq i\leq 10={d}$,  such that
${z}=\sum_{i=1}^{10}\lambda^i_n \frac{c_i(n)}{|c_i(n)|}$.
Observe that the sequences $(\lambda^i_n)_{n\in\N}$ are bounded{, since $z$ is fixed and each sequence $\left(\frac{c_i(n)}{|c_i(n)|}\right)$ converges by Theorem~\ref{T:limiteplano}; then}  there exists a sequence of naturals $(n_k)_k$ such that $(\lambda^i_{n_k})_k$ converges to $\lambda^i\in\R${, $1\leq i\leq 10$.}

Observe that the sequence $({z}_k)_k$, ${z}_k=\sum_{i=1}^{10}\lambda^i_{n_k} \frac{c_i(n_k)}{|c_i(n_k)|}={z,}$ converges to ${z}$ and {according to Theorem~\ref{T:limiteplano}}   ${z}\in{\mathcal{P}}$.

Now we prove $\supseteq$. Let ${z}\in{\mathcal{P}}$, {$z\neq0$;} then  ${z}=\lambda^2c_2+\lambda^4c_4$ {for some $\lambda^2\geq 0, \lambda^4\geq 0, \lambda^2+\lambda^4>0.$} 
Denote by $(n_i)_i$ the strictly increasing sequence of naturals such that {(see~(\ref{Eq:neo}))} 
$M_{u_1}(\pi_0)\cdot M_{u_2}(\pi_1)\cdot \ldots \cdot M_{u_{n_i}}(\pi_{n_i-1})=N_i{,}$ whose column are $c_j(i)$. Then by Lemma~\ref{L:conomatriz}: 
$$\Delta_i=\left\{\sum_{l=1}^{10} \lambda_l c_l(i): \lambda_l>0\right\}, \quad\Cl \Delta_i=\left\{\sum_{l=1}^{10} \lambda_l c_l(i): \lambda_l\geq 0\right\}
$$
 and
 $$\Cl\Delta_i{\setminus\{0\}}\supsetneq \Delta_i\supsetneq\Cl\Delta_{i+1}\backslash\{0\}\supsetneq \Delta_{i+1},\quad i\in\N. $$
{ We are going to prove that $z\in\Cl\Delta_j\setminus\{0\}$ for all $j\geq 0$. Given $\varepsilon >0,$ take $\widetilde{\varepsilon}=\min\{\varepsilon,\left\|z-0\right\|_{e}/2\}.$ By Theorem~\ref{T:limiteplano}, and ${z}\in{\mathcal{P}}$, there exists $n_0=n_0(\varepsilon)$ such that $\left\|\lambda^2\frac{c_2(n)}{\left|c_2(n)\right|}+\lambda^4 \frac{c_4(n)}{\left|c_4(n)\right|} -z\right\|<\widetilde{\varepsilon}$ for all $n\geq n_0.$ Since $0\notin B_{\widetilde{\varepsilon}}(z)$ (the ball of radius $\widetilde{\varepsilon}$ and center $z$), we deduce that $B_{\widetilde{\varepsilon}}(z)\cap \Cl\Delta_j\setminus\{0\}\neq\emptyset$ for all $j\geq n_0$. Automatically, by the nest structure of $\Delta_j$'s, we infer that $B_{\widetilde{\varepsilon}}(z)\cap \Cl\Delta_j\setminus\{0\}\neq\emptyset$ for all $j\geq0,$ or even 
$B_{\widetilde{\varepsilon}}(z)\cap \Cl\Delta_j\neq\emptyset$ for all $j\geq 0$. By definition of closure, since $\varepsilon$ was arbitrarily taken, we deduce that $z\in\Cl\Delta_j$ for all $j\geq 0$, and therefore $z\in\cap_{j\geq 0}\Cl\Delta_j$. Since $z\neq 0$, we finally find $z\in\left(\cap_{j\geq 0}\Cl\Delta_j\right)\setminus\{0\}=\cap_{j\geq 0}\left(\Cl\Delta_j\setminus\{0\}\right)=\cap_{j\geq 0}\Delta_j$, that is, $z\in \bigcap_{j=0}^\infty N_j \Lambda^{d}$, $z\neq 0$, as desired. 
}
{$\hfill\Box$}


%
%
%
%

\subsection{A consequence on transitive IETs}
We can use Main Theorem to construct transitive non uniquely ergodic $(n,k)$-IETs, for $n\geq 12$ and $1\leq k\leq n$, if $n$ is even, and $1\leq k\leq n-1$ if $n$ is odd. To this end, 
we need some preliminaries about measures. Before, let us recall the construction appearing in \cite{gutierrez4b} in order to obtain, from a transitive $(n,f)$-\iet\, $T:D\subset [0,1]\to[0,1]$, two new transitive IETs, namely, the transitive $(n+1,f)$-\iet,
$T_1:D_1\subset [0,2]\to[0,2]$, and the transitive $(n+2,f+2)$-\iet, $T_2:D_2\subset [0,3]\to[0,3],$ given by:

\begin{equation}\label{Def:T1}
T_1(x)=\left\{\begin{array}{ll}
T(x)+1&\textrm{ if }x\in D\cap [0,1],\\
x-1&\textrm{ if }x\in(1,2),
\end{array}\right. \quad \qquad
T_2(x)=\left\{\begin{array}{ll}
T(x)+1&\textrm{ if }x\in D\cap [0,1],\\
-x+4&\textrm{ if }x\in(1,2),\\
-x+3&\textrm{ if }x\in(2,3).\\
\end{array}\right.
\end{equation}
In the next results we analyze the preimages of $T_1$ and $T_2$. Given a positive integer $k$ and an arbitrary set $X$, 
we put $X+k=k+X=\{x+k:x\in X\}$, $X-k=\{x-k:x\in X\}$, and $k-X=\{k-x:x\in X\}.$

\begin{lemma}\label{L:inversasT1} The following properties hold:
\begin{itemize}
\item[(a)] $T_1^{-1}(A)=A+1,$ if $A\subseteq (0,1);$
\item[(b)] $T_1^{-1}(B)=T^{-1}(B-1),$ if $B\subseteq (1,2);$
\item[(c)] $T_1^{-1}(C)=\left(\left[C\cap(0,1)\right]+1\right) \cup T^{-1}\left(\left[C\cap (1,2)\right]-1\right).$
\end{itemize}
\end{lemma}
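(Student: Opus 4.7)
The plan is to proceed case by case, exploiting the piecewise definition of $T_1$ given in (\ref{Def:T1}) and the fact that the two pieces have disjoint image sets in $(0,2)$.

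First, I would observe the key range property: the branch $T_1(x)=T(x)+1$ defined on $D\cap[0,1]$ takes values in $(1,2)$ (because $T(D)\subseteq[0,1]$), whereas the branch $T_1(x)=x-1$ defined on $(1,2)$ takes values in $(0,1)$. Hence the preimage of any subset of $(0,1)$ comes only from the second branch, and the preimage of any subset of $(1,2)$ comes only from the first branch.

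For part (a), with $A\subseteq(0,1)$, solving $T_1(x)\in A$ forces $x\in(1,2)$ and $x-1\in A$, giving directly $T_1^{-1}(A)=A+1$. For part (b), with $B\subseteq(1,2)$, the equation $T_1(x)\in B$ forces $x\in D\cap[0,1]$ and $T(x)+1\in B$, i.e.\ $T(x)\in B-1$, which is exactly $x\in T^{-1}(B-1)$. Part (c) follows immediately from parts (a) and (b) by writing $C=(C\cap(0,1))\cup(C\cap(1,2))$ (the single points $\{1\}$ and $\{2\}$ can be ignored since they lie outside the domains of definition of $T_1$ and its branches) and taking preimages, which commute with unions.

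There is essentially no obstacle here; the only point requiring mild care is checking that the two ranges $(0,1)$ and $(1,2)$ of the two branches are disjoint, so that the two contributions to $T_1^{-1}(C)$ in (c) do not interfere and no extra terms appear. After that observation the verification is a one-line check in each case.
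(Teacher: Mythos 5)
Your proof is correct and follows essentially the same route as the paper's: in both arguments the key observation is that the two branches of $T_1$ have disjoint ranges $(1,2)$ and $(0,1)$, so the preimage of a set inside $(0,1)$ is determined by the $x\mapsto x-1$ branch, the preimage of a set inside $(1,2)$ by the $x\mapsto T(x)+1$ branch, and (c) follows by decomposing $C$ and using that preimages commute with unions. You just spell out the range property a bit more explicitly than the paper, which instead states the equivalent facts $T_1^{-1}(A)\subset[1,2]$ and $T_1^{-1}(B)\subset[0,1]$ directly.
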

\begin{proof}
(a) Notice that $T_1^{-1}(A)\subset [1,2]$. Then, $z\in T_1^{-1}(A)$ iff $T_1(z)=z-1\in A$, and from here it is easily seen that $T_1^{-1}(A)=A+1$. 

(b) In this case, $T_1^{-1}(B)\subset [0,1]$, and $z\in T_1^{-1}(B)$ iff $T_1(z)=T(z)+1\in B$, and we derive $T_1^{-1}(B)=T^{-1}(B-1).$

(c) It is an immediate consequence of (a)-(b) and the fact that $T_1^{-1}$ preserves the union of subsets.
\end{proof}

Following with the same strategy developed for $T_1$, we present the following result about the preimages of $T_2^{-1}$. 
\begin{lemma}\label{L:inversasT2} It holds:
\begin{itemize}
\item[(a)] $T_2^{-1}(A)=3-A,$ if $A\subseteq (0,1);$
\item[(b)] $T_2^{-1}(B)=T^{-1}(B-1),$ if $B\subseteq (1,2);$
\item[(c)] $T_2^{-1}(C)=4-C$, if $C\subseteq (2,3).$
\end{itemize}
\end{lemma}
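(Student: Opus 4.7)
The plan is to proceed by a direct case analysis based on the three-piece definition of $T_2$ in (\ref{Def:T1}). The key observation is that $T_2$ maps the three intervals of its domain to three disjoint subintervals of $[0,3]$: namely, $T_2$ sends $D\cap(0,1)$ into $(1,2)$, the piece $(1,2)$ into $(2,3)$ via the decreasing affine map $x\mapsto -x+4$, and the piece $(2,3)$ into $(0,1)$ via the decreasing affine map $x\mapsto -x+3$. So, depending on which subinterval of $(0,3)$ the target set lies in, the preimage can only come from a single one of the three pieces.

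For part (a), since $A\subseteq (0,1)$, the only piece of $T_2$ landing in $(0,1)$ is the third one, $T_2(x)=-x+3$ on $(2,3)$. Solving $-x+3=y$ for $y\in A$ yields $x=3-y$, so $T_2^{-1}(A)=3-A$. For part (b), since $B\subseteq (1,2)$, the only piece of $T_2$ landing in $(1,2)$ is the first one, $T_2(x)=T(x)+1$ on $D\cap(0,1)$; the condition $T(x)+1\in B$ is equivalent to $T(x)\in B-1$, giving $T_2^{-1}(B)=T^{-1}(B-1)$. For part (c), since $C\subseteq (2,3)$, the relevant piece is $T_2(x)=-x+4$ on $(1,2)$; solving $-x+4=y$ for $y\in C$ yields $x=4-y$, so $T_2^{-1}(C)=4-C$.

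There is no real obstacle: the proof is a routine verification, essentially identical in flavor to Lemma~\ref{L:inversasT1}, with the only extra care needed being the two orientation reversals on the pieces $(1,2)$ and $(2,3)$, which produce the reflections $3-A$ and $4-C$ rather than simple translations. If desired, one can also emphasize that $T_2$ is injective on the union of its three pieces (a basic requirement for being an IET), which guarantees that each of the three preimage computations is complete and no cross-contributions from other pieces occur.
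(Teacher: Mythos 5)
Your proof is correct and follows essentially the same approach as the paper: you identify which of the three pieces of $T_2$ lands in the target subinterval, then invert the corresponding affine formula, invoking injectivity of $T_2$ to rule out contributions from other pieces. The only cosmetic difference is that you frame it as "which piece maps into the target" while the paper states "where the preimage must live," but these are the same observation.
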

\begin{proof}
(a) Notice that $T_2^{-1}(A)\subset [2,3].$ Moreover, if $a\in A$, we find $T_2(3-a)=-(3-a)+3=a,$ and the injectivity of $T_2$ finishes this case.

(b) Now, $T_2^{-1}(B)\subset [0,1].$ Since $z\in T_2^{-1}(B)$ iff $T_2(z)=T(z)+1\in B,$ we easily deduce the property on preimages when $B\subset (1,2).$

(c) Realize that $T_2^{-1}(C)\subset [1,2]$ and take into account that $T_2(4-c)=-(4-c)+4=c$.
\end{proof}

Let $\mu$ be an invariant probability measure associated to $T$. We define a new measure $\mu_1$ on the Borelians of $(0,1)\cup (1,2)$ in this manner:
 \begin{equation}\label{Def:mu1}
\mu_1(x)=\left\{\begin{array}{ll}
\frac{1}{2}\mu(A),&\textrm{ if }A \subseteq (0,1),\\
\frac{1}{2}\mu(A-1),&\textrm{ if }A \subseteq (1,2),\\
\frac{1}{2}\mu(C\cap(0,1))+\frac{1}{2}\mu\left( \left[C\cap(1,2)\right]-1\right),&\textrm{ in\, the\, general\,case.}
\end{array}\right.
\end{equation}

\begin{proposition}\label{P:paraT1}
If $\mu$ is an invariant  probability measure for $T$, then $\mu_1$  is so for $T_1$. 
\end{proposition}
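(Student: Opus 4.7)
The plan is to verify $T_1$-invariance of $\mu_1$ directly from the definitions, using Lemma~\ref{L:inversasT1} to rewrite preimages and the $T$-invariance of $\mu$ in the single nontrivial place where it is needed. First I would check that $\mu_1$ really is a probability measure on the Borelians of $(0,1)\cup(1,2)$: writing $(0,1)\cup(1,2)$ as the disjoint union of $(0,1)$ and $(1,2)$ and applying the two local formulas for $\mu_1$ in~(\ref{Def:mu1}) gives $\mu_1((0,1)\cup(1,2))=\tfrac12\mu((0,1))+\tfrac12\mu((0,1))=1$. Countable additivity is inherited from that of $\mu$ via the same decomposition.

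Next I would establish $\mu_1(T_1^{-1}(A))=\mu_1(A)$ in the two ``pure'' cases. If $A\subseteq(0,1)$, Lemma~\ref{L:inversasT1}(a) gives $T_1^{-1}(A)=A+1\subseteq(1,2)$, so by the second clause of~(\ref{Def:mu1}) we have $\mu_1(T_1^{-1}(A))=\tfrac12\mu((A+1)-1)=\tfrac12\mu(A)=\mu_1(A)$; note this case does not use invariance of $\mu$. If $B\subseteq(1,2)$, Lemma~\ref{L:inversasT1}(b) gives $T_1^{-1}(B)=T^{-1}(B-1)\subseteq(0,1)$, and then
\[
\mu_1(T_1^{-1}(B))=\tfrac12\mu(T^{-1}(B-1))=\tfrac12\mu(B-1)=\mu_1(B),
\]
where the middle equality is precisely the $T$-invariance of $\mu$.

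Finally, for a general Borel set $C\subseteq(0,1)\cup(1,2)$, I would split $C=[C\cap(0,1)]\sqcup[C\cap(1,2)]$ and apply Lemma~\ref{L:inversasT1}(c), which expresses $T_1^{-1}(C)$ as the disjoint union $\bigl([C\cap(0,1)]+1\bigr)\cup T^{-1}\bigl([C\cap(1,2)]-1\bigr)$, with the first piece in $(1,2)$ and the second in $(0,1)$. Disjointness plus the two pure cases already handled yields
\[
\mu_1(T_1^{-1}(C))=\tfrac12\mu(C\cap(0,1))+\tfrac12\mu\bigl([C\cap(1,2)]-1\bigr)=\mu_1(C),
\]
which is the desired invariance. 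There is really no ``hard step'' here: the only substantive ingredient is the $T$-invariance of $\mu$ used in the case $B\subseteq(1,2)$; the rest is a bookkeeping argument combining the piecewise definition of $\mu_1$ with the piecewise description of $T_1^{-1}$ from Lemma~\ref{L:inversasT1}. The main thing to be careful about is keeping track of which subinterval each piece lies in so that the correct clause of~(\ref{Def:mu1}) is applied, and ensuring the two pieces of $T_1^{-1}(C)$ are disjoint so additivity can be used.
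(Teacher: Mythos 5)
Your proposal is correct and follows essentially the same route as the paper's proof: both use Lemma~\ref{L:inversasT1} to rewrite $T_1^{-1}(C)$ as a disjoint union, apply additivity, and invoke the $T$-invariance of $\mu$ on the $(1,2)$-piece. The only cosmetic difference is that you treat the two pure cases first and then combine, whereas the paper runs the computation directly on a general $C$; the content is identical.
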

\begin{proof}
It is a simple matter to check that $\mu_1$ is a probability measure on the Borel sets of $(0,1)\cup (1,2)$. 
We now prove that $\mu_1$ is invariant for $T_1$, that is, $\mu_1(T_1^{-1}(C))=\mu_1(C)$ for all $C.$ By Lemma~\ref{L:inversasT1}, 
\begin{eqnarray*}
\mu_1(T_1^{-1}(C))&=&\mu_1(\left(\left[C\cap (0,1)\right]+1\right) \cup T^{-1}\left(\left[C\cap (1,2)\right]-1\right))\\
&=&\mu_1(\left(\left[C\cap (0,1)\right]+1\right))+ \mu_1(T^{-1}\left(\left[C\cap (1,2)\right]-1\right))\\
&=&\frac{1}{2}\mu(\left(\left[C\cap (0,1)\right]+1\right)-1)+\frac{1}{2}\mu(T^{-1}\left(\left[C\cap (1,2)\right]-1\right))\\
&=&\frac{1}{2}\mu(C\cap (0,1))+\frac{1}{2}\mu(\left(C\cap(1,2)\right)-1)=\mu_1(C),
\end{eqnarray*}
where in the last line we use the invariance of $\mu$ and the definition of $\mu_1$. 
\end{proof}

In the next step, we will associate to $T_2(x)$ a new measure $\mu_2$. We define: 
\begin{equation}\label{Def:mu2}
\mu_2(x)=\left\{\begin{array}{ll}
\frac{1}{3}\mu(A),&\textrm{ if } A \subseteq (0,1),\\
\frac{1}{3}\mu(A-1),&\textrm{ if } A \subseteq (1,2),\\
\frac{1}{3}\mu(3-A),&\textrm{ if } A \subseteq (2,3),\\
\frac{1}{3}\mu(A\cap (0,1))+\frac{1}{3}\mu\left( \left[A\cap(1,2)\right]-1\right)+\frac{1}{3}\mu(3-\left(A\cap(2,3)\right)),&\textrm{ in\, the\, general\,case.}
\end{array}\right.
\end{equation}
It is immediate to check that $\mu_2$ is a probability measure, and we left to the reader in charge of the proof. In fact:

\begin{proposition}\label{P:paraT2}
If $\mu$ is an invariant probability measure for $T$, then $\mu_2$ is so for $T_2$. 
\end{proposition}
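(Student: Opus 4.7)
The plan is to mirror the argument of Proposition~\ref{P:paraT1}, only with three pieces instead of two. First I would verify that $\mu_2$ is indeed a probability measure on the Borel sets of $(0,1)\cup(1,2)\cup(2,3)$, which is immediate from $\mu$ being a probability measure on $(0,1)$ together with the three normalization factors $\tfrac{1}{3}$ summing to $1$, and from the fact that the maps $A\mapsto A-1$ and $A\mapsto 3-A$ (resp. $A\mapsto 4-A$) are Borel isomorphisms between $(1,2)$ and $(0,1)$ (resp. between $(2,3)$ and $(1,2)$).

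For the invariance, given a Borel set $C\subset (0,1)\cup(1,2)\cup(2,3)$, I would decompose $C=C_0\cup C_1\cup C_2$ with $C_i=C\cap(i,i+1)$, and apply Lemma~\ref{L:inversasT2} componentwise to obtain
$$T_2^{-1}(C)=(3-C_0)\,\cup\, T^{-1}(C_1-1)\,\cup\, (4-C_2),$$
where the three pieces are pairwise disjoint because they live in $(2,3)$, $(0,1)$ and $(1,2)$ respectively. Using the corresponding clause of~(\ref{Def:mu2}), one evaluates $\mu_2(3-C_0)=\tfrac{1}{3}\mu(3-(3-C_0))=\tfrac{1}{3}\mu(C_0)$ since $3-C_0\subset(2,3)$, $\mu_2(T^{-1}(C_1-1))=\tfrac{1}{3}\mu(T^{-1}(C_1-1))$ since $T^{-1}(C_1-1)\subset(0,1)$, and $\mu_2(4-C_2)=\tfrac{1}{3}\mu((4-C_2)-1)=\tfrac{1}{3}\mu(3-C_2)$ since $4-C_2\subset(1,2)$. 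Invoking now the $T$-invariance of $\mu$ in the middle term gives $\tfrac{1}{3}\mu(C_1-1)$, and summing the three contributions yields exactly
$$\mu_2(T_2^{-1}(C))=\tfrac{1}{3}\mu(C_0)+\tfrac{1}{3}\mu(C_1-1)+\tfrac{1}{3}\mu(3-C_2)=\mu_2(C).$$

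There is no substantial obstacle here: the computation is pure bookkeeping, and the real content has already been absorbed into Lemma~\ref{L:inversasT2} and into the symmetric way in which $\mu_2$ was defined on the flipped pieces. The only point to double-check is that the reflection formulas compose cleanly (namely $3-(3-C_0)=C_0$ and $(4-C_2)-1=3-C_2$), so that no extra $T$-invariance is required for the two flipped branches, where $T_2$ acts by a simple reflection rather than through $T$. Once this is noted, the proof is essentially identical in spirit to Proposition~\ref{P:paraT1}, with one extra interval and one extra reflection to track.
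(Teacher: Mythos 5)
Your proof is correct and follows exactly the same route as the paper's: decompose the preimage via Lemma~\ref{L:inversasT2}, evaluate $\mu_2$ on each of the three pieces using the corresponding clause of~(\ref{Def:mu2}), invoke the $T$-invariance of $\mu$ on the middle term, and sum. The extra comments on disjointness and on the reflections composing cleanly are fine but not beyond what the paper already does implicitly.
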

\begin{proof}
We will be done by proving that $\mu_2$ is invariant, that is, $\mu_2(T_2^{-1}(X))=\mu_2(X)$ for all $X\subseteq [0,3]$. We apply the fact that $T_2^{-1}$ preserves the unions, we use Lemma~\ref{L:inversasT2}, we consider that $\mu$ is invariant for $T$,  and we take into account the definition of $\mu_2$:
\begin{eqnarray*}
\mu_2(T_2^{-1}(X))&=&\mu_2(T_2^{-1}(X\cap (0,1)) \cup T_2^{-1}(X\cap(1,2)) \cup T_2^{-1}(X\cap(2,3)))\\
&=&\mu_2(3-(X\cap (0,1)))+\mu_2(T^{-1}(\left(X\cap (1,2)\right)-1))+\mu_2(4-(X\cap (2,3)))\\
&=&\frac{1}{3}\mu(3-\left(3-(X\cap (0,1))\right)) +\frac{1}{3}\mu\left(T^{-1}(\left(X\cap (1,2)\right)-1)\right)
+\frac{1}{3}\mu\left( (4-(X\cap (2,3)))-1\right)\\
&=& \frac{1}{3}\mu(X\cap (0,1)) +\frac{1}{3}\mu\left(\left(X\cap (1,2)\right)-1\right)
+\frac{1}{3}\mu\left( 3-(X\cap (2,3))\right)=\mu_2(X).
\end{eqnarray*}
\end{proof}
\subsubsection{Proof of Corollary~\ref{C:transitivo}}

\begin{figure}[thb]
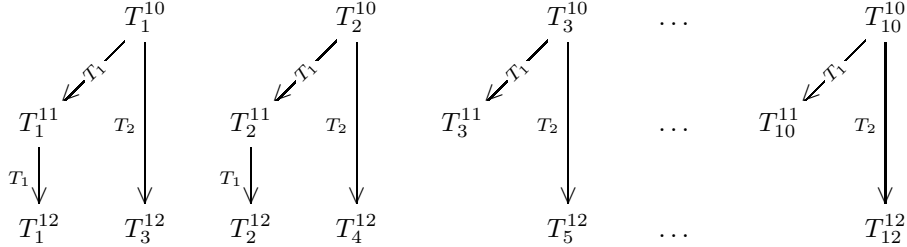

$$\begin{diagram}
&                &T^{10}_1   &&        &&T^{10}_2&  &&  &T^{10}_3&&\ldots&&&&T^{10}_{10}&&&\\
&\ldTo(2,2)~{T_1}& & &     &\ldTo(2,2)~{T_1}&&&&\ldTo(2,2)~{T_1}&&&&&&\ldTo(2,2)~{T_1}&&&&&&&&&&&\\
T^{11}_1&&\dTo^{T_2}           &&T^{11}_2&&\dTo^{T_2}&&T^{11}_3&&\dTo^{T_2}&&\ldots&&T^{11}_{10}&&\dTo^{T_2}&&&&\\
\dTo^{T_1}&&           &&\dTo^{T_1}&&&&&&&&&&&&&& &&&&&&&&&\\
T^{12}_1 &&T^{12}_3     &&T^{12}_2&&T^{12}_4&&&&T^{12}_5&&\ldots&&&&T^{12}_{12}&&&&&&&&&&&\\
\end{diagram}$$
\caption{Graph of transitive \iets\  generated by means of $T_1$ and $T_2$ \label{tree}}
\end{figure}

\begin{figure}[thb]
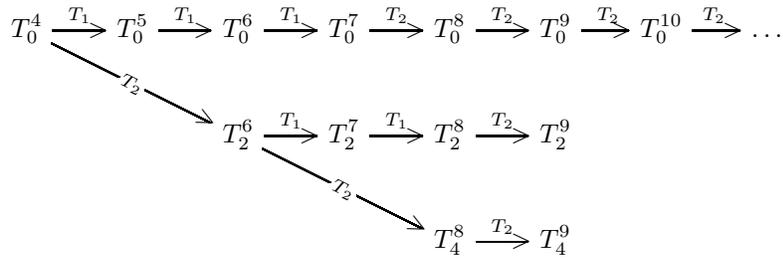

$$\begin{diagram}
T^4_0&   \rTo^{T_1} &T^{5}_0&   \rTo^{T_1} &T^6_0&  \rTo^{T_1} &T^{7}_0&   \rTo^{T_2} &T^{8}_0 &\rTo^{T_2} &T^{9}_0 &\rTo^{T_2} &T^{10}_0 &\rTo^{T_2} &\ldots \\
&\rdTo(4,2)~{T_2}&& &&  &&    & & && && &\\
&&& &T^{6}_2&   \rTo^{T_1} &T^7_2&  \rTo^{T_1} &T^{8}_2&   \rTo^{T_2} &T^{9}_2 &&& &  \\
&&& &&\rdTo(4,2)~{T_2}  &&    & & && && & \\
&&& &&   &&   &T^{8}_4&   \rTo^{T_2} &T^{9}_4 &&& &  \\
\end{diagram}$$
\caption{Transitive \iets\  generated by means of $T_1$ and $T_2$ beginning with the minimal non uniquely ergodic \iet\ by Keane \label{tree2}}
\end{figure}

We are going to prove the existence of  transitive non uniquely ergodic $(n,k)$-\iets\, for all $n\geq 10$ and $1\leq k\leq n$ if $n$ is even, and  $1\leq k\leq n-1$ if $n$ is odd. 
By Main Theorem, we get minimal non uniquely ergodic proper $(10,k)$-\iets\, for $1\leq k\leq 10$, denoted in Figure~\ref{tree} by $T^{10}_k$. In particular, these maps are transitive. By the construction of maps $T_1$ and $T_2$, see~(\ref{Def:T1}) and Figure~\ref{tree}, we can create new transitive \iets: through the application 
of $T_1$ we obtain new transitive $(11,k)$-\iets\, that we denote by $T^{11}_k$, for $1\leq k\leq 10$; and $T_2$ provides transitive $(12,k+2)$-\iets, 
$3\leq k+2\leq 12$ (we denote them as $T^{12}_{k+2}$). To obtain transitive $(12,1)$- and $(12,2)$-\iets, apply the map $T_1$ to   
$T^{11}_1$ and $T^{11}_2$, respectively. In this way, we find transitive $(12,k)$-\iets\, for $1\leq k \leq 12.$ Additionally, by Propositions~\ref{P:paraT1}-\ref{P:paraT2}, we deduce that $T^{12}_k$ are non uniquely ergodic, $1\leq k\leq 12$, otherwise we would obtain that $T^{10}_k$ would not be non uniquely ergodic, in contradiction with our choice of $T^{10}_k$. 
Repeating the procedure it is a simple task to conclude the existence of transitive non uniquely ergodic $(n,k)$-\iets\, for all $n\geq 10$ and $1\leq k\leq n$ if $n$ is even. For the case $n$ odd, following a similar procedure, we can ensure the existence of transitive non uniquely ergodic \iets\, having $k$ flips, but with the restriction on $k$ given by $1\leq k <n$.  

The second statement of the corollary  follows with the same arguments and  by using the operators $T_1$ and $T_2$  in the scheme shown in Figure~\ref{tree2}, where $T^4_0$ is the minimal oriented non uniquely ergodic  $4$-\iet\ built by Keane in \cite{keanenounicamenteergodicas}, $T^n_0$ denotes a proper oriented $n$-\iet, $n\in\N$, and $T^n_k$ a proper $(n,k)$-\iet\ for  naturals $n,k$.

$\hfill\Box$

\section{Conclusions and further directions} 
Our Main Theorem highlights the existence of minimal non uniquely ergodic proper $(10,k)$-\iets\ and as a consequence we have deduced the existence of 
several types of transitive non uniquely ergodic $(n,k)$-\iets, see Corollary~\ref{C:transitivo}. However,  we stress that we have only built examples with two independent invariant measures. 
Then we propose to analyze, in future works, some  problems.

First of all, it would be interesting to get a bound for the number, $N$, of independent invariant measures (the dimension of the cone of invariant mesures) that an $(n,k)$-\iet\ can admit when we know its associated permutation. This problem is solved in the oriented case. Indeed, if  $T$ is a  minimal oriented \iet,  then M. A. Veech shows,  in \cite[Th. 2.12]{veech},  that $N\leq \frac{n}{2}$, in fact the bound may be sharper. He proves that $N\leq \frac{R}{2}$ where $R$ is the rang of the $n\times n$ translation matrix associated to $T$ which eventually is not $n$. Also, in \cite[S. IV]{YoccozCurso}, the reader can follow an analysis in the following terms: $N$ is bounded by $g$, which is the genus of the suspension surface associated to $T$ and this genus satisfies $g=1+\frac{n-m}{2}$, where $m$ is the number of the so called marked points of the surface, see \cite[S. 5-6]{espinperaltasoler} for more details. Then we propose to give bounds, in similar terms, for the flipped case.

Secondly, we think that following the technic used in the present work, it would be interesting to construct other minimal $(n,k)$-\iets. While this seems feasible for $n\geq 11$ following the Rauzy-graphs from \cite{linerosolerergodic} by repeating some fixed vertices by either the operator $a$ or $b$, we do not know if it will be possible for $n\leq 9$. In any case it would be an attractive problem to determine the minimal integer $n$ for which there exists a minimal non uniquely ergodic $(n,k)$-\iet. In the oriented case this bound is $4$ by the mentioned work of Veech and this bound is realized by the example of Keane, \cite{keanenounicamenteergodicas}.

Finally, Corollary~\ref{C:transitivo} shows the existence of transitive non uniquely ergodic $(n,k)$-\iets, but there are some gaps in the statements. For example the cases $(2n+1,2n+1)$ for any integer $n$. We encourage to fulfill these gaps.

%


\section{Acknowledgements}
This paper has been partially supported by Grant  MTM2017-84079-P (AEI/FEDER, UE), Ministerio de Ciencias, Innovaci\'{on} y Universidades,
Spain. 

This paper has been supported by the grant number MTM2014-52920\_p  from Ministerio de Econom\'ia y 
Competitividad (Spain).


\def\cprime{$'$}

\end{document}